\RequirePackage{fix-cm}
\documentclass[smallextended,natbib]{svjour3}       
\smartqed  

\usepackage{amsmath}
\usepackage{amssymb}
\usepackage[utf8]{inputenc}
\usepackage{color}
\usepackage{algorithm}
\usepackage{algpseudocode}
\usepackage{tikz}
\usepackage{mathtools}
\usepackage{multirow}
\usepackage{comment}
\usepackage{booktabs}
\usepackage{soul}
\usepackage{graphicx}
\usepackage[dvipsnames]{xcolor}

\usepackage{newtxtext,newtxmath}
\journalname{Journal of Combinatorial Optimization}

\newtheorem{observation}[theorem]{Observation}

\allowdisplaybreaks[3]

\begin{document}

\title{Complexity and Algorithms for Unary Translocation Distance \thanks{A preliminary version of this article was accepted at International Symposium on Combinatorial Optimization (ISCO) 2026.}}


\author{Maria Constantin \and Adrian Micl\u{a}u\c{s} \and Alexandru Popa \and Andrei Popa}

\institute{
Maria Constantin \and Adrian Micl\u{a}u\c{s} \and Alexandru Popa \and Andrei Popa
\at Department of Computer Science, University of Bucharest, Str.~Academiei 14, Bucharest, 010014, Romania\\
\email{maria.petruta.constantin@drd.unibuc.ro, adrian.miclaus@fmi.unibuc.ro, alexandru.popa@fmi.unibuc.ro, andreipopa0498@gmail.com}
}

\date{Received: date / Accepted: date}

\maketitle

\begin{abstract}

Given a finite set of integers $A$, a \emph{unary translocation} produces a new set
$A' = A \cup \{u,v\}$, where $u$ and $v$ are nonnegative integers satisfying
$x+y=u+v$ for some $x,y\in A$. For an input set $A$ and a target set $B$, the
\emph{unary translocation distance} is the minimum number of unary translocations
required to obtain a superset containing $B$. In this paper, we study this problem
from both theoretical and computational perspectives. We prove that computing the unary translocation distance is strongly NP-hard,
thereby answering an open question raised by \citet{ConstantinMiclausPopa2026UnaryTranslocation}. On the positive side, we
give an exact pseudo-polynomial algorithm for every fixed constant value of
$|B|$, extending our previous results for $|B|\leq 2$. For arbitrary target
sets, we present a $2$-approximation algorithm, an additive
$(|B|-1)$-approximation algorithm, and show that the additive algorithm also
yields a $3$-approximation. We also propose parameterized algorithms, including
algorithms parameterized by the maximum value in the input set together with the
optimum distance, and by the maximum value in the target set together with
$|B|$. In addition, we propose an integer linear programming formulation that
gives an exact mathematical model for the problem, analyze its size, and show
that the LP relaxation has integrality gap at least $\frac{4}{3}$.
Finally, we report computational experiments comparing the $2$-approximation
algorithm, beam search, and simulated annealing. The results show that the
approximation algorithm is highly effective in practice and often outperforms the
heuristic baselines.

\keywords{Translocation distance\and NP-hardness\and Approximation algorithms\and ILP\and Heuristics}
\end{abstract}

\section{Introduction}

In this paper we study the following problem, which we call the \emph{unary translocation distance problem}. An instance consists of two finite sets of nonnegative integers $A,B \subseteq \mathbb{N}$. Starting from $A_0=A$, we iteratively construct sets $A_1,A_2,\dots$ by applying \emph{unary translocation operations}. More precisely, given two numbers $x,y \in A_{i-1}$, one may produce two numbers $u,v \in \mathbb{N}$ such that $x+y=u+v$, and define
$A_i = A_{i-1} \cup \{u,v\}$.
The objective is to determine the minimum number of steps $i$ for which $B \subseteq A_i$.

This numerical problem is the unary specialization of the translocation distance on strings, where the alphabet consists of a single symbol; see, e.g.,~\citep{Martin-VideM04,constantin2019some,constantin2024simulated,constantin2025exact}. In the corresponding string formulation, one starts from a set of strings $A_0=A$ and repeatedly applies \emph{translocations}: given two strings $x,y \in A_{i-1}$, a translocation swaps two prefixes of $x$ and $y$, producing two new strings $u,v$, and sets $A_i=A_{i-1}\cup\{u,v\}$. The goal is again to reach a set containing a target set $B$.

Translocations have a long history in genome rearrangement theory, where genomes are commonly modeled as permutations or signed permutations of genes and the aim is to transform one genome into another by a minimum number of rearrangement operations~\citep{kececioglu1995mice,zimaoli,hannenhalli1996polynomial,bergeron2006sorting,li2004linear,ZHU2006322,cui20081,jiang20141,pu20201,sun2025randomized}. In the classical setting, a translocation exchanges two contiguous segments belonging to different chromosomes. This area has been studied from many algorithmic perspectives: the signed case admits polynomial-time algorithms~\citep{hannenhalli1996polynomial,WangZLM05,li2004linear,bergeron2006sorting}, whereas several unsigned variants are NP-hard and have been studied extensively through approximation algorithms~\citep{kececioglu1995mice,ZHU2006322,cui20081,jiang20141,pu20201,sun2025randomized}. Translocations have also been investigated in combination with other genome rearrangement operations, such as inversions, fusions, and fissions~\citep{kececioglu1995mice,hannenhalli1995transforming,yancopoulos2005efficient}, and related models continue to attract attention~\citep{lai2025symmetric,li2024flanked}.

A key distinction between the above permutation-based literature and our setting is the treatment of duplicates. Classical genome rearrangement models typically assume that each gene appears exactly once, that is, duplicates are excluded~\citep{zeira2019genome}. Motivated by biological situations in which duplicates cannot be ignored,~\cite{Martin-VideM04} initiated the study of translocation distance in a string-based framework, where genomes are represented as finite sets of strings and a translocation swaps prefixes of two strings. In that framework, they introduced an exact algorithm for the \emph{uniform} translocation distance when the target set has size one, together with a $2$-approximation algorithm for arbitrary target sets. This direction was later extended by \cite{constantin2019some,constantin2025exact}, who gave additional exact and approximation results for contiguous variants, and by subsequent experimental work on heuristic approaches~\citep{constantin2024simulated}.

Our unary problem keeps the combinatorial structure of the string model, but focuses only on its numerical part. In particular, unlike many standard string distances such as Hamming distance or edit distance, which become essentially trivial over a unary alphabet, translocations remain nontrivial in this restriction. This makes the unary setting an interesting object of study in its own right and not merely a degenerate special case.

From a combinatorial and optimization perspective, the unary model is also closely related to addition chains and their generalizations. Classical addition chains are a central tool in fast exponentiation, while more general frameworks such as addition sequences, vectorial addition chains, and addition-subtraction chains arise naturally in multi-exponentiation and related optimization problems~\citep{gordon1998survey,downey1981computing,olivos1981vectorial,bos1989addition,de1994efficient,morain1990speeding,cohen2005handbook}. In particular, the \emph{addition sequence problem} asks for a shortest computation that simultaneously generates a prescribed set of target integers and is NP-complete~\citep{downey1981computing}. Vectorial addition chains generalize the single-target setting to multidimensional targets~\citep{olivos1981vectorial,de1994efficient}, while addition-subtraction chains enrich the available operations in order to shorten computations~\citep{morain1990speeding,gordon1998survey}. Unary translocations fit the same broad paradigm of generating new values from previously available ones by repeated binary operations. However, they also differ in a fundamental way: each operation acts on a pair of available integers and preserves their sum, and the objective is to generate an entire target set under this conservation constraint. In this sense, the unary translocation distance problem is closer to multi-target addition-sequence problems than to the classical single-target addition-chain setting~\citep{downey1981computing,olivos1981vectorial,de1994efficient}.

\section{Preliminaries}\label{sec:preliminaries_unary}

In this section we introduce the unary translocation operation on integers and the associated distance problem that we study in the rest of the paper. For a set $A \subset \mathbb{N}$, we denote by $\max A = \displaystyle  \max_{y\in A} \{y\} $.

\begin{definition}[Unary translocation]
Let $x,y, i,j \in \mathbb{N}$ be such that
$0 \le i \le x$ and $0 \le j \le y$. A unary translocation with parameters
$(i,j)$ transforms the pair $(x,y)$ into the pair $(u,v)$ given by $ u = x - i + j, v = y - j + i$
and we write $(x,y) \vdash_{i,j} (u,v)$. When it is clear from the context, or when $i$ and $j$ are not relevant, we simply write $\vdash$.
\end{definition}

This operation preserves the sum of the two numbers: $u + v = (x - i + j) + (y - j + i) = x + y$.  We define next the notion of \emph{translocation sequence}.

\begin{definition}[Translocation sequence]
A \emph{translocation sequence over $A$} is a finite sequence
$ S = \{ (x_1,y_1) \vdash (u_1,v_1), \dots, (x_n,y_n) \vdash (u_n,v_n) \}$
such that $x_i, y_i \in A_{i-1}(S)$, $\forall i$, $1 \leq i \leq n$, where sets $A_0(S), \dots, A_n(S)$ are defined as follows: $A_0(S) = A$ and $A_i(S) = A_{i - 1}(S) \cup \{u_i, v_i\}$, $\forall i$, $1 \leq i \leq n$.  When $S$ is clear from the context, we simply write $A_0, A_1, \dots, A_n$ instead of $A_0(S), A_1(S), \dots, A_n(S)$.
\end{definition}

Now we are ready to define the main problem studied in this paper.

\begin{problem}[Unary translocation distance]
Given two disjoint finite sets of nonnegative integers $A,B \subseteq \mathbb{N}$, $A\cap B = \emptyset$,
the unary translocation distance from $A$ to $B$ is $ \mathit{TD}(A,B) = \min \{|S| \mid S \text{ is a translocation sequence over } A \text{ and } B \subseteq A_{|S|}(S) \}$.
The unary translocation distance problem asks to compute $\mathit{TD}(A,B)$.
\end{problem}

In the literature on translocations there are several modelling choices
(contiguous vs.\ non-contiguous, uniform vs.\ non-uniform). In the
\emph{non-contiguous} model, produced strings can be reused, while
\emph{non-uniform} allows arbitrary lengths. In this work we adopt the
\emph{non-contiguous, non-uniform} unary model. For formal definitions in
the general string setting we refer to~\cite{constantin2025exact}.

\section{NP-hardness}


Let $G = (V, E)$ be an undirected graph without self loops, and $N = |V|$. We assume that $V = \{1, 2, \dots, N\}$ and for every $(i, j) \in E$ there is also $(j, i) \in E$. We reduce the problem of finding a Hamiltonian path from node $1$ to node $N$ to finding the unary translocation distance between two sets $A$ and $B$.

A sequence $(a_i)_{i=1}^n$ is called a $B_3$ sequence if $a_i + a_j + a_k$ are all different for all $1 \leq i \leq j \leq k \leq n$. Note that if $a$ is a $B_3$ sequence, by fixing $k = n$, we also have that $a$ is a $B_2$ sequence (also called a Sidon sequence), which means that $a_i + a_j$ are all different for any $1 \leq i \leq j \leq n$. Also, any subsequence of a $B_3$ sequence and any reordering of a $B_3$ sequence are $B_3$ sequences too. It is proved in~\citep{bose1960theorems, o2012complete, NATHANSON2022133, sergeev2024additive} that for a given $n$ there are $B_3$ sequences with numbers up to $O(n ^ 3)$.

Let $(b_i)_{i=1}^{2N}$ be a $B_3$ sequence of length $2N$ such that $b_i \leq b_{i + 1}$ for all $ 1 \leq i < 2N$. For each vertex $i \in V$ let $f_i = b_i$ and $r_i = M + b_{N + i}$, where $M = 4 \cdot (b_{2N} + 1)$. Let $A = \{e_{i, j} = f_j + r_i - f_i\ \; | \; (i, j) \in E\} \cup \{f_1\}$ and $B = \{f_i, r_i \; | \; 1 \leq i \leq N\} - \{f_1, r_N\}$.

We will show that the existence of a Hamiltonian path from $1$ to $N$ in $G$ is equivalent to $\mathit{TD}(A, B) = N - 1$.

\begin{remark}
    Since $|B - A| = 2(N - 1)$, we have $\mathit{TD}(A, B) \geq N - 1$, since at every translocation can introduce at most two new numbers, so $\mathit{TD}(A, B) = N - 1$ is equivalent to the existence of a translocation sequence $S$ of size $N - 1$.
\end{remark}
First we show the existence of a Hamiltonian cycle implies $\mathit{TD}(A, B) = N - 1$.

\begin{lemma}
    \label{lem:path-implies-td}
    If there is a Hamiltonian path $a_1, a_2,\ldots,a_N$ with $a_1 = 1$ and $a_N = N$ in $G$, then $\mathit{TD}(A, B) = N - 1$.
\end{lemma}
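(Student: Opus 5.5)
Since the Remark already gives $\mathit{TD}(A,B)\ge N-1$, it is enough to exhibit a translocation sequence $S$ of length exactly $N-1$ such that $B\subseteq A_{N-1}(S)$. We build it by walking along the Hamiltonian path. Before step $k$ (for $1\le k\le N-1$) we maintain the invariant that $f_{a_k}$ belongs to the current set: it equals $f_1\in A$ for $k=1$, and it was produced by the previous step otherwise.

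Step $k$ uses the pair $(x,y)=(f_{a_k},\,e_{a_k,a_{k+1}})$. The number $e_{a_k,a_{k+1}}$ lies in $A$ because $(a_k,a_{k+1})\in E$ is a path edge. The sum of this pair is
\[
f_{a_k}+e_{a_k,a_{k+1}} = f_{a_k}+f_{a_{k+1}}+r_{a_k}-f_{a_k} = f_{a_{k+1}}+r_{a_k}.
\]
We output $(u,v)=(r_{a_k},\,f_{a_{k+1}})$. So we must choose parameters $(i,j)$ with $0\le i\le x$, $0\le j\le y$ and $x-i+j=r_{a_k}$. We take $i=x=f_{a_k}$ and $j=r_{a_k}$. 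Then $u=r_{a_k}$ and $v=y-r_{a_k}+f_{a_k}=f_{a_{k+1}}$, as required. The only check is $j=r_{a_k}\le y=e_{a_k,a_{k+1}}=r_{a_k}+f_{a_{k+1}}-f_{a_k}$, that is, $f_{a_{k+1}}\ge f_{a_k}$. This need not hold, so this step is the one delicate point.

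If $f_{a_{k+1}}<f_{a_k}$, we swap roles and take $i=f_{a_k}-f_{a_{k+1}}$, $j=r_{a_k}-f_{a_{k+1}}$. Then $u=x-i+j=f_{a_{k+1}}+r_{a_k}-f_{a_{k+1}}=r_{a_k}$ and $v=y-j+i=f_{a_{k+1}}$. The constraints hold: $0\le i\le x$ is clear, and $j\ge 0$ because $r_{a_k}\ge M>f_{a_{k+1}}$. Also
\[
j\le y \iff r_{a_k}-f_{a_{k+1}}\le r_{a_k}+f_{a_{k+1}}-f_{a_k} \iff f_{a_k}\le 2f_{a_{k+1}},
\]
which can fail. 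A cleaner uniform choice avoids this. The operation allows any nonnegative $u,v$ with $u+v=x+y$, because we may take $i=x$ and $j=u$, provided $u\le y$. Here $u=r_{a_k}\le y$ fails exactly when $f_{a_{k+1}}<f_{a_k}$. In that case we instead take $i=x-v$ (valid since $v=f_{a_{k+1}}<f_{a_k}=x$) and $j=0$. This gives $u=x-i=v$, which is wrong, so we redo it with the roles of the pair swapped: $x=e$, $y=f_{a_k}$, and $i=x-r_{a_k}\ge 0$ because $e\ge r_{a_k}-f_{a_k}+0$.

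In general, for target $(u,v)$ with $u+v=x+y$ we set $i=\max(0,x-u)$ and $j=u-x+i=\max(u-x,0)$. Then $i\le x$ holds, and $j\le y$ holds since $u\le x+y$. So any split of the sum into nonnegative parts is reachable in one translocation. The final write-up will state this general fact once and then apply it at each step.

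After the $N-1$ steps the set contains $r_{a_1},\dots,r_{a_{N-1}}$ and $f_{a_2},\dots,f_{a_N}$. Because the path is Hamiltonian with $a_1=1$ and $a_N=N$, these are exactly $\{r_i: i\ne N\}\cup\{f_i: i\ne 1\}=B$. Hence $\mathit{TD}(A,B)\le N-1$, and combined with the Remark, equality holds.
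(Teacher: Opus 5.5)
Your proposal is correct and matches the paper's proof: walk along the Hamiltonian path, apply $(f_{a_k}, e_{a_k,a_{k+1}}) \vdash (f_{a_{k+1}}, r_{a_k})$ at step $k$, and take the lower bound from the Remark. Your extra verification that every nonnegative split of $x+y$ is reachable (take $i=\max(0,x-u)$, $j=\max(u-x,0)$) is exactly Lemma~\ref{lemma:interval}, which the paper uses implicitly; in the write-up, state it once and drop the false starts with specific $(i,j)$ choices.
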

\begin{proof}
Let $a_1, a_2, \ldots, a_N$ with $a_1 = 1$, $a_N = N$, a path in $G$ which contains all the nodes in $V$. We prove the following: there is a translocation sequence $S$ such that for any $1 \leq i \leq N$ we have $f_{a_j} \in A_{i - 1}(S)$ for all $j \leq i$ and $r_{a_j} \in A_{i - 1}(S)$ for all $j < i$.
    For $i = 1$ we already have $v_1$ in $A_0(S) = A$. To have the property hold for $i \geq 2$, we choose the $(i-1)$-th translocation in $S$ to be $(f_{a_{i - 1}}, e_{a_{i - 1}, a_i}) \vdash (f_{a_i}, r_{a_{i - 1}})$, in order for $f_{a_i}$ and $r_{a_{i - 1}}$ to be in $A_{i - 1}(S)$. The translocation is valid, since $f_{a_{i - 1}} \in A_{i - 2}(S)$, $e_{a_{i - 1},a_{i}} \in A$ because there is an edge $(a_{i - 1}, a_i) \in E$, and $f_{a_{i - 1}} + e_{a_{i - 1}, a_{i}} = f_{a_i} + r_{a_{i - 1}}$.
    \qed
\end{proof}

To prove the converse we show some intermediary results.
\begin{remark}
    In the following section we will consider a translocation $(x, y) \vdash (u, v)$ to be equivalent to $(y, x) \vdash (u, v)$, $(x, y) \vdash (v, u)$, and, implicitly, $(y, x) \vdash (v, u)$. Let $s_i = (x_i, y_i) \vdash (u_i, v_i)$ be a translocation of $S$. We prove that $s_i = (f_k, e_{k, j}) \vdash (f_j,r_k)$ for all $i \in \{ 1, 2, \ldots |S|\}$. 
\end{remark}

We begin by proving that all the values in sets $A$ and $B$ are different.

\begin{proposition}
    \label{prop:all-distinct}
    All values $f_i, r_i, e_{i,j}$ are different for $1 \leq i, j \leq N$.
\end{proposition}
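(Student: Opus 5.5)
We show that the three families occupy disjoint ranges of values, and then check distinctness within each family separately. Write $m = b_{2N}$, so every $b_k$ lies in $[0,m]$ and $M = 4(m+1)$.

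\emph{Ranges.} Each $f_i = b_i$ lies in $[0,m]$. Each $r_i = M + b_{N+i}$ lies in $[M, M+m]$. For an edge $(i,j)\in E$ we have
$$e_{i,j} = M + b_{N+i} + b_j - b_i,$$
so
$$M - m \le e_{i,j} \le M + 2m .$$
Since $M - m = 3m + 4 > m$, the values $e_{i,j}$ and $r_i$ are all strictly larger than every $f$. So $f$'s are distinct from both $r$'s and $e$'s.

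\emph{Within $f$ and within $r$.} Distinctness among the $f_i$ and among the $r_i$ holds because the $b_k$ are pairwise distinct. This follows from the $B_3$ (indeed Sidon) property: if $b_k = b_l$ with $k \ne l$, then $b_k + b_k = b_l + b_k$, a collision.

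\emph{Separating $e$ from $r$.} Suppose $e_{i,j} = r_k$. Then
$$b_{N+i} + b_j = b_{N+k} + b_i .$$
This equates two pair sums of the Sidon sequence $(b_1,\dots,b_{2N})$, so the multisets $\{N+i, j\}$ and $\{N+k, i\}$ must coincide. Since $i, j \le N < N+i, N+k$, matching the large index with the large one and the small with the small gives $N+i = N+k$ and $j = i$. But $j = i$ is impossible because $G$ has no self loops.

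\emph{Among the $e$'s.} Suppose $e_{i,j} = e_{k,l}$ with $(i,j) \ne (k,l)$. Moving the negative terms to the other side gives
$$b_{N+i} + b_j + b_k = b_{N+k} + b_l + b_i .$$
Both sides are sums of three terms, so the $B_3$ property forces the index multisets $\{N+i, j, k\}$ and $\{N+k, l, i\}$ to be equal. Again only one index on each side exceeds $N$, so $i = k$, and then $\{j, k\} = \{l, i\}$ together with $i = k$ gives $j = l$. This contradicts $(i,j) \ne (k,l)$.

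The main step needing care is this last one: it must be justified that $B_3$ applies to sums with repeated indices. The definition's $i \le j \le k$ allows repetition, so the argument holds. It is also worth noting that disjointness of $r$ and $e$ is what uses the absence of self loops; without that hypothesis $e_{i,i} = r_i$.
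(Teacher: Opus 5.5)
Your proof is correct and takes essentially the same approach as the paper. Both arguments separate the $f$'s from the $r$'s and $e$'s by magnitude using $M=4(b_{2N}+1)$, then apply the Sidon/$B_3$ property to $b_{N+i}+b_j=b_{N+k}+b_i$ and to $b_{N+i}+b_j+b_k=b_{N+k}+b_l+b_i$, with the no-self-loop hypothesis excluding $e_{i,i}=r_i$. Your multiset argument in the last case (only one index exceeds $N$ on each side, so $i=k$ and then $j=l$) is slightly cleaner than the paper's case split. You are also right that the absence of self loops is needed only in the $e$-versus-$r$ comparison.
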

\begin{proof}
    Values $f_i$ are distinct among them because they form a $B_3$ sequence. Values $r_i$ are distinct because the sequence $(r_i - M)_{i=1}^N$ is a $B_3$ sequence (and, implicitly, the sequence $(r_i)_{i=1}^N$ itself). We have $f_i < M \leq r_j$ for all $1 \leq i, j \leq N$.
    We prove that $e_{k, l} \neq r_i$ for $1 \leq i, k, l \leq N$. We have $|f_k - f_l| < b_{2N} + 1$, so $e_{k, l} = r_k + f_l - f_k \geq 3(b_{2N} + 1)$, thus $e_{k, l} > f_i$.

    Assume $e_{k, l} = r_i$ for some $1 \leq i, k, l \leq N$.
    Then $r_k + f_l - f_k = r_i$ by adding $f_k$ on both sides and replacing with sequence $b$ we have $M + b_{N+k} + b_l = M + b_{N + i} + b_k$. Because $b$ is a $B_3$ sequence, we have $k = i$ and $l = k$, so there is an edge $e_{i, i}$, which is a contradiction.

    Assume $e_{i, j} = e_{k, l}$ for some $1 \leq i, j, k, l \leq N$. Then $r_i + f_j - f_i = r_k + f_l - f_k$. By adding $f_i + f_k$ on both sides and replacing with $b$ we obtain $M + b_{N + i} + b_j + b_k = M + b_{N + k} + b_l + b_i$. Because $b$ is a $B_3$ sequence, we have $i = k$ and either $j = l$, $k = i$ or $j = i$, $k = l$. The latter case, $i = j$, $k = l$ implies $e_{i, j}$ and $e_{k, j}$ are self loops, which is a contradiction. The former case gives us that $(i, j) = (k, l)$, so $e_{i, j}$ and $e_{j, k}$ denote the same edge. \qed
\end{proof}

We now prove that in an optimal translocation sequence $S$, the translocation can only denote edges between adjacent nodes.
\begin{lemma}
    \label{lem:translocation-is-edge}
    If $S = \{s_i\}$ is an optimal translocation sequence, then $s_i = (f_k, e_{k, j}) \vdash (f_j,r_k)$ for some $1 \leq k, j \leq N$ such that $(k, j) \in E$.
\end{lemma}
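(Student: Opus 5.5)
The plan is to prove this for an optimal sequence $S$ under the assumption that $|S| = N-1$, which is the only case the converse direction needs. By the Remark, $|B - A| = 2(N-1)$ and each translocation adds at most two numbers. So with $N-1$ steps, every translocation $s_i = (x_i,y_i)\vdash(u_i,v_i)$ must produce two \emph{distinct} numbers $u_i,v_i \in B$, neither of which is already in $A_{i-1}$. Consequently every available number is of type $f$, $r$ or $e$, and by Proposition~\ref{prop:all-distinct} the type of each number is unambiguous.

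\textbf{Step 1 (magnitude classes).} Recall $b_{2N} < M/4$. Then $0\le f < M/4$ and $M \le r < 5M/4$, while $e_{k,j} = r_k + f_j - f_k$ lies in $(3M/4, 3M/2)$. Hence:
\begin{itemize}
\item $f+f < M/2$;
\item $f+r$ and $f+e$ lie in $(3M/4, 7M/4)$;
\item $r+r$, $r+e$ and $e+e$ are all at least $3M/2$, and in fact $r+r \ge 2M$.
\end{itemize}
I will use these ranges together with the sum conservation $x+y=u+v$ to restrict the input types in each case.

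\textbf{Step 2 (each step outputs exactly one $f$).} Suppose $u=f_a$ and $v=f_b$. Then $x+y < M/2$, so $x=f_c$ and $y=f_d$. The Sidon property gives $\{c,d\}=\{a,b\}$, so the outputs are not new, a contradiction. Hence each step produces at most one $f$-value. The $N-1$ values $f_2,\dots,f_N$ must all be produced in $N-1$ steps, so each step produces exactly one $f$ and hence exactly one $r$. This counting trick avoids analysing $(r,r)$ outputs, which would lead to four-term sums that the $B_3$ property does not control. I expect this step to be the main obstacle.

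\textbf{Step 3 (the $f+r$ case).} Now $u=f_a$ and $v=r_b$, with $x+y = f_a + r_b \in [M, 5M/4)$. Since $r+r\ge 2M$ is too large, at least one input must be an $f$, and then the Step~1 ranges force the inputs to be $f+r$ or $f+e$.
\begin{itemize}
\item If $\{x,y\}=\{f_c,r_d\}$, then $b_c+b_{N+d}=b_a+b_{N+b}$. The Sidon property gives $c=a$ and $d=b$, so again nothing new is produced, a contradiction.
\item If $\{x,y\}=\{f_c,e_{k,j}\}$, then after adding $f_k$ to both sides we get $b_c+b_{N+k}+b_j = b_a+b_{N+b}+b_k$. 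The $B_3$ property equates the multisets $\{c,N+k,j\}$ and $\{a,N+b,k\}$. Indices above $N$ can only match each other, so $k=b$ and $\{c,j\}=\{a,k\}$.
\end{itemize}
In the second subcase, the option $j=k$ would make $e_{k,k}$ a self loop, contradicting $G$ having none. So $c=k$ and $j=a$. The step is therefore $(f_k,e_{k,a})\vdash(f_a,r_k)$, and since $e_{k,a}\in A$, we have $(k,a)\in E$. This is exactly the claimed form.
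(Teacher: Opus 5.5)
Your proof is correct and follows essentially the same route as the paper, which also tacitly assumes $|S|=N-1$ as you make explicit: two $f$-outputs are ruled out by magnitude and the Sidon property, and two $r$-outputs by counting the $N-1$ values $f_2,\dots,f_N$. In the $(f,r)$ case the inputs must be $f+r$ (producing nothing new) or $f+e$, where the $B_3$ property and the absence of self-loops force the form $(f_k,e_{k,j})\vdash(f_j,r_k)$.

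There is one small slip that does not affect the argument. In Step~3 you only have $f_a+r_b=M+b_a+b_{N+b}<3M/2$, not $<5M/4$, since $b_a+b_{N+b}$ can exceed $M/4$. This is harmless because your own Step~1 bounds give $e+e>3M/2$, $r+e>7M/4$ and $r+r\ge 2M$, so the non-$f$ input pairs are still excluded.
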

\begin{proof}
    Since $|B - A| = 2(N - 1)$, for $S$ to have size $N - 1$, every $s \in S$ must always introduce two new elements of $B$. If we write $s_i = (x_i, y_i) \vdash (u_i, v_i)$, we have $u_i, v_i \in \{f_{i'}, r_{i'} | 1 \leq {i'} \leq N \}$. We consider the following cases:

    \begin{enumerate}
        \item $u_i = f_j, v_i = f_k$: Since $f_j + f_k < 2(b_{2N} + 1)$, and $r_l, e_{l, m} \geq 3(b_{2N} + 1)$ 
        for all $1 \leq l, m \leq N$, then $x_i = f_l, y_i = f_m$ for some $1 \leq l,m \leq N$. Since $v$ is a $B_3$ sequence, the only possibility is $j = l$ and $k = m$. This is a contradiction, because translocations need to introduce new elements of $B$.
        \item $u_i = r_j, v_i = r_k$: If $S$ introduces two values $r_j$, since $B$ contains $N - 1$ values $f_j$ and $N - 1$ values $r_j$, $S$ introduces two values $f_j$ through some other translocation, which we proved at the previous point that it is not possible.
        \item $u_i = f_j, v_i = r_k$: We have $M \leq f_j + r_k < 2M$, so $x_i = f_l$ for some $1 \leq l \leq N$ and, either $y_i = r_m$ for some $1 \leq m \leq N$ or $y_i = e_{m,h}$ for some $1 \leq m,h \leq N$.
        Consider the first case, $y_i = r_m$: $f_j + r_k = f_l + r_m$ which is equivalent to $b_j + M + b_{N + k} = b_l + M + b_{N + m} $. After subtracting $M$ on both sides, since $b$ is a $B_3$ sequence, we have $j = l$ and $N + k = N + m$, which gives us $k = m$, so $s = (f_j, r_k) \vdash (f_j, r_k)$, which is not valid for an optimal $S$.
        
        Consider the other case $y_i = e_{m, h}$: $f_j + r_k = f_l + e_{m, h}$. Expanding $e_{m, h}$ gives us  $f_j + r_k = f_l + f_h + r_m - f_m$. By adding $f_m$ on both sides and replacing $f$ and $r$ with the values of $b$, we get: $b_m + b_j + M + b_{N + k} = b_l + b_h + M + b_{N + m}$. Since $b$ is a $B_3$ sequence, we have $k = m$ and, either $m = l$, $j = h$, or $m = h$, $j = l$:
        \begin{enumerate}
            \item Take $m = l$ and $j = h$, then $s_i = (f_k, e_{k, j}) \vdash (f_j, r_k)$. By Proposition~\ref{prop:all-distinct}, we have $e_{k, j}$ are different from $f_i,r_i$ for any $1 \leq i \leq N$, thus no translocation can introduce new values $e_{k, j}$. Therefore, $e_{k, j} \in A$, which implies $(k, j) \in E$.
            \item Take $m = h$, $j = l$, then $s_i = (f_j, e_{k,k}) \vdash (f_j, r_k)$. The graph does not contain edges from a node to itself, this we get a contradiction.
        \end{enumerate}
    \end{enumerate}
    \qed
\end{proof}

The following lemma states that an optimal translocation sequence describes a path in $G$.

\begin{lemma}
\label{lem:translocation-path}
 There is a sequence $(a_i)_{i = 1}^N$ such that $A_i(S) = E^* \cup \{f_{a_j} | 1 \leq j \le i + 1\} \cup \{r_{a_j} | 1 \leq j \leq i\}$ for $0 \leq i < N$, where $E^* = \{e_{i, j}| (i, j) \in E\}$. Moreover, if $s_i = (x_i, y_i) \vdash (u_i, v_i)$, then $x_i = f_{a_i}$, $u_i = f_{a_{i + 1}}$, and the sequence $(a_i)_{i = 1} ^ N$ is a path in $G$.
\end{lemma}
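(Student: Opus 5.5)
We argue by induction on $i$, working throughout with an optimal translocation sequence $S=(s_1,\dots,s_{N-1})$ of length $N-1$. By Lemma~\ref{lem:translocation-is-edge}, every step has the form $s_i=(f_k,e_{k,j})\vdash(f_j,r_k)$ with $(k,j)\in E$, and every step introduces two new elements of $B$. We orient each translocation so that $x_i=f_k$ and $u_i=f_j$.

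For the base case $i=0$, we set $a_1=1$. Then $A_0(S)=A=E^*\cup\{f_{a_1}\}$, which is exactly the claimed form.

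For the inductive step, assume $A_{i-1}(S)=E^*\cup\{f_{a_1},\dots,f_{a_i}\}\cup\{r_{a_1},\dots,r_{a_{i-1}}\}$, with $a_1,\dots,a_i$ distinct.
\begin{enumerate}
\item The step $s_i=(f_k,e_{k,j})\vdash(f_j,r_k)$ needs $f_k\in A_{i-1}(S)$. Hence $k=a_t$ for some $t\le i$.
\item Both outputs must be new, so $r_k\notin A_{i-1}(S)$. This gives $k\notin\{a_1,\dots,a_{i-1}\}$, and together with the previous point, $k=a_i$.
\item Likewise $f_j\notin A_{i-1}(S)$, so $j$ differs from every $a_t$ with $t\le i$. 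We define $a_{i+1}=j$.
\item Then $A_i(S)=A_{i-1}(S)\cup\{f_{a_{i+1}},r_{a_i}\}$, which has the required form. Also $x_i=f_{a_i}$, $u_i=f_{a_{i+1}}$, and $(a_i,a_{i+1})\in E$ because $e_{a_i,a_{i+1}}\in A$.
\end{enumerate}
These set identities rely on Proposition~\ref{prop:all-distinct}: since all values $f_\cdot$, $r_\cdot$, $e_{\cdot,\cdot}$ are pairwise distinct, a value in the set identifies its index and its type unambiguously.

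The induction continues up to $i=N-1$, producing $a_1,\dots,a_N$, which are pairwise distinct by construction. Since $V$ has exactly $N$ elements, they form a permutation of $V$, and consecutive ones are adjacent, so the sequence is a path in $G$.

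The main obstacle is the second point, pinning down $k=a_i$. Everything there depends on the "two new elements per step" argument: if $k$ were an earlier vertex $a_t$ with $t<i$, then $r_k$ would already lie in $A_{i-1}(S)$, and that step would add at most one new element. Since $|B-A|=2(N-1)$ and $S$ has only $N-1$ steps, this contradicts optimality.

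A final remark that uses only the values in $B$: $f_1\in A$ and $r_N\notin B$, so if the last vertex were not $N$, some needed $r$-value would be missing. Hence $a_N=N$ whenever $B\subseteq A_{N-1}(S)$, and the Hamiltonian path runs from $1$ to $N$.
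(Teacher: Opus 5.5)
Your proof is correct and follows essentially the same induction as the paper: Lemma~\ref{lem:translocation-is-edge} fixes the shape of each step, and requiring $r_k$ to be new while $f_k$ is already available forces $k=a_i$, after which you set $a_{i+1}=j$. The differences are minor. You carry distinctness of the $a_t$ inside the induction (using that $f_j$ must be new), and you also derive Hamiltonicity and $a_N=N$ here, which the paper defers to Lemma~\ref{lem:hamilton-path}.
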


\begin{proof}
We show the lemma by induction. For $i = 0$ we have $A_0(S) = A = E \cup \{f_1\}$. Assuming the property holds for $i - 1$, we show it holds for $i$. Let $J_1 = \{j | f_j \in A_{i - 1}(S)\}$ and $J_2 = \{j | r_j \in A_{i - 1}(S)\}$. By the induction hypothesis, we have $J_1 = J_2 \cup \{a_{i}\}$. Due to the optimality of $S$, the $i$-th translocation must take as $v_i$ a value $r_k$ not present in $A_{i - 1}(S)$, but $x_i = f_k$ must be present in $A_{i - 1}(S)$, so the only solution is $k = a_{i}$.
     Since, by Lemma~\ref{lem:translocation-is-edge}, $s_i$ has form $(f_k, e_{k, j}) \vdash (f_j, r_k)$ where there is an edge between $k$ and $j$, then $j$ must be some node which has an edge from $k$, therefore, set $a_{i + 1} = j$ (which means $u_i = f_{a_{i + 1}}$). Thus, we have $A_{i}(S) = A_{i - 1}(S) \cup \{f_{a_{i + 1}}, r_{a_i}\}$, which completes the induction. \qed
\end{proof}

\begin{lemma}
    \label{lem:hamilton-path}
    Any path $(a_i)_{i = 1}^N$ in Lemma~\ref{lem:translocation-path} is a Hamiltonian path from $1$ to $N$.
\end{lemma}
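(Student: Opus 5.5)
We need to show that the sequence $(a_i)_{i=1}^N$ produced by Lemma~\ref{lem:translocation-path} starts at $1$, ends at $N$, and visits every vertex exactly once. The strategy is to compare the final set $A_{N-1}(S)$, which Lemma~\ref{lem:translocation-path} describes explicitly, with the requirement $B \subseteq A_{N-1}(S)$, and to use the counting in the Remark ($|B-A| = 2(N-1)$, each translocation adds exactly two new elements of $B$).

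\textbf{Start vertex.} At $i=0$, Lemma~\ref{lem:translocation-path} gives $A_0(S) = E^* \cup \{f_{a_1}\}$. Since $A_0(S) = A = E^* \cup \{f_1\}$ and, by Proposition~\ref{prop:all-distinct}, no $e_{k,l}$ equals any $f_j$, we get $f_{a_1} = f_1$. The values $f_j$ are distinct, so $a_1 = 1$.

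\textbf{Distinctness.} Take $i = N-1$, so that
\[
A_{N-1}(S) = E^* \cup \{f_{a_j} \mid 1 \le j \le N\} \cup \{r_{a_j} \mid 1 \le j \le N-1\}.
\]
Since $S$ is optimal of length $N-1$, it contains $B$, so $\{f_2,\dots,f_N\} \cup \{r_1,\dots,r_{N-1}\} \subseteq A_{N-1}(S)$. The values $f$, $r$, $e$ are pairwise different by Proposition~\ref{prop:all-distinct}. Hence the $N$ indices $a_1,\dots,a_N$ must cover all $N$ vertices $\{1,\dots,N\}$, because $f_1 = f_{a_1}$ and each $f_j$ with $j \ge 2$ must appear. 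Therefore they are pairwise distinct, and $(a_i)$ is a permutation of $V$. Alternatively, distinctness follows from counting: each step must add a \emph{new} $f_{a_{i+1}}$ and a new $r_{a_i}$, which forbids repeated indices.

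\textbf{End vertex.} The indices $a_1,\dots,a_{N-1}$ must cover $\{1,\dots,N-1\}$, since every $r_j$ with $j \le N-1$ lies in $B$ and must appear among $r_{a_1},\dots,r_{a_{N-1}}$. As the $a_i$ form a permutation, the remaining index is $a_N = N$.

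\textbf{Conclusion.} By Lemma~\ref{lem:translocation-path}, consecutive $a_i, a_{i+1}$ are adjacent via the edge $e_{a_i,a_{i+1}} \in A$ used in $s_i$. Thus $(a_i)$ is a Hamiltonian path from $1$ to $N$.

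\textbf{Main obstacle.} The delicate point is justifying that the explicit form of $A_{N-1}(S)$ must contain $B$ and that the $f$, $r$, $e$ values cannot coincide. This is handled entirely by Proposition~\ref{prop:all-distinct}, together with the fact that optimality forces $|S| = N-1$ with $B \subseteq A_{N-1}(S)$.
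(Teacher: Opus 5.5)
Your proof is correct and uses essentially the same counting argument as the paper: both rest on the explicit form of $A_{N-1}(S)$ from Lemma~\ref{lem:translocation-path} and on the distinctness of values from Proposition~\ref{prop:all-distinct}. The only difference is the direction of the count. The paper obtains injectivity of $a_1,\dots,a_{N-1}$ from optimality (no $r_{a_i}$ can be produced twice), excludes $N$ from those positions via $r_N\notin B$, and then places $N$ last because $f_N\in B$; you instead obtain surjectivity from $B\subseteq A_{N-1}(S)$ and conclude by pigeonhole.
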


\begin{proof}
  First, we know that $A = E^* \cup \{f_1\}$, so $a_1 = 1$. Due to the optimality of $S$, the  value $v_i = r_{a_i}$ cannot occur twice, so all $a_i$ for $1 \leq i < N$ are unique. Finally, $r_N \notin B$, so $a_i \neq N$ for $i < N$, and, since $f_N \in B$, the value $f_N$ must be added eventually, which can happen only on the last position, i.e. $u_N = f_N$ and $a_N = N$. \qed
\end{proof}
    
\begin{theorem}
    Deciding if $\mathit{TD}(A, B) = k$ for some $k$ is strongly NP-hard. 
\end{theorem}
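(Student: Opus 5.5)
The plan is to assemble the reduction from Hamiltonian Path with fixed endpoints $1$ and $N$, which is NP-complete and involves no numbers. Given $G=(V,E)$ with $N=|V|$, we build $A$ and $B$ exactly as in this section and set $k=N-1$. Correctness is an equivalence. The forward direction is Lemma~\ref{lem:path-implies-td}: a Hamiltonian path from $1$ to $N$ gives $\mathit{TD}(A,B)=N-1$.

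For the converse, suppose $\mathit{TD}(A,B)=N-1$ and take an optimal translocation sequence $S$ of length $N-1$. By Lemma~\ref{lem:translocation-is-edge}, every step has the form $(f_k,e_{k,j})\vdash(f_j,r_k)$ with $(k,j)\in E$. Lemma~\ref{lem:translocation-path} then shows that $S$ determines a walk $(a_i)$ in $G$, and Lemma~\ref{lem:hamilton-path} shows this walk is a Hamiltonian path from $1$ to $N$. Hence $G$ has the required path if and only if $\mathit{TD}(A,B)=N-1$. By the earlier remark, $\mathit{TD}(A,B)\ge N-1$ always holds, so the question ``$\mathit{TD}(A,B)=k$?'' is the same as ``$\mathit{TD}(A,B)\le k$?''. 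This gives NP-hardness of the decision problem.

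Two points remain. First, the instance must be valid: $A\cap B=\emptyset$ follows from Proposition~\ref{prop:all-distinct}, and all values are nonnegative. Each $e_{i,j}=f_j+r_i-f_i$ is at least $M-b_{2N}>0$. Second, the reduction must be polynomial-time with polynomially bounded numbers, which is what makes the hardness \emph{strong}.

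The main obstacle is the number sizes. We use the cited result that a $B_3$ sequence of length $2N$ exists with all entries $O(N^3)$. We need such a sequence to be computable in time polynomial in $N$, not merely to exist. I would handle this in one of two ways:
\begin{itemize}
\item take an explicit algebraic construction (Bose--Chowla) over a prime power $q=\Theta(N)$, computable in polynomial time;
\item or greedily add the smallest integer that keeps the $B_3$ property, checking all $O(N^3)$ triple sums each time. The greedy bound is $O(N^5)$: each of the fewer than $(2N)^3$ triples $(a_i,a_j,a_k)$ excludes $O(1)$ candidates, so a valid candidate appears within polynomially many values.
\end{itemize}
Either way $b_{2N}$, $M$, and hence every element of $A\cup B$ is bounded by a polynomial in $N$. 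So the instance encoded in unary also has polynomial size. Since Hamiltonian Path is NP-hard even though its instances contain no numbers, the unary translocation distance problem is strongly NP-hard.
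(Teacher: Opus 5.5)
Your proposal is correct and follows the paper's proof. It uses the same Hamiltonian-path reduction with $k=N-1$: Lemma~\ref{lem:path-implies-td} gives one direction, and Lemmas~\ref{lem:translocation-is-edge}, \ref{lem:translocation-path} and~\ref{lem:hamilton-path} give the other. Your checks on instance validity and on constructing the $B_3$ sequence in polynomial time go beyond the paper, which only cites existence. One small slip in the greedy count: each existing triple sum excludes $O(N^2)$ candidates (via $c=a_k+a_l+a_m-a_i-a_j$), not $O(1)$, and that is what actually yields your $O(N^5)$ bound.
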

\begin{proof}
    Lemma~\ref{lem:path-implies-td} together with Lemmas~\ref{lem:translocation-path} and~\ref{lem:hamilton-path} show there is a reduction from the Hamiltonian path problem to finding $\mathit{TD}(A, B)$ in a class of inputs where the values in $A$ and $B$ are polynomial in the size of the graph, $N$, which is polynomial in the size of the sets $A$ and $B$. Thus, the theorem follows. \qed
\end{proof}

\section{Exact Algorithm for the Case $|B| = 1$}

We next present several simple lemmas that will be used throughout the paper. As an illustrative application, we use these lemmas to derive a polynomial-time algorithm for the case $|B| = 1$.

Let $B = \{z\}$ and $x=\max A$. The algorithm contains 2 cases: if $x > z$ then the answer is 1, otherwise if $z > x$ the answer is $\lceil \log_2 \frac{z}{x} \rceil$.  

\begin{lemma}\label{lemma:interval}
Let $p,q\in\mathbb{N}$. For every integer $t$ with $0\le t\le p+q$, there exist integers $i\in[0,p]$ and $j\in[0,q]$ such that $(p,q)\ \vdash_{i,j}\ (t,\ p+q-t)$.
\end{lemma}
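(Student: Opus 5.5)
We need $i\in[0,p]$ and $j\in[0,q]$ with $p-i+j=t$. The second coordinate then takes care of itself: by the sum-preservation identity noted after the definition, $q-j+i=p+q-t$ automatically. So the whole task is to choose $i$ and $j$ inside their ranges with $j-i=t-p$.

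I will write down explicit parameters, splitting into two cases according to the sign of $t-p$.

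\emph{Case $t\le p$.} Take $j=0$ and $i=p-t$. Since $0\le t\le p$, we have $0\le i\le p$, and $j=0\le q$ holds trivially. Then $u=p-(p-t)+0=t$.

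\emph{Case $p<t\le p+q$.} Take $i=0$ and $j=t-p$. Here $0<j\le q$ because $t\le p+q$, and $i=0\le p$ trivially. Then $u=p-0+(t-p)=t$.

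In both cases $(p,q)\vdash_{i,j}(t,p+q-t)$, by the definition of unary translocation together with sum preservation. There is no real obstacle. The only step requiring care is checking that each chosen parameter lies in its allowed range, and this is exactly where the hypotheses $0\le t$ and $t\le p+q$ are used.
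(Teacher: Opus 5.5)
Your proof is correct and is essentially the paper's argument. The paper sets $d=t-p$ and takes $(i,j)=(0,d)$ when $d\ge 0$ and $(i,j)=(-d,0)$ when $d<0$, which is your case split up to where the boundary $t=p$ falls, and then uses sum preservation for the second output exactly as you do.
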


\begin{proof}
Recall that applying a unary translocation with parameters $i$ and $j$ to $(p,q)$ produces the pair $(p-i+j,\ q-j+i)$. We want the first produced value to be equal to $t$, that is, $p-i+j=t$. Equivalently, we need $j-i=t-p$.
Let $d=t-p$. Since $0\le t\le p+q$, we have $-p\le d\le q$. If $d\geq 0$, choose $i=0$ and $j=d$. Then $0\le j\le q$, and $p-i+j=p+d=t$. If $d<0$, choose $i=-d$ and $j=0$. Then $0\le i\le p$, and $p-i+j=p+d=t$.

In both cases, the first produced value is $t$. Since unary translocations preserve the total sum, the second produced value is $p+q-t$. Therefore, $(p,q)\ \vdash_{i,j}\ (t,\ p+q-t)$. \qed
\end{proof}

\begin{corollary}\label{cor:aa-interval}
For any $a\in\mathbb{N}$ and any $t\in[0, 2 \cdot a]$, a single translocation from $(a, a)$ produces $(t, 2 \cdot a-t)$.
\end{corollary}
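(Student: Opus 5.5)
This is a direct specialization of Lemma~\ref{lemma:interval} with $p=q=a$. Given $a\in\mathbb{N}$ and $t\in[0,2\cdot a]$, we have $0\le t\le p+q=2a$, so Lemma~\ref{lemma:interval} supplies integers $i,j\in[0,a]$ with $(a,a)\vdash_{i,j}(t,\,2a-t)$.

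For concreteness, the parameters can be written down explicitly, as in the proof of Lemma~\ref{lemma:interval}. Set $d=t-a$, so that $-a\le d\le a$. If $d\ge 0$, take $(i,j)=(0,d)$; if $d<0$, take $(i,j)=(-d,0)$. In either case $0\le i,j\le a$, the first output is $a-i+j=t$, and by conservation of the sum the second output is $2a-t$.

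One point deserves a remark, since it is the only place the argument could be questioned. In a translocation sequence the pair $(x,y)$ with $x=y=a$ is drawn from a set that contains $a$ only once. The definition of a translocation sequence only requires $x_i,y_i\in A_{i-1}(S)$; it does not require them to be distinct. So using the same element $a$ twice is a legitimate step, and no further work is needed.
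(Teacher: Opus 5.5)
Your proposal is correct and follows the paper's proof exactly: the paper also obtains this by applying Lemma~\ref{lemma:interval} with $p=q=a$. The explicit choice of $(i,j)$ and your remark that a translocation may use the same available value twice are accurate; the paper leaves both implicit.
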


\begin{proof}
Apply Lemma~\ref{lemma:interval} with $p=q=a$. \qed
\end{proof}

\begin{lemma}\label{lemma:doubling}
    Let $M_k$ be the maximum available length (the maximum value that can be obtained through translocations) after $k$ translocations. Then, $M_k \leq x \cdot 2^k$, where $x = \max A$.
\end{lemma}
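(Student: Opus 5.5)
The plan is to prove by induction on $k$ the stronger statement that every element of $A_k(S)$ is at most $x\cdot 2^k$, for every translocation sequence $S$ over $A$. Here $x=\max A$ and $A_k(S)$ is the set obtained after the first $k$ translocations. Then $M_k$, the maximum over all such sets, satisfies $M_k\le x\cdot 2^k$.

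For the base case $k=0$ we have $A_0(S)=A$, so every element is at most $x=x\cdot 2^0$.

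For the inductive step, assume $\max A_{k-1}(S)\le x\cdot 2^{k-1}$. The $k$-th translocation $(x_k,y_k)\vdash(u_k,v_k)$ uses $x_k,y_k\in A_{k-1}(S)$. Translocations preserve the sum, so
\[
u_k+v_k \;=\; x_k+y_k \;\le\; 2\cdot x\cdot 2^{k-1} \;=\; x\cdot 2^k .
\]
Since $u_k,v_k\ge 0$, each of them is at most their sum, and hence at most $x\cdot 2^k$. The remaining elements of $A_k(S)=A_{k-1}(S)\cup\{u_k,v_k\}$ are bounded by $x\cdot 2^{k-1}\le x\cdot 2^k$. This closes the induction.

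There is no real obstacle here. The only points needing care are to bound every element of $A_k$, not just the newly produced ones, and to use nonnegativity of $u_k$ and $v_k$ to pass from the bound on their sum to a bound on each value. It is also worth noting that the bound is attained by repeatedly applying Corollary~\ref{cor:aa-interval} to $(a,a)$ with $t=2a$. That observation is not needed for the inequality, but together with this lemma it yields the $\lceil\log_2 (z/x)\rceil$ formula.
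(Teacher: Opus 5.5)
Your proof is correct and is essentially the paper's argument: induction on $k$, where the key step is that any newly produced value is at most the sum of two inputs, each bounded by the previous maximum. Your explicit use of sum preservation and of the nonnegativity of $u_k,v_k$ just makes rigorous the paper's assertion that the largest value obtainable after one more step is $2a$, attained by $(a,a)\vdash_{0,a}(2a,0)$.
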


\begin{proof}
    We prove the lemma by induction on the number of translocations. \\
    \noindent $P(k): M_k \leq x \cdot 2^k$, $\forall k \in \mathbb{N}$. \\
    \noindent \textbf{Base case}. $P(0)$ is true since $x$ is the largest length in $A$. \\
    \textbf{Induction step}. $P(k) \implies P(k + 1)$. Let $a$ be the largest value obtained after $k$ translocations. The largest length available after $k + 1$ steps is $2\cdot a$ obtained by translocation $(a, a) \vdash_{0, a} (2\cdot a, 0)$. Using the induction hypothesis we have $2 \cdot a \leq 2 \cdot x \cdot 2^k \leq x \cdot 2 ^ {k + 1}$. \qed
\end{proof}

\begin{corollary}\label{cor:lower}
     If $z > x$, any sequence producing $z$ requires at least $\lceil \log_2 \frac{z}{x} \rceil$.
\end{corollary}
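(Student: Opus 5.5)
This follows directly from Lemma~\ref{lemma:doubling}. Let $S$ be any translocation sequence over $A$ with $z \in A_{|S|}(S)$, and set $k = |S|$. Every value in $A_k(S)$ is either an element of $A$ or was produced by one of the first $k$ translocations, so it is at most $M_k$. In particular $z \le M_k$, and Lemma~\ref{lemma:doubling} gives $M_k \le x \cdot 2^k$. Hence $z \le x \cdot 2^k$.

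Next I use $z > x \ge 0$. First I need $x > 0$. If $x = 0$, then every element of $A$ is $0$, and since translocations preserve sums, no positive value can ever be produced. Then no sequence producing $z$ exists, and the statement holds vacuously, or with the usual convention that the minimum over an empty set is $\infty$. So assume $x > 0$. Dividing $z \le x \cdot 2^k$ by $x$ gives $2^k \ge z/x$, so $k \ge \log_2 (z/x)$. Since $k$ is an integer, $k \ge \lceil \log_2 \frac{z}{x} \rceil$.

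The only point that needs care is that Lemma~\ref{lemma:doubling} bounds the maximum \emph{available} value after $k$ steps. I will state explicitly that $M_k$ is monotone and covers every element of $A_k(S)$: new values $u,v$ satisfy $u,v \le u+v = x_i+y_i \le 2M_{i-1}$, which is exactly the induction step of that lemma. Everything else is a one-line monotonicity argument on $\log_2$.
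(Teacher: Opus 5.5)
Your proof is correct and is essentially the paper's argument. Both combine Lemma~\ref{lemma:doubling} with the fact that the number of steps is an integer: the paper argues contrapositively (if $t < \lceil \log_2 \frac{z}{x} \rceil$ then $M_t \le x\cdot 2^t < z$), while you argue directly from $z \le M_k \le x\cdot 2^k$, and your separate handling of $x=0$, where $\log_2 \frac{z}{x}$ is undefined, covers a degenerate case the paper leaves implicit.
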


\begin{proof}
    If $t < \bigl \lceil \log_2 \frac{z}{x} \bigr\rceil $ then $x \cdot 2^t < z$. By Lemma~\ref{lemma:doubling}, $M_t \leq x \cdot 2^t < z$, so $z$ is unattainable in $t$ steps. \qed
\end{proof}

\begin{lemma}\label{lemma:upper}
Suppose $z>x$ and let $t = \bigl\lceil \log_2 \tfrac{z}{x} \bigr\rceil$. Then $z$ can be produced in at most $t$ translocations.
\end{lemma}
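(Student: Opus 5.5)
We argue by repeated doubling of $x=\max A$, using Corollary~\ref{cor:aa-interval} at every step. Since $x$ is available in $A_0$, the pair $(x,x)$ may be used in the first translocation (the definition of a translocation sequence only requires $x_i,y_i\in A_{i-1}$, so the same element may be taken twice). The aim is to produce the values $x,2x,4x,\dots,2^{t-1}x$ and then, in the last step, produce $z$ itself.

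\textbf{Construction.} For $k=1,\dots,t-1$, the $k$-th translocation is $(2^{k-1}x,2^{k-1}x)\vdash_{0,\,2^{k-1}x}(2^k x,0)$. This is legal because $2^{k-1}x\in A_{k-1}$ by induction on $k$, and the parameters lie in the allowed ranges $0\le i\le 2^{k-1}x$ and $0\le j\le 2^{k-1}x$. After these $t-1$ steps we have $a:=2^{t-1}x\in A_{t-1}$. In the $t$-th step we apply Corollary~\ref{cor:aa-interval} to the pair $(a,a)$ with target value $z$. This gives $(a,a)\vdash(z,2a-z)$, provided $0\le z\le 2a=2^t x$.

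\textbf{Main check.} The only point that needs care is the bound $z\le 2^t x$. It follows from $t=\lceil\log_2(z/x)\rceil\ge\log_2(z/x)$, which gives $2^t\ge z/x$. We also need $x>0$ so that the logarithm is defined; if $x=0$, every element of $A$ is $0$ and no positive $z$ is reachable, so the statement implicitly assumes $x\ge1$. Since $z>x$, we have $t\ge1$, so the construction uses exactly $t$ translocations and the $t$-th step exists. When $t=1$ the doubling phase is empty and we apply Corollary~\ref{cor:aa-interval} directly to $(x,x)$.

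\textbf{Conclusion.} The resulting translocation sequence has length $t$ and contains $z$ in $A_t$, so $\mathit{TD}(A,\{z\})\le t$. Combined with Corollary~\ref{cor:lower}, this shows that the bound is tight.
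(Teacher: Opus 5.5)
Your proposal is correct and follows the paper's proof: $t-1$ doublings $(a,a)\vdash_{0,a}(2a,0)$ from $x$, then one application of Corollary~\ref{cor:aa-interval} to $(2^{t-1}x,2^{t-1}x)$, which works because $z\le 2^t x$. Your handling of the $x=0$ and $t=1$ cases, and your observation that only the upper bound $z\le 2a_{t-1}$ is needed, are small additions to the paper's argument.
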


\begin{proof}
Perform $t-1$ doublings starting from $a_0=x$ using $(a_i,a_i)\vdash_{0,a_i}(a_{i+1} = 2 \cdot a_i,0)$. After $t-1$ steps we have $a_{t-1}=x\cdot 2^{t-1}$. By the choice of $t$,
$a_{t-1}\le z\le 2 \cdot a_{t-1}$. By Corollary~\ref{cor:aa-interval}, one translocation from $(a_{t-1},a_{t-1})$ yields $(z,\,2 \cdot a_{t-1}-z)$, so $z$ appears within $t$ steps. \qed
\end{proof}

\begin{corollary}\label{cor:upper}
If $z>x$, then $\mathit{TD}(A,\{z\}) \le \bigl\lceil \log_2 \tfrac{z}{x} \bigr\rceil$. If $ z < x$, then $\mathit{TD}(A,\{z\}) \le 1$.
\end{corollary}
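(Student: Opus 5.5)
The corollary has two parts, and I will handle them separately, relying mostly on results already established above.

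For the case $z > x$, the bound follows directly from Lemma~\ref{lemma:upper}. That lemma shows that the single value $z$ can be produced within $t=\bigl\lceil \log_2 \tfrac{z}{x}\bigr\rceil$ translocations, starting from $x=\max A\in A$. The resulting sequence is a translocation sequence over $A$ whose final set contains $\{z\}$. By the definition of $\mathit{TD}$ as a minimum over such sequences, $\mathit{TD}(A,\{z\})\le t$.

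The case $z<x$ needs a one-step construction, which I would build from Lemma~\ref{lemma:interval} with $p=q=x$ (or equivalently Corollary~\ref{cor:aa-interval} with $a=x$).
\begin{itemize}
\item The same element $x$ may be used as both operands, as is already done in the doubling steps of Lemma~\ref{lemma:upper}.
\item Since $0\le z< x\le 2x$, the pair $(x,x)$ yields $(z,2x-z)$ in a single translocation.
\item The resulting sequence has length $1$ and its final set contains $z$, so $\mathit{TD}(A,\{z\})\le 1$.
\end{itemize}
Because $A\cap B=\emptyset$, we also have $z\ne x$, so the two cases together cover every possibility.

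I expect no real obstacle here. The only point to check is that using a single element twice as a pair $(x,x)$ is a legitimate translocation. This holds because the definition only requires $x_i,y_i\in A_{i-1}$, and the paper already relies on it in Lemma~\ref{lemma:upper}.
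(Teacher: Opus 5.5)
Your proposal is correct and follows the paper's proof: the case $z>x$ is Lemma~\ref{lemma:upper} applied directly, and the case $z<x$ applies Lemma~\ref{lemma:interval} with $(p,q)=(x,x)$ to produce $(z,2x-z)$ in one step. Your remarks that using $x$ twice is legitimate and that $z\neq x$ by disjointness are harmless additions that the paper leaves implicit.
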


\begin{proof}
The first inequality follows from Lemma~\ref{lemma:upper}. If $z < x \implies z < x + x$, apply Lemma~\ref{lemma:interval} with $(p,q)=(x,x)$ and obtain $(z,2 \cdot x -z)$ in one step. \qed
\end{proof}

From Lemma~\ref{lemma:interval} and  Corollary~\ref{cor:lower},~\ref{cor:upper} it follows the next theorem. 
\begin{theorem}\label{thm:b1}
Let $x=\max A$. Then, $\mathit{TD}(A,\{z\}) = 1$ if $z < x$ and $\bigl\lceil \log_{2}\frac{z}{x}\bigr\rceil$ if $z > x$. Moreover, $\mathit{TD}(A,\{z\})$ can be computed in $O(|A|)$ time.
\end{theorem}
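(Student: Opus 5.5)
The plan is to combine the lower and upper bounds already proved, splitting on the sign of $z - x$. The standing assumption $A \cap B = \emptyset$ gives $z \notin A$, so $z \neq x$ and $\mathit{TD}(A,\{z\}) \geq 1$: at least one translocation is needed, since $A_0 = A$ does not contain $z$.

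\emph{Case $z < x$.} Corollary~\ref{cor:upper}, which rests on Lemma~\ref{lemma:interval} applied to the pair $(x,x)$, gives $\mathit{TD}(A,\{z\}) \le 1$. Combined with the trivial lower bound this yields exactly $1$. I would also check that using the same element $x$ twice is a legal translocation. The definition only requires $x_i, y_i \in A_{i-1}$ and does not forbid $x_i = y_i$, and the paper already uses such moves in Lemma~\ref{lemma:doubling}. Note that $z$ might equal $2x - z$; this is harmless, since $z$ is still produced.

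\emph{Case $z > x$.} Corollary~\ref{cor:lower} gives $\mathit{TD}(A,\{z\}) \ge \lceil \log_2 (z/x) \rceil$, and Corollary~\ref{cor:upper} gives the matching upper bound, so equality holds.

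Two small points need care.
\begin{itemize}
\item \emph{$x = 0$.} If $x = 0$, then every element of $A$ is $0$, every produced number is $0$, and $z > 0$ is unreachable. The formula is then undefined, so I would note that the statement implicitly assumes $x \ge 1$, or treat $\mathit{TD} = \infty$ as a convention.
\item \emph{The degenerate value $t = 1$.} When $z/x \le 2$ we get $t = 1$, and the proof of Lemma~\ref{lemma:upper} performs zero doublings. It then uses $a_0 = x \le z \le 2x$ and applies Corollary~\ref{cor:aa-interval}, so it remains valid.
\end{itemize}

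\emph{Running time.} Computing $x = \max A$ takes one pass, $O(|A|)$. After that, evaluating the comparison and $\lceil \log_2 (z/x) \rceil$ takes constant time in the word-RAM model: it is the least $t$ with $x 2^t \ge z$, which can be read off from the bit lengths of $z$ and $x$ plus one comparison. If one insists on bit complexity, this step costs $O(\log z)$ repeated doublings, which I would mention.

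The only delicate step is the lower bound, and it is already handled by Lemma~\ref{lemma:doubling}. So the main obstacle is merely careful bookkeeping of the edge cases above, together with the reliance on $A \cap B = \emptyset$ to exclude $z = x$.
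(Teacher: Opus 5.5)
Your proposal is correct and follows the paper's argument: the paper also obtains the theorem by combining Lemma~\ref{lemma:interval} with Corollaries~\ref{cor:lower} and~\ref{cor:upper}. Your extra remarks are handled correctly and make explicit points the paper leaves implicit: the lower bound $\mathit{TD}(A,\{z\})\ge 1$ from $A\cap B=\emptyset$ (which also rules out $z=x$), the undefined case $x=0$, and the zero-doubling case $\lceil\log_2\frac{z}{x}\rceil=1$.
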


\section{Approximation Algorithms}\label{sec:approx}

\subsection{2-approximation}

The 2-approximation algorithm has two steps. First, we design a greedy algorithm that is optimal in a simplified model where each
translocation has exactly one designated output (see Definition~\ref{def:1-translocation} of the \emph{1-translocation model}).  
In this algorithm, we sort the targets $B = \{b_1, b_2, \dots, b_r\}$ increasingly and process them one by one.
We maintain a current maximum available length $x$, initially $x = \max A$.
For each target $b_i$:

\begin{enumerate}
    \item if $x > b_i$, we perform one step to produce $b_i$ from $(x,x)$;
      in this case the maximum remains $x$.
    \item if $x \le b_i$, we start with $a = x$ and repeatedly double it ($(a, a)\vdash(2a, 0)$) until we are close to $b_i (b_i \leq 2 \cdot a)$. Then, we produce $b_i$ from $(a, a)$ by adjusting the output values (instead of another doubling) and update maximum ($x = b_i$).
\end{enumerate}

We show that this greedy strategy is optimal in the 1-translocation model.
Then, using Theorem~\ref{thm:aprox}, we obtain a 2-approximation for the original model, where both outputs of each translocation are counted. The algorithm is formally presented in Algorithm~\ref{alg:two-approx}.

\begin{definition}[1-translocation]
\label{def:1-translocation}
In the unary setting, a \emph{1-translocation} is a translocation in which we only
designate a single output length. Formally, given a translocation
$(x,y)\vdash_{i,j}(u,v),$
a 1-translocation chooses one of $u$ or $v$ as its output and is written $(x,y)\vdash^{1} w$ with $w\in\{u,v\}$.
\end{definition}

We first relate the cost in the original model to the cost in
the 1-translocation model.

\begin{theorem}\label{thm:aprox}
    Let $OPT$ be the minimum number of translocations in any sequence that
    produces the desired targets, and let $OPT_1$ be the minimum number of
    1-translocations in any 1-translocation sequence that produces the same targets.
    Then $ OPT \;\ge\; \frac{OPT_1}{2}.$
\end{theorem}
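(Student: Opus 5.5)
The plan is to simulate an optimal translocation sequence by a 1-translocation sequence that is at most twice as long. This gives $OPT_1 \le 2\cdot OPT$, which is exactly the claim.

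Let $S = \{(x_1,y_1)\vdash(u_1,v_1),\dots,(x_n,y_n)\vdash(u_n,v_n)\}$ be an optimal translocation sequence over $A$ with $n = OPT$ and $B\subseteq A_n(S)$. We build a 1-translocation sequence $S'$ by replacing each step $s_i$ with the two 1-translocations $(x_i,y_i)\vdash^{1} u_i$ and $(x_i,y_i)\vdash^{1} v_i$, in this order. Both are legitimate 1-translocations: each comes from the same underlying translocation $(x_i,y_i)\vdash_{i,j}(u_i,v_i)$, with one of its two outputs designated.

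Validity of $S'$ is the only point requiring care, and it follows by induction on $i$. Define $A'_k$ analogously to $A_k(S)$, adding only the designated output at each 1-translocation. The invariant is that after processing the first $i$ original steps, the current set of $S'$ equals $A_i(S)$.
\begin{itemize}
\item The base case $i=0$ is trivial, since both sets equal $A$.
\item For the inductive step, the inputs $x_i,y_i$ lie in $A_{i-1}(S)$, which by hypothesis is the current set of $S'$. So both 1-translocations are applicable, and the second one remains applicable because sets only grow. After the two steps the set becomes $A_{i-1}(S)\cup\{u_i,v_i\}=A_i(S)$.
\end{itemize}
Hence after $2n$ 1-translocations the set contains $B$. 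Therefore $OPT_1\le 2n = 2\cdot OPT$, i.e.\ $OPT\ge OPT_1/2$.

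The main (minor) obstacle is the model definition. The 1-translocation model must let a step use only lengths produced as designated outputs of earlier steps (or lengths in $A$), so the simulation has to add both outputs explicitly. That is why the factor $2$ appears rather than $1$. If $u_i=v_i$, or one of them already exists, the second 1-translocation can simply be dropped, which only shortens $S'$.
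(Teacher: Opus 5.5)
Your proof is correct and matches the paper's argument. The paper also replaces each translocation $(x_i,y_i)\vdash(u_i,v_i)$ of an optimal sequence by the two 1-translocations designating $u_i$ and $v_i$, giving a valid 1-translocation sequence of length $2\cdot OPT$; it phrases this as a contradiction with the optimality of $OPT_1$ rather than directly, and your inductive invariant (the current set equals $A_i(S)$) simply spells out the validity step the paper states briefly.
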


\begin{proof}
Let $S = \{s_1, s_2, \dots, s_{OPT}\}$ be an optimal sequence of  translocations,
where each $s_i = (x_i, y_i) \vdash (u_i, v_i), \quad 1 \leq i \leq OPT$,
is a translocation, and let
$S_1 = \{s^1_1, s^1_2, \dots ,s^1_{OPT_1}\}$
be an optimal sequence of 1-translocations, so $|S| = OPT$ and $|S_1| = OPT_1$. Suppose, for the sake of contradiction, that $OPT < \frac{OPT_1}{2}$.
From the translocation solution $S$ we construct a 1-translocation
solution $S_2$ as follows. For each $i=1,\dots,OPT$, replace $s_i = (x_i, y_i) \vdash (u_i, v_i)$
by two 1-translocations
$
    s^2_{2i} = (x_i, y_i) \vdash^{1} u_i
    \quad\text{and}\quad
    s^2_{2i+1} = (x_i, y_i) \vdash^{1} v_i.
$
This yields a valid 1-translocation sequence $S_2 = \{s^2_1, s^2_2, \dots, s^2_{2\cdot OPT}\},$
because each translocation in $S$ produces both $u_i$ and $v_i$, and in the 1-translocation model we are simply designating these two outputs separately in two steps.
By construction, $|S_2| = 2\cdot OPT < OPT_1 = |S_1|$, which contradicts the optimality of $S_1$ in the 1-translocation model. Therefore our assumption
was false and we must have $OPT \ge \frac{OPT_1}{2}$. \qed
\end{proof}

\begin{algorithm}[thb]
\caption{Compute a 2-approximation to $\mathit{TD}(A,B)$}
\label{alg:two-approx}
\begin{algorithmic}[1]
\State $x \gets \max A$ 
\State sort $B$ increasingly as $b_1\le\cdots\le b_r$
\State $s \gets 0$ \Comment{number of 1-translocations}
    \For{$i=1$ \textbf{to} $r$}
        \If{$x > b_i$}
            \State use one 1-translocation from $(x,x)$ to output $b_i$ \Comment{use Lemma~\ref{lemma:interval} with $(p,q)=(x,x)$ ($b_i < x$ and $x \in \mathbb{N} \implies b_i < x + x $)}
            \State $s \gets s + 1$
            \Comment{$x$ remains the maximum}
        \Else
            \State $d \gets \left\lceil \log_{2} \frac{b_i}{x} \right\rceil$
            \State perform $d-1$ doublings $(a,a)\vdash^{1} 2a$ starting from $a=x$ to reach $a=x\cdot 2^{d-1}$
            \State from $(a,a)$ use Lemma~\ref{lemma:interval} to output $b_i$ in one 1-translocation
            \State $s \gets s + d$
            \State $x \gets b_i$ \Comment{new maximum is at least $b_i$}
        \EndIf
    \EndFor
\State \Return{$s$}
\end{algorithmic}
\end{algorithm}

We now analyze optimality in the 1-translocation model.

\begin{lemma}
There exists an optimal sequence of 1-translocations that produces the targets
in increasing order, i.e., $b_1 \leq b_2 \leq \dots \leq b_r$.
\end{lemma}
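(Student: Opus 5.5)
We will use an exchange argument on an optimal 1-translocation sequence. Let $S_1=(s^1_1,\dots,s^1_m)$ be an optimal 1-translocation sequence, with $m=OPT_1$, and let $w_t$ denote the output of $s^1_t$. Since we want to minimize the number of steps, we may assume every output is either a target or is used later as an input; otherwise we delete the step and contradict optimality. The quantity that drives everything is the \emph{running maximum} $X_t=\max A_t$. Lemma~\ref{lemma:interval} shows that once some value $X$ is available, any target $b\le 2X$ can be obtained in one 1-translocation from $(X,X)$. The relative order of targets below the current maximum therefore does not matter, and only the ``record-breaking'' steps are constrained.

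The key steps are as follows. First, we split the steps of $S_1$ into two kinds. A \emph{growth} step is one whose output is used as an input later, or is a target exceeding the current maximum. A \emph{terminal} step is one whose output is a target $b\le 2X_{t-1}$ that is never reused. Second, we note that the final value of $\max A_m$ is at least $b_r$, and we look at the growth steps. These form a chain of values, and by Lemma~\ref{lemma:doubling} each such step can at most double the maximum. Third, we build a new sequence $S'$ in two parts. We keep the growth steps in their original relative order; this is valid because their inputs only come from earlier growth steps or from $A$. We then insert each terminal step for a target $b$ right after the first growth step after which the current maximum is at least $b/2$, producing $b$ from $(X,X)$ via Lemma~\ref{lemma:interval}.

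The central claim is that after this rearrangement the targets appear in increasing order. Targets produced by growth steps already appear with increasing maxima. Each terminal target is placed at the earliest point where its value becomes reachable. The number of steps does not change, because we only move steps and replace inputs, and every step is still valid since the maximum at the insertion point is at least $b/2$. Some targets may become reachable at the same step; those we simply sort by value inside that block.

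The main obstacle is that the growth chain itself may produce a target out of order. For example, a growth step may output a large target $b_j$ before a smaller target $b_i$ has been produced, with $b_i$ coming from an earlier, smaller maximum. To handle this, we would prove that any $b_i<b_j$ can be produced at the position where $b_j$ is produced. In that position we replace the step so that it outputs $b_i$ first, using the same inputs (justified by Lemma~\ref{lemma:interval}, since $b_i<b_j\le$ the sum of those inputs), and we move $b_j$ to the point of $b_i$'s original step, or later. The difficulty is that the output $b_j$ may be needed as an input in between. In that case we would instead keep the order and argue by counting, as in Corollary~\ref{cor:lower}: the number of steps needed to reach each record value is at least $\lceil\log_2\rceil$ of its ratio to the previous record. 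Then a greedy reordering that produces targets in increasing order, which is exactly the schedule of Algorithm~\ref{alg:two-approx}, uses no more steps. This reduces the lemma to the telescoping inequality $\sum_i \lceil\log_2 (y_{i}/y_{i-1})\rceil \ge$ the greedy cost on the sorted records.
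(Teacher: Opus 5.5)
Your proposal has a genuine gap: the reordering it promises is never established. First, the growth/terminal construction does not sort the targets even when the growth chain is itself in order, because growth steps may output small targets that are reused. Take $A=\{10\}$ and the optimal sequence $(10,10)\vdash^{1}1$, $(10,10)\vdash^{1}5$, $(10,10)\vdash^{1}3$, $(1,3)\vdash^{1}4$. The steps producing $1$ and $3$ are growth steps. The values $5$ and $4$ are terminal and reachable from the start, so your insertion rule places them before $3$.

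Second, the swap repair fails when the larger target is reused in between, as you note. Your fallback is that the sorted greedy schedule of Algorithm~\ref{alg:two-approx} is no longer than any sequence. That is the optimality of the greedy itself, which the paper derives \emph{from} this lemma, so it is the whole content and you leave it unproven. The inequality you reduce it to is also wrong as written. The sum $\sum_i\lceil\log_2(y_i/y_{i-1})\rceil$ over record values counts only steps that raise the maximum, while the greedy charges a step to every target. For $A=\{10\}$, $B=\{1,2,3\}$ the left side is $0$ and the greedy cost is $3$.

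A correct counting proof needs a disjointness idea you do not supply. With $X_{i-1}=\max\{x,b_{i-1}\}$, the open intervals $(X_{i-1},b_i)$ are pairwise disjoint and contain no targets. Any feasible sequence must place at least $\lceil\log_2(b_i/X_{i-1})\rceil-1$ record values in each of them. These non-target steps, plus one step per target, then dominate the greedy cost.

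The paper instead uses a local exchange. Among optimal sequences it takes one with fewest inversions among the produced targets. At an adjacent inversion $b_i>b_{i+1}$, with $M$ the maximum just before the two producing steps, $b_{i+1}<b_i\le 2M$, so the two outputs can be exchanged. Your concern about reuse is well founded: that exchange is immediate only when the two steps are consecutive.

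A normalization removes the concern entirely. A 1-translocation at step $t$ can output $w_t$ iff $w_t\le 2X_{t-1}$, where $X_{t-1}=\max\{x,w_1,\dots,w_{t-1}\}$, since any such value comes from $(X_{t-1},X_{t-1})$ by Lemma~\ref{lemma:interval}. So inputs never matter beyond the current maximum, and a sequence is just its list of outputs. Sorting the whole list increasingly preserves this condition. Suppose the $t$-th smallest output $w'_t$ exceeded $2\max\{x,w'_1,\dots,w'_{t-1}\}$. Then the first original position holding a value $\ge w'_t$ would be preceded only by values $\le w'_{t-1}$, so it would itself be infeasible. The sorted list has the same length and the same outputs, and it produces the targets in increasing order, which is the lemma.
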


\begin{proof}
A single $1$-translocation applied to two numbers $p$ and $q$ can produce any
$k \in \{0,1,\dots,p+q\}$. Consider an optimal sequence of $1$-translocations
that produces exactly the set $\{b_1,\dots,b_r\}$. Among all such optimal
sequences, choose one for which the list of produced targets has the minimum
number of inversions (an inversion is a pair of indices $i<j$ with $b_i>b_j$).
If there is an inversion, then there exists an adjacent one, i.e., an index
$i$ with $b_i>b_{i+1}$. Let $M$ be the largest number present immediately
before the two consecutive translocations that output $b_i$ and then $b_{i+1}$.
Since $b_i$ is produced at that time using numbers at most $M$, we have
$b_i\le 2M$, hence also $b_{i+1}\le 2M$, so $b_{i+1}$ could be produced instead
in the first of these two steps. Swapping the order of these two target outputs
yields another optimal sequence producing the same target set but with fewer
inversions, contradicting minimality. Therefore, there exists an optimal
sequence with no inversions, i.e., with $b_1 \le b_2 \le \dots \le b_r$.
\qed
\end{proof}

By the lemma above, fix an optimal $1$-translocation sequence that produces the
targets in nondecreasing order. Let $t_i$ be the index of the first
$1$-translocation whose output equals $b_i$ ($t_0=0$). Define the block
$(t_{i-1},t_i] = \{t_{i-1}+1, t_{i-1}+2, \dots, t_i\}$, and denote by $M_j$ the
maximum available length after the $j$-th $1$-translocation in this sequence.

\begin{lemma}\label{lemma:block}
There exists an optimal $1$-translocation sequence such that for every
$i=1,\dots,r$:
\begin{enumerate}
\item Every $1$-translocation in the block $(t_{i-1},t_i)$ is \emph{necessary}
to make $b_i$ producible at step $t_i$ (in the sense that removing it, while
keeping the relative order of the remaining steps in $(t_{i-1},t_i]$, makes
$b_i$ impossible at step $t_i$).
\item $M_{t_{i-1}}=\max\{x,b_{i-1}\}$ (with the convention $b_0 =0$, so
$M_{t_0}=x$).
\end{enumerate}
\end{lemma}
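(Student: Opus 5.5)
We argue by an exchange argument on a carefully chosen optimal $1$-translocation sequence. Among all optimal $1$-translocation sequences that produce the targets in nondecreasing order (these exist by the previous lemma), pick one that minimizes the potential $\Phi=\sum_{j} M_j$, the sum over all steps of the maximum available length after step $j$. Ties may be broken lexicographically on the vector $(M_1,M_2,\dots)$. We then prove the two properties by showing that any violation would give an equally short (or shorter) sequence with smaller potential, or else an outright shorter sequence, contradicting the choice.

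For property~1 we use deletion. Suppose some step $s$ in $(t_{i-1},t_i)$ can be removed while keeping the relative order of the remaining steps, and $b_i$ is still producible at step $t_i$. Then all later steps that do not depend on the output of $s$ remain valid. Steps after $t_i$ that did use the output of $s$ can be repaired by re-inserting $s$ immediately after $t_i$; this is legal because its inputs were available before $t_{i-1}+1\le s$. The length is unchanged and the targets are still produced in order. Moreover, $b_i$ is now produced at an earlier index, since the block shrinks by one. Then either the $1$-step re-inserted after $t_i$ is itself unnecessary, so deleting it strictly shortens the sequence and contradicts optimality, or the first-output index vector $(t_1,\dots,t_r)$ strictly decreases. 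We therefore refine the choice: first minimize $\sum_i t_i$, then $\Phi$. Since $t_i$ only decreases, this process terminates, and at a minimizer every step in each open block is necessary.

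For property~2 we proceed by induction on $i$. The case $i=1$ holds since $M_{t_0}=M_0=x$. For the step, note that by property~1 every intermediate step in block $i$ contributes to making $b_i$ producible. At time $t_i$ the maximum is at least $\max\{x,b_i\}$, because $b_i$ is present. We must show it is not larger. A non-target output exceeding $\max\{x,b_i\}$ that is created inside the block can be replaced by a smaller value without destroying producibility of $b_i$. By Lemma~\ref{lemma:interval}, any value between $0$ and the sum of the inputs can be output, so a value $w>b_i$ that was used only as an ingredient toward $b_i$ (or toward other ingredients) can be lowered. The sum needed downstream can only decrease, since $b_i$ is the final required sum and every $1$-output $u$ from $(p,q)$ needs only $u\le p+q$. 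Formally, we process the block backwards: each output can be capped at the total actually required by its consumers, and this total is bounded by $b_i$. Outputs at or below $\max\{x,b_{i-1}\}$ are unaffected. The same capping also handles values surviving into later blocks: they are either not needed later, or needed only as ingredients whose required size is at most the next target. This is consistent with the monotone order. Capping strictly lowers $\Phi$ unless nothing exceeds $\max\{x,b_i\}$, so by minimality $M_{t_i}=\max\{x,b_i\}$.

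The main obstacle is the capping argument in property~2. A single output may feed several later steps, possibly in later blocks. We must show that lowering it to the maximum demanded by its consumers keeps every consumer valid, and that this demand is at most $\max\{x,b_i\}$. I would handle this by noting that every consumer is a $1$-translocation whose output is bounded by the sum of its inputs, with only the demanded inequality $\text{output}\le p+q$ to preserve. Hence replacing each value by $\min(\text{value},\text{demand})$, where demand is computed backwards along the dependency DAG, keeps all steps valid. The remaining delicate point is that a step at the end of block $i$ that raises the maximum above $b_i$ is needed for some later $b_{i'}$. Property~1 forbids it from lying in block $i$ unless it is needed for $b_i$ itself, which is exactly what makes the bound hold.
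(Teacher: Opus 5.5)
Your property~1 argument is essentially the paper's: move an unnecessary step to just after $t_i$. Ranking optimal in-order sequences by $\sum_i t_i$ is a cleaner way to guarantee termination than the paper's ``repeat inside each block''.

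The gap is in property~2, where the capping argument cannot deliver $M_{t_i}\le\max\{x,b_i\}$. Take an output created in block $i$ and consumed in block $i+1$ or later. Its demand can be as large as $b_{i+1}$, or as large as a growth value used toward $b_{i+1}$. Capping at demand therefore need not bring it down to $b_i$, and your claim that capping strictly lowers $\Phi$ ``unless nothing exceeds $\max\{x,b_i\}$'' does not follow. Capping could also turn a value into a later target and break the nondecreasing order you are working in. You notice this in your last paragraph. However, saying that a step raising the maximum above $b_i$ must be ``needed for $b_i$'', ``which is exactly what makes the bound hold'', is not an argument: being needed for $b_i$ does not bound that step's output.

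The missing idea is the inequality the paper uses. Suppose a step $s\in(t_{i-1},t_i)$ outputs $w>\max\{x,b_i\}$. Then $b_i<w\le 2M_{s-1}$, so $b_i$ could already be output from $(M_{s-1},M_{s-1})$ immediately before step $s$. Hence $s$ is unnecessary, contradicting property~1. Together with your inductive hypothesis $M_{t_{i-1}}=\max\{x,b_{i-1}\}$, this gives $M_{t_i}=\max\{x,b_i\}$, and neither $\Phi$ nor capping is needed.

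In your $\sum_i t_i$ framework you can also avoid the ambiguity in ``unnecessary''. Your deletion step only repairs consumers of the output of $s$ that lie after $t_i$, not those inside $(s,t_i)$. (Also, the inputs of $s$ need not exist before $t_{i-1}+1$; re-insertion is legal only because the earlier block steps are kept.) Instead:
\begin{enumerate}
\item insert $(M_{s-1},M_{s-1})\vdash^{1} b_i$ at position $s$;
\item shift the original steps $s,\dots,t_i-1$ one place later;
\item drop the original step $t_i$.
\end{enumerate}
Every step stays valid because no available value is lost. The length and the target order are unchanged, and $t_i$ strictly decreases, contradicting the minimality of $\sum_i t_i$.
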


(1) Fix $i$. Call a step $p\in(t_{i-1},t_i)$ \emph{unnecessary (for $b_i$)} if,
after deleting $p$ and keeping the other steps in $(t_{i-1},t_i]$ in the same
relative order, $b_i$ is still producible at step $t_i$. If such a step exists,
move it immediately after $t_i$. This does not affect feasibility up to
$t_i$ (by definition of unnecessary), and it cannot harm later feasibility
because $1$-translocations do not remove inputs. Repeating this inside each
block yields an optimal sequence in which every step in $(t_{i-1},t_i)$ is
necessary.

(2) We already have $M_{t_{i-1}}\ge \max\{x,b_{i-1}\}$ because $x\in A$ and
$b_{i-1}$ is produced by time $t_{i-1}$. It remains to show
$M_{t_{i-1}}\le \max\{x,b_{i-1}\}$.

Assume for contradiction that $M_{t_{i-1}}>\max\{x,b_{i-1}\}$, and let $p\le t_{i-1}$
be the \emph{first} step at which the maximum becomes larger than
$\max\{x,b_{i-1}\}$. Let $M$ be the maximum just before step $p$, and let the
output of step $p$ be some value $w$ with $w>\max\{x,b_{i-1}\}$. Since a
$1$-translocation from two inputs at most $M$ can output only values $\le 2M$,
we have $w\le 2M$, hence $M \ge \frac{w}{2} > \frac{\max\{x,b_{i-1}\}}{2} \ge \frac{b_{i-1}}{2}$.
Therefore $b_{i-1} < 2M$, meaning that $b_{i-1}$ was already producible at step
$p$ from the pair $(M,M)$. Consequently, step $p$ is not needed in order to
eventually produce $b_{i-1}$, and in particular it is unnecessary for the block
that contains $t_{i-1}$ contradicting (1), which eliminates all such unnecessary steps in each block. Hence the assumption was false and $M_{t_{i-1}}\le \max\{x,b_{i-1}\}$. Thus, $M_{t_{i-1}} = \max\{x,b_{i-1}\}$.

We now bound the number of 1-translocations needed in each block.

\begin{lemma}
Let $\ell_i = t_i - t_{i-1}$ be the length of block $(t_{i-1},t_i]$ in an optimal
1-translocation sequence as in Lemma~\ref{lemma:block}. Then:
\begin{enumerate}
    \item For $i=1$, we have $\ell_1 \ge 1$.
    \item For every $i\ge 2$,
    $
    \ell_i \geq \Bigl\lceil \log_{2} \frac{b_i}{b_{i-1}} \Bigr\rceil.
    $
\end{enumerate}
\end{lemma}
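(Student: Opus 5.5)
The plan is to reuse the doubling argument of Lemma~\ref{lemma:doubling}, started at step $t_{i-1}$ instead of at step $0$. The starting value comes from part~(2) of Lemma~\ref{lemma:block}. We fix an optimal $1$-translocation sequence with the properties of Lemma~\ref{lemma:block}, targets in nondecreasing order, and blocks $(t_{i-1},t_i]$ as defined above.

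Part~(1) is immediate. By definition $t_i$ is the index of the first $1$-translocation whose output is $b_i$. Since $A\cap B=\emptyset$, $b_i$ is not available initially, so at least one step is needed. Since $t_1\ge 1=t_0+1$, we get $\ell_1\ge 1$.

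For part~(2), take $i\ge 2$ and set $M:=M_{t_{i-1}}$. The key inequality is
\[
M_{t_{i-1}+k}\le 2^k\, M_{t_{i-1}} \qquad (k\ge 0).
\]
It follows by induction on $k$. A $1$-translocation whose two inputs are at most the current maximum $m$ outputs a value at most $2m$, so each step at most doubles the current maximum. This is the same estimate as in Lemma~\ref{lemma:doubling}, applied to the suffix of the sequence after step $t_{i-1}$. At step $t_i=t_{i-1}+\ell_i$ the value $b_i$ is output, so $b_i\le M_{t_i}\le 2^{\ell_i} M$.

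Next we identify $M$. In the greedy setting the current maximum after producing $b_{i-1}$ is $b_{i-1}$ itself (Algorithm~\ref{alg:two-approx} sets $x\gets b_{i-1}$ in the branch that produces a new maximum). Correspondingly, part~(2) of Lemma~\ref{lemma:block} gives $M_{t_{i-1}}=\max\{x,b_{i-1}\}$, and this equals $b_{i-1}$ whenever $b_{i-1}\ge x$. Substituting $M=b_{i-1}$ gives $2^{\ell_i}\ge b_i/b_{i-1}$. Because $\ell_i$ is an integer, this yields $\ell_i\ge\lceil\log_2(b_i/b_{i-1})\rceil$. If $b_i=b_{i-1}$, the right-hand side is $0$ and the bound is trivial.

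The main obstacle is this identification of $M_{t_{i-1}}$ with $b_{i-1}$. When $x>b_{i-1}$, the only bound available is $M\le x$, which gives $\ell_i\ge\lceil\log_2(b_i/x)\rceil$. For the stated bound with denominator $b_{i-1}$, I will therefore rely on part~(2) of Lemma~\ref{lemma:block} in the regime where the blocks are charged against the greedy's updated maximum, namely $b_{i-1}\ge x$. I will verify carefully that this regime is exactly where the bound is invoked when comparing with Algorithm~\ref{alg:two-approx}.
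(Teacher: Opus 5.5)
\textbf{The obstacle you flag is real, and it cannot be overcome, because the statement is false in that case.}

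Your argument is the paper's own argument. Part~(1) is identical. For part~(2), the paper also uses the per-step doubling estimate, $b_i\le 2M_{t_i-1}\le 2^{\ell_i}M_{t_{i-1}}$, and takes the starting maximum from part~(2) of Lemma~\ref{lemma:block}.

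The paper does not resolve the obstacle. It simply asserts that Lemma~\ref{lemma:block} gives $M_{t_{i-1}}=b_{i-1}$, whereas that lemma only gives $\max\{x,b_{i-1}\}$. The stated inequality actually fails when $x>b_{i-1}$:
\begin{itemize}
\item Take $A=\{100\}$ and $B=\{1,150\}$. Each $1$-translocation outputs one value and $A\cap B=\emptyset$, so $OPT_1=2$.
\item This is realized by $(100,100)\vdash^1 1$ followed by $(100,100)\vdash^1 150$. The sequence produces the targets in increasing order and satisfies both parts of Lemma~\ref{lemma:block}: the open blocks are empty, and $M_{t_1}=100=\max\{x,b_1\}$.
\item Yet $\ell_2=1<8=\lceil\log_2(150/1)\rceil$, and every optimal sequence has $\ell_2=1$.
\end{itemize}
So no argument can close the case you deferred.

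What your proof does establish, in all cases, is the corrected bound
\[
\ell_i\ \ge\ \left\lceil\log_2\frac{b_i}{\max\{x,b_{i-1}\}}\right\rceil .
\]
That is the version you should state. With the convention $b_0=0$, it also covers the $i=1$ bound $\lceil\log_2(b_1/x)\rceil$ used later.

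Your fallback plan, to show the bound is only invoked when $b_{i-1}\ge x$, would not succeed. In the optimality proof for Algorithm~\ref{alg:two-approx}, the bound is needed at every $i\ge 2$ where $b_i$ is at least the algorithm's current maximum. That includes the case where this maximum is still $x>b_{i-1}$, as at $i=2$ in the example above.

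The restricted form is not needed, however. The algorithm's current maximum at the start of iteration $i$ is exactly $\max\{x,b_{i-1}\}$. So in that branch it spends exactly $\lceil\log_2(b_i/\max\{x,b_{i-1}\})\rceil$ steps, which matches the corrected lower bound. In the other branch, only $\ell_i\ge 1$ is required.
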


\begin{proof}
For $i=1$, at least one $1$-translocation is needed to produce $b_1\notin A$, so $\ell_1\ge 1$.
For $i\ge 2$, Lemma~\ref{lemma:block} gives $M_{t_{i-1}}=b_{i-1}$.
In $\ell_i$ steps after $t_{i-1}$, the maximum can grow by at most a factor $2^{\ell_i}$, because each step can increase the maximum by at most a factor $2$ (Lemma~\ref{lemma:doubling}).
Moreover, the final step of the block (which outputs $b_i$) can output at most twice the maximum available length just before that step, i.e., at time $t_i-1$.
That is,
$
b_i \le 2\cdot M_{t_i-1} \le 2^{\ell_i}\cdot M_{t_{i-1}} = 2^{\ell_i}\cdot b_{i-1}.
$
Therefore, $\ell_i \ge \log_2\frac{b_i}{b_{i-1}}$, and since $\ell_i$ is integral,
$
\ell_i \ge \Bigl\lceil \log_2\frac{b_i}{b_{i-1}} \Bigr\rceil.
$

\qed
\end{proof}

\begin{lemma}
Algorithm~\ref{alg:two-approx} produces $b_1,\dots,b_r$
using an optimal number of $1$-translocations. 
\end{lemma}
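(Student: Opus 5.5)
The plan is to match the cost of Algorithm~\ref{alg:two-approx} block by block against the lower bounds of the preceding lemma. Fix an optimal $1$-translocation sequence that outputs the targets in nondecreasing order and satisfies Lemma~\ref{lemma:block}. Its length is at least $\sum_{i=1}^r \ell_i$, since the blocks $(t_{i-1},t_i]$ are disjoint. I will show that the number of $1$-translocations the algorithm spends on $b_i$, call it $c_i$, satisfies $c_i \le \ell_i$ for every $i$. Summing then shows the algorithm uses at most $OPT_1$ steps. Validity of the algorithm's own sequence is immediate from Lemma~\ref{lemma:interval} and the doublings, so it also uses at least $OPT_1$, and hence exactly $OPT_1$.

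First I record the invariant of the algorithm. When it processes $b_i$, its current maximum is $\max\{x_0,b_{i-1}\}$, where $x_0=\max A$, because $x$ only changes in the else-branch and is then set to $b_i \ge$ the old $x$. By part~(2) of Lemma~\ref{lemma:block}, this equals $M_{t_{i-1}}$, the optimal sequence's maximum at the start of the block. So both sequences start block $i$ with the same maximum $m=\max\{x_0,b_{i-1}\}$.

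Next I compare costs. If $m>b_i$, the algorithm pays $c_i=1$. The optimal sequence pays $\ell_i\ge 1$, since $b_i\notin A$ and $b_i$ has not been output before its first occurrence at $t_i$. If $m\le b_i$, the algorithm pays $c_i=\lceil\log_2(b_i/m)\rceil$. For $b_i=m$ this quantity is $0$, but the pseudocode still charges $d$; I would note that the case $b_i = m$ with $m=x_0\in A$ cannot occur because $A\cap B=\emptyset$, and with $m=b_{i-1}$ it means a duplicate target, so I treat $B$ as a set and assume $b_i>m$ there. For the optimal side, I reuse the argument of the preceding lemma but with the general starting maximum $m$ rather than $b_{i-1}$. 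Each step at most doubles the maximum (Lemma~\ref{lemma:doubling}), so $b_i\le 2M_{t_i-1}\le 2^{\ell_i}m$, which gives $\ell_i\ge\lceil\log_2(b_i/m)\rceil=c_i$. This version also covers $i=1$ with $m=x_0$ and the case $b_{i-1}<x_0$, which the stated lemma handles only for $i\ge2$ and with $b_{i-1}$.

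The main obstacle is the legitimacy of using part~(2) of Lemma~\ref{lemma:block}, namely that the optimal maximum at $t_{i-1}$ is not larger than the algorithm's. This is the exchange argument already supplied, so I would simply invoke it. I would also check that the chosen optimal sequence simultaneously has sorted outputs and the block properties; the construction in Lemma~\ref{lemma:block} only moves steps later without reordering target outputs, so this holds. With these in hand, the summation $\sum_i c_i\le\sum_i\ell_i\le OPT_1$ finishes the proof.
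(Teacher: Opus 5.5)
Your proposal is correct and follows the same approach as the paper: it compares the algorithm's cost on each $b_i$ with the length of block $(t_{i-1},t_i]$ in a sorted optimal sequence satisfying Lemma~\ref{lemma:block}, relying on that lemma's part~(2) exactly as the paper does. The one difference works in your favour: using the common starting maximum $m=\max\{\max A,b_{i-1}\}$ gives the valid bound $\ell_i\ge\lceil\log_2(b_i/m)\rceil$, whereas the paper compares against $\lceil\log_2(b_i/b_{i-1})\rceil$, which fails when $b_{i-1}<\max A$ (e.g.\ $A=\{10\}$, $B=\{1,15\}$, where $\ell_2=1<4$).
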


\begin{proof}
Consider the optimal $1$-translocation sequence from Lemma~\ref{lemma:block}, and
think of it as divided into blocks that first produce $b_1$, then $b_2$, and so on.
We compare Algorithm~\ref{alg:two-approx} to this sequence. For $b_1$:
\begin{enumerate}
    \item If $b_1 < x$, the optimal sequence must use at least one $1$-translocation to introduce $b_1$ (since $b_1\notin A$). Algorithm~\ref{alg:two-approx} also uses exactly one step to output $b_1$ from $(x,x)$, so it is optimal in this case.

    \item If $b_1 \ge x$, the optimal sequence starts from maximum $x$ (Lemma~\ref{lemma:block})
     and must reach $b_1$ by a sequence of $1$-translocations. By the same doubling argument as in the case $|B| = 1$, at least
    $d_1 = \bigl\lceil \log_2 \tfrac{b_1}{x} \bigr\rceil$ steps are needed, and
  Algorithm~\ref{alg:two-approx} uses exactly $d_1$ steps for $b_1$.
\end{enumerate}

For $i\ge 2$, the optimal sequence of Lemma~\ref{lemma:block} starts block $i$ with
maximum $b_{i-1}$ and needs at least
$
\Bigl\lceil \log_2 \frac{b_i}{b_{i-1}} \Bigr\rceil
$
steps to produce $b_i$.

Algorithm~\ref{alg:two-approx} maintains the invariant that $x$ is the current
maximum available length and $x \ge b_{i-1}$ at the start of the iteration for $b_i$.
Indeed, after processing $b_{i-1}$ we either had $x>b_{i-1}$ and produced $b_{i-1}$
in one step without changing $x$, or we had $x\le b_{i-1}$ and repeatedly doubled
$x$ until reaching $b_{i-1}$ (ending with $x=b_{i-1}$). Since $x$ is never decreased,
the invariant follows. Thus:
\begin{enumerate}
    \item If $x > b_i$, then any sequence must perform at least one step to output $b_i$, and the algorithm uses exactly one step.
    \item If $x \le b_i$, the algorithm uses
    $
    d_i = \Bigl\lceil \log_2 \frac{b_i}{x} \Bigr\rceil
    \le
    \Bigl\lceil \log_2 \frac{b_i}{b_{i-1}} \Bigr\rceil
    $
    steps to reach $b_i$, which is no more than the blockwise lower bound for any
    optimal sequence starting from $b_{i-1}$.
\end{enumerate}

Summing over all $i$, Algorithm~\ref{alg:two-approx} uses no more $1$-translocations
than this optimal sequence. Since that sequence is optimal by definition, the
algorithm cannot use strictly fewer steps. Hence it uses exactly the optimal number
$OPT_1$ of $1$-translocations.

\qed
\end{proof}

\begin{theorem}\label{thm:unary-2approx}
Algorithm~\ref{alg:two-approx} is a $2$-approximation for $\mathit{TD}(A, B)$. It runs in $O(|A|+|B|\log|B|)$ time to compute the distance (and in $O(|A|+|B|\log|B|+s)$ time if it outputs the explicit translocation sequence, where $s$ is the number of 1-translocations).
\end{theorem}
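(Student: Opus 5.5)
The argument combines three ingredients: feasibility of the algorithm's output, an upper bound via the 1-translocation optimality lemma, and a lower bound via Theorem~\ref{thm:aprox}. The running time is then checked separately.

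\textbf{Feasibility.} First I would note that the 1-translocation sequence built by Algorithm~\ref{alg:two-approx} is also a valid ordinary translocation sequence of the same length: every step $(a,a)\vdash^{1} w$ is realized by an ordinary translocation $(a,a)\vdash(w,2a-w)$. This translocation exists by Lemma~\ref{lemma:interval}, or Corollary~\ref{cor:aa-interval}, since $w\le 2a$. Its second output is simply ignored. Inputs are never removed, so every input used later is still available. Hence $\mathit{TD}(A,B)\le s$, where $s$ is the returned value.

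\textbf{Upper bound by $OPT_1$.} By the preceding lemma, the algorithm's sequence uses exactly $OPT_1$ 1-translocations, so $s=OPT_1$.

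\textbf{Lower bound and ratio.} Theorem~\ref{thm:aprox} gives $OPT\ge OPT_1/2$, i.e.\ $OPT_1\le 2\cdot\mathit{TD}(A,B)$. Combining this with the previous two steps yields
\[
\mathit{TD}(A,B)\le s\le 2\cdot \mathit{TD}(A,B),
\]
so the algorithm is a 2-approximation.

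\textbf{Running time.} Computing $\max A$ takes $O(|A|)$ and sorting $B$ takes $O(|B|\log|B|)$. Each loop iteration needs only $O(1)$ arithmetic to compute $s$, because $d=\lceil\log_2(b_i/x)\rceil$ is a single word operation under the standard RAM assumption; alternatively, it is the bit-length difference, possibly adjusted by one. Emitting the explicit sequence adds $O(1)$ per 1-translocation, hence the extra $+s$ term.

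The main delicate point is the $O(1)$ cost of evaluating $d$ per target, which is only constant time under the unit-cost logarithm/bit-length assumption. I would state that assumption explicitly.
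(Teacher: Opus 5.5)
Your proposal is correct and follows the paper's route: $s=OPT_1$ by the greedy-optimality lemma, $OPT_1\le 2\cdot\mathit{TD}(A,B)$ by Theorem~\ref{thm:aprox}, and the same running-time accounting ($O(|A|)$ for $\max A$, $O(|B|\log|B|)$ for sorting, $O(1)$ per loop iteration, plus $O(s)$ to emit the sequence). Two of your additions make explicit what the paper leaves implicit: the feasibility direction $\mathit{TD}(A,B)\le s$, obtained by realizing each $(a,a)\vdash^{1} w$ as $(a,a)\vdash(w,2a-w)$, and the word-RAM assumption behind computing $d$ in constant time.
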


\begin{proof}
Let $OPT$ be the optimal number of unary translocations needed to produce $B$ from $A$ in the non-contiguous model, i.e., $OPT = \mathit{TD}(A,B)$, and let $OPT_1$
be the optimal number of 1-translocations for the corresponding restricted model. By Theorem~\ref{thm:aprox}, for any instance we have $OPT \ge \frac{OPT_1}{2}$. The
previous lemma shows that Algorithm~\ref{alg:two-approx} computes an optimal
1-translocation sequence, so it uses exactly $OPT_1$ steps. Therefore,
$
\text{cost}(\text{Algorithm~\ref{alg:two-approx}}) = OPT_1 \leq 2 \cdot OPT
= 2\cdot \mathit{TD}(A,B),
$
which proves the 2-approximation guarantee. For the running time, computing $x=\max A$ takes $O(|A|)$, sorting $B$ takes
$O(|B|\log|B|)$, and the for-loop runs in $O(|B|)$ iterations, each performing
only $O(1)$ operations to accumulate the value $s$. If we additionally output the
explicit sequence of 1-translocations,
we need $O(s)$ extra work to emit them. Hence the total time is
$O(|A|+|B|\log|B|)$ for the distance, and $O(|A|+|B|\log|B|+s)$ with an
explicit sequence. \qed
\end{proof}

\subsection{Additive $|B| - 1$ Approximation}

In this subsection, we present an additive $(|B|-1)$-approximation algorithm for the unary translocation distance. We then show that the same algorithm also yields a multiplicative $3$-approximation.

Let $x = \max A$ and $z_{\max} = \max B$. The algorithm has two phases:
\begin{enumerate}
    \item First, produce $z_{\max}$ from $A$ using exactly the same logic as in the case $|B| = 1$ (Theorem~\ref{thm:b1}).
    \item Once $z_{\max}$ is available, obtain each remaining $z \in B \setminus \{z_{\max}\}$ in one additional translocation from $(z_{\max},z_{\max})$.
\end{enumerate}

\begin{lemma}\label{lemma:lb-sets}
Let $x=\max A$ and $z_{\max} = \max B$. Then, $\mathit{TD}(A,B) \geq \mathit{TD}(A,\{z_{\max}\})$.
\end{lemma}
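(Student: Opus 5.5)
This is a monotonicity argument: any sequence that reaches $B$ also reaches the single target $z_{\max}$. Let $S$ be an optimal translocation sequence over $A$ with $B \subseteq A_{|S|}(S)$, so $|S| = \mathit{TD}(A,B)$. Since $z_{\max} \in B$, we get $\{z_{\max}\} \subseteq A_{|S|}(S)$. Hence $S$ is a feasible sequence for the instance $(A,\{z_{\max}\})$, and by the definition of $\mathit{TD}$ as a minimum over feasible sequences, $\mathit{TD}(A,\{z_{\max}\}) \le |S| = \mathit{TD}(A,B)$.

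The only subtlety is the disjointness convention in the problem definition. It requires $A \cap \{z_{\max}\} = \emptyset$, and this is inherited from $A \cap B = \emptyset$. So the instance $(A,\{z_{\max}\})$ is well defined and Theorem~\ref{thm:b1} applies to it.

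If we want the bound in explicit form, we then invoke Theorem~\ref{thm:b1}. It gives $\mathit{TD}(A,B) \ge \bigl\lceil \log_2 \frac{z_{\max}}{x} \bigr\rceil$ when $z_{\max} > x$, and $\mathit{TD}(A,B) \ge 1$ otherwise. The latter also follows directly, since $B \neq \emptyset$ and $B \cap A = \emptyset$ force at least one translocation.

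I expect no real obstacle. The proof is just the observation that enlarging the target set can only shrink the family of feasible sequences. The one point to state explicitly is that producing extra elements of $B$ never hurts feasibility for the smaller target, because sets only grow along a translocation sequence.
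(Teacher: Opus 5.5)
Your proposal is correct and follows essentially the same approach as the paper: any sequence reaching $B$ also reaches $z_{\max}$, so it is feasible for the singleton instance and the minimum can only be smaller. Your remarks on inherited disjointness and the explicit form via Theorem~\ref{thm:b1} match the paper's closing sentence.
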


\begin{proof}
Any translocation sequence that obtains the target set $B$ must in particular produce $z_{\max}$, so its length is at least
$\mathit{TD}(A,\{z_{\max}\})$. The explicit expression for this value follows from
Theorem~\ref{thm:b1}. \qed
\end{proof}

\begin{corollary}\label{cor:onestep-from-zmax}
If $z_{\max}$ is available (i.e., appears in some translocation sequence), then for any $z \le z_{\max}$ one translocation yields a length $z$.
\end{corollary}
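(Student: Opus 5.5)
The plan is to apply Lemma~\ref{lemma:interval} (or, equivalently, Corollary~\ref{cor:aa-interval}) to the pair $(p,q)=(z_{\max},z_{\max})$. If $z_{\max}$ is present in the current set $A_i$, then the translocation is allowed to take both of its inputs equal to $z_{\max}$. The definition of a translocation sequence only asks that $x_i,y_i\in A_{i-1}(S)$. It does not ask that $x_i$ and $y_i$ be distinct elements, and the paper already uses pairs of the form $(x,x)$ and $(a,a)$ in Lemma~\ref{lemma:doubling}, Lemma~\ref{lemma:upper} and Corollary~\ref{cor:upper}.

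The key steps are as follows.

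First, fix any $z$ with $0\le z\le z_{\max}$. Since $z_{\max}\ge 0$, this gives $0\le z\le z_{\max}\le 2z_{\max}=z_{\max}+z_{\max}$, so $z$ lies in the interval $[0,p+q]$ that Lemma~\ref{lemma:interval} requires.

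Second, Lemma~\ref{lemma:interval} then supplies parameters $i,j\in[0,z_{\max}]$ with
\[
(z_{\max},z_{\max})\vdash_{i,j}(z,\,2z_{\max}-z).
\]
Explicitly, the choice is $i=z_{\max}-z$ and $j=0$, because $d=z-z_{\max}\le 0$. This choice satisfies $0\le i\le z_{\max}$. The second output $2z_{\max}-z$ is nonnegative, so it is a legal element of $\mathbb{N}$.

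Third, I would append this single translocation to whatever sequence made $z_{\max}$ available. The new set then contains $z$, and this costs exactly one additional step.

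I do not expect a real obstacle here, since the statement is a direct specialization of Lemma~\ref{lemma:interval}. The one point to state explicitly is that using the same element twice as both inputs is permitted; the paper's convention in its earlier proofs already settles this. A second small remark is that the case $z=z_{\max}$ is trivial, because nothing needs to be produced. Even so, the argument above covers it uniformly, with $i=0$ producing $(z_{\max},z_{\max})$.
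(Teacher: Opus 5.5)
Your proposal is correct and follows the paper's proof: both reuse $z_{\max}$ as both inputs and apply Corollary~\ref{cor:aa-interval} (Lemma~\ref{lemma:interval} with $p=q=z_{\max}$) to obtain $(z,\,2z_{\max}-z)$ in one step. Your explicit parameters $i=z_{\max}-z$, $j=0$ are simply that lemma's construction written out.
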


\begin{proof}
Once $z_{\max}$ has appeared at least once, in the non-contiguous model we may use the pair
$(z_{\max}, z_{\max})$ in a translocation. By Corollary~\ref{cor:aa-interval} (with $a=z_{\max}$),
for any $t \in [0,2z_{\max}]$ a single translocation from $(z_{\max},z_{\max})$ produces
$(t, 2z_{\max}-t)$. In particular, for any $z \le z_{\max}$ we can choose $t=z$ and obtain $z$
in one step. \qed
\end{proof}

\begin{lemma}\label{lemma:ub-sets}
Let $z_{\max} = \max B$ and $ t = \mathit{TD}(A,\{z_{\max}\})$.
There exists a sequence of at most $ t + (|B|-1)$
translocations that generates the target set $B$.
\end{lemma}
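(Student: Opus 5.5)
The plan is to give an explicit two-phase construction and count its steps. In the first phase I produce $z_{\max}$. By Theorem~\ref{thm:b1}, $t = \mathit{TD}(A,\{z_{\max}\})$ is realized by the sequence of Corollary~\ref{cor:upper}. If $z_{\max} > x$, this sequence consists of $t-1$ doublings from $(x,x)$ followed by one translocation from $(a_{t-1},a_{t-1})$ that outputs $z_{\max}$ (Lemma~\ref{lemma:upper}). If $z_{\max} < x$, it is the single translocation from $(x,x)$ given by Lemma~\ref{lemma:interval}. In either case, after exactly $t$ translocations we have $z_{\max} \in A_t$. Since the problem assumes $A\cap B=\emptyset$, the case $z_{\max}=x$ cannot occur. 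So Theorem~\ref{thm:b1} covers every case, and $t\ge 1$.

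In the second phase I list $B\setminus\{z_{\max}\}=\{z_1,\dots,z_{|B|-1}\}$. For each $z_k$, I append the translocation $(z_{\max},z_{\max})\vdash(z_k,\,2z_{\max}-z_k)$. This step is valid by Corollary~\ref{cor:onestep-from-zmax}, equivalently Corollary~\ref{cor:aa-interval} with $a=z_{\max}$, since $0\le z_k\le z_{\max}\le 2z_{\max}$. The step uses $z_{\max}$ twice as an input. That is allowed because the definition of a translocation sequence only requires $x_i,y_i\in A_{i-1}$, and sets never lose elements, so $z_{\max}$ stays available for every later step.

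Finally I verify that the concatenation is a legitimate translocation sequence over $A$. The first phase is one by construction. Each step of the second phase draws its inputs from $A_{t}\subseteq A_{t+k-1}$. After the last step, $B\subseteq A_{t+|B|-1}$, so the length is $t+(|B|-1)$, which gives the bound. The main point needing care is the first phase: I must make sure the cited sequence reaching $z_{\max}$ really has length exactly $t$ and not just at most $\lceil\log_2(z_{\max}/x)\rceil$. But Theorem~\ref{thm:b1} identifies that quantity with $t$, so this is immediate. Any coincidences, such as some $z_k$ appearing during the first phase, only allow us to skip steps, so the bound "at most" still holds.
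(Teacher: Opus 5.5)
Your proof is correct and follows the same two-phase argument as the paper: first reach $z_{\max}$ in $t$ steps, then produce each remaining target with one translocation from $(z_{\max},z_{\max})$ via Corollary~\ref{cor:onestep-from-zmax}. The only difference is that the paper gets the first phase directly from the definition of $t$ as a minimum, since an optimal sequence for $\{z_{\max}\}$ exists; your explicit doubling construction and appeal to Theorem~\ref{thm:b1} are valid but not needed.
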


\begin{proof}
By definition of $t$, there exists a translocation sequence $S_z$ of length $t$ that produces $z_{max}$.
Let $S_z$ end with the translocation $s_t$. After executing $S_z$, $z_{\max}$ is available. For each $z \in B \setminus \{z_{\max}\}$, apply Corollary~\ref{cor:onestep-from-zmax} once,
starting from the current set of available lengths. Each such translocation
adds $z$ to the set of produced lengths. Because $B$ is a set, there are exactly $|B|-1$ such distinct $z$, so we need at most $|B|-1$ additional steps. Thus, we obtain a $B$-producing translocation sequence of total length at most $t + (|B|-1)$. \qed
\end{proof}

\begin{theorem}\label{thm:additive-Bminus1}
Let $x=\max A$, $z_{\max}=\max B$, and $t = \mathit{TD}(A,\{z_{\max}\})$.
Then, $ t \leq \mathit{TD}(A,B) \leq t + |B| - 1$.
Consequently, $\mathit{TD}(A,B)$ can be approximated with $(|B|-1)$-additive factor in $O(|A|+|B|)$ time. \qed
\end{theorem}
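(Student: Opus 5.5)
The theorem has three parts, and I will take them in order: the lower bound, the upper bound, and the running time.

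\emph{Lower bound.} Lemma~\ref{lemma:lb-sets} already gives $t \le \mathit{TD}(A,B)$. The argument is that any translocation sequence whose final set contains $B$ also contains $z_{\max}$. Such a sequence is therefore feasible for the instance $(A,\{z_{\max}\})$, and so its length is at least $t$. Since $A\cap B=\emptyset$, we have $z_{\max}\notin A$, so the instance $(A,\{z_{\max}\})$ satisfies the disjointness hypothesis. This means Theorem~\ref{thm:b1} applies and gives $t$ explicitly.

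\emph{Upper bound.} Lemma~\ref{lemma:ub-sets} gives $\mathit{TD}(A,B) \le t+|B|-1$. I will take an optimal sequence $S_z$ of length $t$ producing $z_{\max}$. For each remaining $z\in B\setminus\{z_{\max}\}$ I then append the translocation $(z_{\max},z_{\max})\vdash(z,2z_{\max}-z)$, which is valid by Corollary~\ref{cor:onestep-from-zmax} because $z\le z_{\max}$. The resulting sequence has length $t+|B|-1$ and its final set contains $B$. The length can be smaller if $S_z$ happens to produce some of the other targets along the way, but the inequality holds in either case.

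\emph{Algorithm and running time.} The algorithm outputs $t+|B|-1$, which lies within an additive $|B|-1$ of $\mathit{TD}(A,B)$ by the two bounds above. It computes $x=\max A$ in $O(|A|)$ and $z_{\max}=\max B$ in $O(|B|)$, with no sorting needed. It then evaluates $t$ by the closed form of Theorem~\ref{thm:b1}: $t=1$ if $z_{\max}<x$, and $t=\lceil\log_2(z_{\max}/x)\rceil$ otherwise. The second case can be computed in $O(1)$ word operations, for example as the smallest $k$ with $x2^k\ge z_{\max}$ using bit lengths. Note that $z_{\max}=x$ cannot occur, by disjointness. The total time is $O(|A|+|B|)$.

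The only step needing care is checking the hypotheses of Theorem~\ref{thm:b1} for the singleton instance, namely $z_{\max}\ne x$. This follows immediately from $A\cap B=\emptyset$, so I expect no real obstacle; the theorem is a direct assembly of Lemmas~\ref{lemma:lb-sets} and~\ref{lemma:ub-sets}.
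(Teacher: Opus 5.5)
Your proposal is correct and matches the paper's proof: the lower bound comes from Lemma~\ref{lemma:lb-sets}, the upper bound from Lemma~\ref{lemma:ub-sets} by appending $(z_{\max},z_{\max})\vdash(z,2z_{\max}-z)$ for each remaining target, and the $O(|A|+|B|)$ time from computing $\max A$, $\max B$ and the closed form of Theorem~\ref{thm:b1}. Your extra remarks, that $z_{\max}\neq x$ by disjointness and that $\lceil\log_2(z_{\max}/x)\rceil$ can be evaluated in $O(1)$ word operations, just spell out details the paper leaves implicit.
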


\begin{proof}
The lower bound $t \le \mathit{TD}(A,B)$ follows from Lemma~\ref{lemma:lb-sets}. The upper bound
$\mathit{TD}(A,B) \le t + |B|-1$ follows from Lemma~\ref{lemma:ub-sets}. Computing
$x=\max A$ and $z_{\max}=\max B$ takes $O(|A|+|B|)$ time, and evaluating $t$ and
$t+(|B|-1)$ is constant-time, so the overall running time is $O(|A|+|B|)$. \qed
\end{proof}

\begin{corollary}\label{cor:additive-implies-3approx}
The algorithm from Theorem~\ref{thm:additive-Bminus1} is a $3$-approximation for the unary translocation distance problem.
\end{corollary}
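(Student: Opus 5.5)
We need to show that $t+|B|-1 \le 3\cdot \mathit{TD}(A,B)$, where $t=\mathit{TD}(A,\{z_{\max}\})$. Write $OPT=\mathit{TD}(A,B)$. The plan is to combine two lower bounds on $OPT$ and then add them.

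The first lower bound is $OPT \ge t$, which is Lemma~\ref{lemma:lb-sets}.

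The second is a counting bound. Since $A\cap B=\emptyset$, every element of $B$ must be newly created by some translocation. Each translocation introduces at most two new values, so $OPT \ge \lceil |B|/2\rceil \ge |B|/2$. This is the same counting argument as in the remark of the NP-hardness section. It gives $|B|-1 < |B| \le 2\,OPT$.

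Adding the two bounds yields
\[
t + |B| - 1 \le OPT + 2\,OPT = 3\,OPT .
\]
By Theorem~\ref{thm:additive-Bminus1}, the algorithm outputs a sequence of length at most $t+|B|-1$, so its cost is at most $3\,OPT$. The statement is therefore immediate from the two lower bounds.

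The only step that needs care is making the counting bound rigorous. The point is that $B\subseteq A_{|S|}(S)$ while $B\cap A_0=\emptyset$, so $|B| \le |A_{|S|}\setminus A_0| \le 2|S|$. If one wants an algorithm that never does worse than optimal, one can note in addition that it outputs $t$ exactly when $|B|=1$, but this refinement is not needed for the factor $3$.
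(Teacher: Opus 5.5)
Your proposal is correct and matches the paper's proof essentially verbatim. Both combine $t\le \mathit{TD}(A,B)$ from Lemma~\ref{lemma:lb-sets} with the counting bound $\mathit{TD}(A,B)\ge\lceil |B|/2\rceil$, which follows from $A\cap B=\emptyset$, to get $t+|B|-1\le 3\,\mathit{TD}(A,B)$; your justification $|B|\le |A_{|S|}\setminus A_0|\le 2|S|$ just spells out that counting step more explicitly.
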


\begin{proof}
Let $\mathrm{OPT}=\mathit{TD}(A,B)$, and let $U$ be the number of translocations used by the algorithm from Theorem~\ref{thm:additive-Bminus1}. By Theorem~\ref{thm:additive-Bminus1}, we have $U\leq t+|B|-1$, where $t=\mathit{TD}(A,\{z_{\max}\})$. By Lemma~\ref{lemma:lb-sets}, $t\leq \mathrm{OPT}$. Moreover, since $A\cap B=\emptyset$, every element of $B$ has to be produced by some translocation. Each translocation can introduce at most two new target lengths, and therefore $\mathrm{OPT}\geq \lceil \frac{|B|}{2} \rceil$. Hence $|B|-1\leq 2\mathrm{OPT}-1\leq 2\mathrm{OPT}$. Combining these inequalities, we obtain $U\leq t+|B|-1\leq \mathrm{OPT}+2\mathrm{OPT}=3\mathrm{OPT}$. Therefore, the algorithm is a $3$-approximation.

\qed
\end{proof}

\section{Exact Algorithm for the Case $|B|$ = 2}\label{sec:exact}

Next we consider the case $B=\{z_1,z_2\}$ with $z_1 < z_2$ and $x=\max A$. Informally, the algorithm proceeds as follows:

\begin{enumerate}
    \item First, check whether we can obtain both $z_1$ and $z_2$ in a single translocation
    from the initial set $A$. This happens exactly when there exist $a,b\in A$ such that
    $a+b=z_1+z_2$. By Lemma~\ref{lemma:interval}, one step then transforms $(a,b)$ into
    $(z_1,z_2)$, so the answer is $1$.

    \item Otherwise, we know that at least $2$ translocations are needed. We then distinguish
    cases according to the relative position of $z_2$ and $x=\max A$:
    \begin{enumerate}
        \item If $z_2 < x$, then by Theorem~\ref{thm:b1} we have
        $\mathit{TD}(A,\{z_2\}) = 1$. Since we have already ruled out the possibility of
        producing $\{z_1,z_2\}$ in a single step, it follows that
        $\mathit{TD}(A,\{z_1,z_2\}) = 2$.

        \item If $z_2 > x$, then
        $\mathit{TD}(A,\{z_2\}) =
        \left\lceil \log_{2}\frac{z_2}{x} \right\rceil$. Let $t^* = \mathit{TD}(A,\{z_2\}) = \left\lceil \log_{2}\frac{z_2}{x} \right\rceil$,
        by Theorem~\ref{thm:additive-Bminus1} with $|B|=2$, we know that
        $\mathit{TD}(A,\{z_1,z_2\}) \in \{t^*, t^*+1\}$. We test whether we can
        obtain $z_1$ during the doubling construction for $z_2$ or obtain both $z_1$ and $z_2$ in the last step, i.e.,
        without increasing the total number of steps beyond $t^*$. If this is feasible,
        the answer is $t^*$, otherwise it is $t^*+1$.
    \end{enumerate}
\end{enumerate}

We now formalize the bounds and the feasibility condition used above.

\begin{lemma}\label{lemma:two-bounds}
Let $B=\{z_1,z_2\}$ with $z_1 < z_2$, $x=\max A$ and $t^* = \mathit{TD}(A,\{z_2\}) = 1 \text{ if } z_2 < x  \text{ else } \bigl\lceil \log_{2}\tfrac{z_2}{x}\bigr\rceil$. Then, $ t^* \leq \mathit{TD}(A,\{z_1,z_2\}) \leq t^* + 1.$
\end{lemma}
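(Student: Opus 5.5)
The plan is to read this as the special case $|B|=2$ of Theorem~\ref{thm:additive-Bminus1}, with one extra check that the quantity $t^*$ defined in the statement really equals $\mathit{TD}(A,\{z_2\})$. Since $z_1<z_2$, we have $z_{\max}=\max B=z_2$. Moreover, $A\cap B=\emptyset$ gives $z_2\neq x$, so exactly one of the two cases in the definition of $t^*$ applies. By Theorem~\ref{thm:b1}, $\mathit{TD}(A,\{z_2\})=1$ if $z_2<x$ and $\bigl\lceil \log_2\tfrac{z_2}{x}\bigr\rceil$ if $z_2>x$. This agrees with $t^*$ in both cases, so $t^*=\mathit{TD}(A,\{z_{\max}\})$.

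For the lower bound, I will invoke Lemma~\ref{lemma:lb-sets}. Any sequence producing $\{z_1,z_2\}$ in particular produces $z_2$, so $\mathit{TD}(A,\{z_1,z_2\})\ge \mathit{TD}(A,\{z_2\})=t^*$.

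For the upper bound, I will apply Lemma~\ref{lemma:ub-sets} with $|B|-1=1$. Concretely, I take an optimal sequence of length $t^*$ that produces $z_2$; this exists by Corollary~\ref{cor:upper} together with Corollary~\ref{cor:lower}. I then append one translocation $(z_2,z_2)\vdash(z_1,2z_2-z_1)$. This step is valid by Corollary~\ref{cor:onestep-from-zmax}, since $0\le z_1<z_2\le 2z_2$. The resulting sequence has length $t^*+1$ and contains both targets.

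The only delicate point is confirming that the piecewise definition of $t^*$ matches Theorem~\ref{thm:b1}, including that the boundary case $z_2=x$ cannot occur. This is settled by the disjointness assumption $A\cap B=\emptyset$, after which the statement follows directly from Theorem~\ref{thm:additive-Bminus1}.
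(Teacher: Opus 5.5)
Your proposal is correct and follows the paper's proof. The lower bound comes from the fact that any sequence producing $\{z_1,z_2\}$ also produces $z_2$, and the upper bound is Theorem~\ref{thm:additive-Bminus1} specialized to $|B|=2$, which you unfold into appending $(z_2,z_2)\vdash(z_1,2z_2-z_1)$. Your extra check that disjointness forces $z_2\neq x$, so that the piecewise $t^*$ agrees with Theorem~\ref{thm:b1}, is a valid clarification that the paper leaves implicit.
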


\begin{proof}
Any sequence that produces $\{z_1,z_2\}$ produces $z_2$, hence
$
\mathit{TD}(A,\{z_1,z_2\}) \;\ge\; \mathit{TD}(A,\{z_2\}) = t^*.
$\
For the upper bound, apply Theorem~\ref{thm:additive-Bminus1} with $|B|=2$ and
$z_{\max}=z_2$, which yields $\mathit{TD}(A,\{z_1,z_2\}) \le t^* + (2-1) = t^*+1$. \qed
\end{proof}

The next lemma characterizes when the lower bound $t^*$ from Lemma~\ref{lemma:two-bounds}
is actually attained in the regime $z_2 > x$ and $t^*\ge 2$.

\begin{lemma}\label{lemma:feasibility}
Assume $z_2 > x$ and
$ t^* = \bigl\lceil \log_{2}\tfrac{z_2}{x}\bigr\rceil \ge 2$. Let $
a_k = x\cdot 2^k, \forall k \in \{ 0,1,\dots,t^*-1\}$.
Then
$
\mathit{TD}(A,\{z_1,z_2\}) = t^* \iff$ at least one of the following statements is true:
\begin{enumerate}
    \item $\exists\,k\in\{0,\dots,t^*-2\}:
\ z_1 \le 2a_k\ \text{ and }\
\max\{z_1,\,2a_k-z_1\}\cdot 2^{\,t^*-1-k} \ \ge\ z_2$
    \item $z_1 + z_2$ is an even number and $z_1 + z_2 \leq x\cdot 2^{ t^*}$\
    \item $z_1 + z_2$ is an odd number and $z_1 + z_2 \leq x\cdot 2^{ t^*-1} + x\cdot 2^{ t^*-2} $. 
\end{enumerate}
 
\end{lemma}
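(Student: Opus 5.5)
The plan is to prove the two directions separately. Sufficiency will come from explicit constructions built on Lemma~\ref{lemma:interval} and Corollary~\ref{cor:aa-interval}. Necessity will come from a structural analysis of any length-$t^*$ sequence, using the growth bound of Lemma~\ref{lemma:doubling}. Throughout, I write $a_k=x\cdot 2^k$ for the value reached after $k$ pure doublings.

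\emph{Sufficiency.}
\begin{enumerate}
\item Under condition 1, perform $k$ doublings to reach $a_k$. At step $k+1$, translocate $(a_k,a_k)\vdash(z_1,2a_k-z_1)$, which is legal by Corollary~\ref{cor:aa-interval} since $z_1\le 2a_k$. Let $m=\max\{z_1,2a_k-z_1\}$. The remaining $t^*-1-k$ steps double $m$, and the final step uses Lemma~\ref{lemma:interval} on the pair $(m2^{t^*-2-k},m2^{t^*-2-k})$ to output $z_2$. This works because $z_2\le m2^{t^*-1-k}$. The total is $t^*$ steps.
\item Under condition 2, double $t^*-1$ times to reach $a_{t^*-1}$. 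Then $(a_{t^*-1},a_{t^*-1})$ has sum $x2^{t^*}\ge z_1+z_2$, but I need a pair whose sum is exactly $z_1+z_2$. Since $z_1+z_2$ is even, I will instead make the last doubling produce $(\tfrac{z_1+z_2}{2},\,2a_{t^*-2}-\tfrac{z_1+z_2}{2})$; when $t^*-2<0$ this degenerates and I treat it separately. Then I translocate $(\tfrac{z_1+z_2}{2},\tfrac{z_1+z_2}{2})\vdash(z_1,z_2)$. This uses $t^*\ge2$.
\item Under condition 3, an odd sum requires two distinct available values $p\neq q$ with $p+q=z_1+z_2$. The natural construction doubles to $a_{t^*-2}$. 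One step then produces some $p\le a_{t^*-1}$ (from $(a_{t^*-2},a_{t^*-2})$), and $q\le a_{t^*-2}$ is already available. Choosing $p$ to hit the right sum gives the bound $a_{t^*-1}+a_{t^*-2}$. I must also check that $p\ge0$ and $q\ge 0$ can be arranged, using $z_1+z_2\ge z_2>a_{t^*-2}\cdot 2$.
\end{enumerate}

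\emph{Necessity.} Take a sequence of length exactly $t^*$ producing both targets, and ask in which step $z_1$ first appears.
\begin{enumerate}
\item If $z_1$ first appears strictly before the last step, at step $k+1\le t^*-1$, I argue that the maximum before that step is at most $a_k$. The step therefore outputs the pair $(z_1,w)$ with $z_1+w\le 2a_k$. Afterwards the maximum is at most $\max\{z_1,w,a_k\}$, and by Lemma~\ref{lemma:doubling} applied from that point, $z_2\le \max\{\cdot\}\cdot 2^{t^*-1-k}$. Since $\max\{z_1,w\}\le\max\{z_1,2a_k-z_1\}$ and $a_k\le 2a_k-z_1$ whenever $z_1\le a_k$, this gives condition 1. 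The delicate point is justifying that the maximum before step $k+1$ is at most $a_k$, and that earlier non-doubling steps do not help. Both follow from the monotone bound $M_j\le x2^j$.
\item If $z_1$ first appears in the last step, that step must output $\{z_1,z_2\}$ from some available pair $(p,q)$ with $p+q=z_1+z_2$, where $p,q$ exist after $t^*-1$ steps.
\begin{enumerate}
\item If $p=q$, then $p\le a_{t^*-1}$, which gives condition 2 since $z_1+z_2=2p$ is even.
\item If $p\ne q$, the sum may be odd. The two largest distinct available values after $j$ steps are bounded by $a_j+a_{j-1}$. I will prove this by induction: a step producing a new maximum $u\le 2M$ has its partner $v=$ sum$-u$ bounded, and in all cases $p+q\le a_{t^*-1}+a_{t^*-2}$.
\end{enumerate}
The even case is then covered by condition 2 anyway, since $a_{t^*-1}+a_{t^*-2}\le x2^{t^*}$. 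So the odd case yields condition 3.
\end{enumerate}

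The main obstacle is this inductive claim on the sum of two distinct available values (and the corresponding tightness in the odd construction). I would first try to prove it by tracking the pair (largest, largest distinct from it) through each translocation.
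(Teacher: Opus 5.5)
Your proposal is correct and follows essentially the same route as the paper. For sufficiency you give the same three explicit constructions, and for necessity you use the same case split on the step at which $z_1$ first appears, relying on Lemma~\ref{lemma:doubling} and on the fact that two distinct values available after $t^*-1$ steps sum to at most $\max A_{t^*-1}+\max A_{t^*-2}\le a_{t^*-1}+a_{t^*-2}$. Your minor variations are fine: you bound the post-step maximum via $a_k\le\max\{z_1,2a_k-z_1\}$, and you split the last step by $p=q$ versus $p\neq q$ rather than by parity. The ``obstacle'' you flag needs no induction: if one output of step $j$ exceeds $\max A_{j-1}$, the other must lie below $\max A_{j-1}$, because the two outputs sum to at most $2\max A_{j-1}$.
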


\begin{proof}
($\Rightarrow$) Suppose $\mathit{TD}(A,\{z_1,z_2\}) = t^*$. Then $\exists S = \{s_1 = (x_1, y_1) \vdash (u_1, v_1),  s_2 = (x_2, y_2) \vdash (u_2, v_2), \dots, s_{t^*} =  (x_{t^*}, y_{t^*}) \vdash (u_{t^*}, v_{t^*})\}$ such that $|S| = t^*$ and $S$ is a translocation sequence that produces $\{z_1,z_2\} \implies \exists i, j, 1 \leq i, j \leq t^*$ such that $z_1 \in \{u_i, v_i\}$ and $z_2 \in \{u_j, v_j\}$. Since $\mathit{TD}(A, \{z_2\}) = t^*$, $z_2$ cannot be obtained before step $t^* \implies j = t^* (z_2 \in \{u_{t^*}, v_{t^*}\})$.
For the first step $k+1, 0 \leq k < t^*$ that obtains $z_1$ ($k+1 = min\{i| 1 \leq i \leq t^* \text{ and } z_1 \in \{u_i, v_i\}\}$), we distinguish two cases:
\begin{itemize}
    \item $k+1 < t^*$, then $k\in\{0,\dots,t^*-2\}$ and $z_1 \in \{u_{k+1}, v_{k+1}\}$. WLOG, assume $z_1 = u_{k+1}$. By Lemma~\ref{lemma:doubling}, the largest obtainable value at step $k$ is $x\cdot 2^k = a_k$, then $\max A_k \leq a_k$, $z_1 \leq 2 \cdot \max A_k \leq 2 \cdot a_k$ and $v_{k+1} \leq 2 \cdot \max A_k - z_1\leq 2 \cdot a_k - z_1$. Additionally,  $\max A_{k+1} = \max\{ \max A_k, z_1, v_{k+1}\}$ and we have ${t^*}-k-1$ remaining steps to reach $z_2$. The largest obtainable value after these ${t^*}-k-1$ steps is at most $\max A_{k+1}\cdot 2^{{t^*}-k-1}\implies \max A_{k+1}\cdot 2^{{t^*}-k-1}\geq z_2 > x \cdot 2^{t^*-1} \implies \max A_{k+1} > x \cdot 2^k = a_k$. Since $\max A_k \leq a_k$, $\max A_{k+1} = \max \{z_1, v_{k+1}\} \leq \max \{z_1, 2a_k - z_1\} \implies \max \{z_1, 2a_k - z_1\}\cdot 2^{{t^*}-k-1} \geq \max \{z_1, v_{k+1}\}\cdot 2^{{t^*}-k-1}\geq z_2$. Thus, statement $1$ holds.
    \item $k+1 = t^* \implies \{u_{t^*}, v_{t^*}\} = \{z_1, z_2\}$. By the sum of $z_1$ and $z_2$, we distinguish 2 sub-cases:
    \begin{itemize}
        \item $z_1 + z_2$ is even, then $z_1, z_2$ can be produced using doubling $\implies z_1 + z_2 \leq 2 \cdot \max A_k = 2 \cdot \max A_{t^* - 1}$. Since the largest value obtainable at step $t^* - 1$ is $ x \cdot 2^ { t^*-1}$, $\max A_{t^* - 1} \leq x \cdot 2^ { t^*-1}$ and $z_1 + z_2 \leq 2 \cdot x \cdot 2^ { t^*-1} = x \cdot 2^ { t^*}$. Thus, statement $2$ holds.
        \item $z_1 + z_2$ is odd, then $z_1, z_2$ can be produced only using two different lengths from $A_k = A_{t^* - 1}$. The largest value obtainable in $k = t^*-1$ steps is $x \cdot 2^ { t^*-1}$ ($\max A_{t^* - 1} \leq x \cdot 2^ { t^*-1}$). But only one of the values obtained at step $k$ can be greater than $\max A_{k-1}$ ($u_k + v_k \leq 2 \cdot \max A_{k-1}$. WLOG, assume $u_k > \max A_{k-1} \implies v_k < \max A_{k-1}$). Therefore, the sum of the two largest distinct values in $A_k$ is at most $\max A_k + \max A_{k-1} = x \cdot 2^ {k} + x \cdot 2^ { k-1} = x \cdot 2^ { t^*-1} + x \cdot 2^ { t^*-2} \implies z_1 + z_2 \leq x \cdot 2^ { t^*-1} + x \cdot 2^ { t^*-2}$ and the statement $3$ holds.
    \end{itemize}
\end{itemize}

($\Leftarrow$) Conversely, suppose that at least one of the following statements is true. We construct a sequence as follows:
\begin{enumerate}
    \item $\exists k\in\{0,\dots,t^*-2\}$ such that
    $z_1\le 2a_k$ and $\max\{z_1,2a_k-z_1\}\cdot 2^{t^*-1-k} \ge z_2$.
    
    For the first $k$ steps, follow the standard doubling
    sequence from $x$:
    $
    a_0=x,\ a_1=2x,\ \dots,\ a_k=x\cdot 2^k.
    $
    At step $k{+}1$, use Corollary~\ref{cor:aa-interval} on $(a_k,a_k)$ to produce
    $(z_1,2a_k-z_1)$. After this, let $b_0 = \max\{z_1,2a_k-z_1\}$. In the next
    $t^*-2-k$ steps, repeatedly apply doublings of the form
    $(b_i,b_i)\vdash_{b_i, 0}(0, 2b_i)$ to obtain $b_{t^*-2-k} = b_0\cdot 2^{t^*-2-k}$.
    Then, by the inequality $b_0\cdot 2^{t^*-1-k} = 2b_{t^*-2-k}\ge z_2$, we can use $(b_{t^*-2-k},b_{t^*-2-k})$ to produce $(z_2,2b_{t^*-2-k}-z_2)$. 

    \item $z_1 + z_2$ is an even number and $z_1 + z_2 \leq x\cdot 2^{ t^*}$.\\
    For the first $t^* - 2$ steps, follow the standard doubling
    sequence from $x$:
    $a_0=x,\ a_1=2x,\ \dots,\ a_{t^*-2}=x\cdot 2^{t^*-2}$.
    As $ z_1 + z_2 \leq x \cdot 2^{t^*} \implies (z_1+z_2)/2 \leq x \cdot 2^{t^*-1} \leq 2 \cdot x \cdot 2^{t^*-2}$, at step $t^*-1$ we use Corollary~\ref{cor:aa-interval} on $(a_{t^*-2},a_{t^*-2})$ to produce
    $(\frac{z_1+z_2}{2},2 a_{t^*-2}-\frac{z_1+z_2}{2})$. Then, at step $t^*$ we use $(\frac{z_1+z_2}{2},\frac{z_1+z_2}{2})$ and produce $(z_1, z_2)$.

    \item $z_1 + z_2$ is an odd number and $z_1 + z_2 \leq x\cdot 2^{ t^*-1} + x\cdot 2^{ t^*-2}$.\\
    For the first $t^* - 2$ steps, follow the standard doubling
    sequence from $x$:
    $a_0=x,\ a_1=2x,\ \dots,\ a_{t^*-2}=x\cdot 2^{t^*-2}$.
    Since $ \mathit{TD}(A, \{z_2\}) = t^*$, $z_2 > x \cdot 2^{t^*-1} \implies x \cdot 2^{t^*-1} < z_1+z_2 \leq x\cdot 2^{ t^*-1} + x\cdot 2^{ t^*-2} \implies x \cdot 2^{t^*-2} < (z_1+z_2) - x\cdot 2^{ t^*-2} \leq x\cdot 2^{ t^*-1} = 2a_{t^*-2}$. Therefore, at step $t^*-1$, we can use Corollary~\ref{cor:aa-interval} on $(a_{t^*-2},a_{t^*-2})$ to produce $((z_1+z_2) - x\cdot 2^{ t^*-2}, 2a_{ t^*-2} - ((z_1+z_2) - x\cdot 2^{ t^*-2}))$. Finally, at step $t^*$, we use $((z_1+z_2) - x\cdot 2^{ t^*-2}, x\cdot 2^{ t^*-2})$ and obtain $(z_1, z_2)$.
    
    \end{enumerate}
    For each of the cases considered above, we have obtained both $z_1$ and $z_2$ using $t^*$ translocations, so $\mathit{TD}(A,\{z_1,z_2\})\le t^*$. Combined with
    Lemma~\ref{lemma:two-bounds}, this implies equality.\qed
\end{proof}

\begin{theorem}\label{thm:two-exact}
Let $B=\{z_1,z_2\}$ with $z_1 < z_2$ and $x=\max A$, and let $t^* = \mathit{TD}(A,\{z_2\})$. Then,
\[
\mathit{TD}(A,\{z_1,z_2\}) =
\begin{cases}
1, & \text{if } \exists\,a,b\in A:\ a+b=z_1+z_2,\\[1mm]
2, & \text{if } t^* = 1 \text{ and no such } a,b \text{ exist},\\[1mm]
t^*, & \text{if the condition of Lemma~\ref{lemma:feasibility} holds},\\[1mm]
t^*+1, & \text{otherwise.}
\end{cases}
\]

Moreover, $\mathit{TD}(A,\{z_1,z_2\})$ can be computed in
$O(|A| + \log_2 \tfrac{z_2}{x})$ time.
\end{theorem}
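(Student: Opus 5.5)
We assemble the case analysis from the lemmas already established, checking that the four cases are exhaustive and handling each one. The first step is the value $1$. Since $A\cap B=\emptyset$, at least one translocation is needed. A single translocation from $(a,b)$ with $a,b\in A$ produces both $z_1$ and $z_2$ exactly when $a+b=z_1+z_2$: necessity holds because translocations preserve sums, and sufficiency is Lemma~\ref{lemma:interval} with $t=z_1$, which gives $(a,b)\vdash(z_1,z_2)$ since $z_1\le z_1+z_2=a+b$. When this condition fails we have $\mathit{TD}(A,\{z_1,z_2\})\ge 2$.

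Next we treat $t^*=1$, which by Theorem~\ref{thm:b1} means $z_2<x$, or more generally that $z_2$ is reachable in one step. Lemma~\ref{lemma:two-bounds} gives $\mathit{TD}\le t^*+1=2$, and together with the lower bound from the first step the answer is $2$. Now assume $t^*\ge 2$, so that $z_2>x$ and $t^*=\lceil\log_2(z_2/x)\rceil$. In this regime no pair in $A$ has sum $z_1+z_2$, because $z_1+z_2>z_2>2x\cdot 2^{t^*-2}\ge 2x\ge a+b$. Hence the first case never overlaps with this one. Lemma~\ref{lemma:two-bounds} gives $\mathit{TD}\in\{t^*,t^*+1\}$, and Lemma~\ref{lemma:feasibility} characterizes exactly when the value is $t^*$; otherwise it is $t^*+1$. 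This proves the formula.

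The point we expect to need the most care is the interplay between the boundary cases. One boundary is $t^*=1$ with $z_2>x$, i.e.\ $x<z_2\le 2x$, where Theorem~\ref{thm:b1} literally reads $\lceil\log_2(z_2/x)\rceil=1$. Here we must confirm that the second branch still applies, and it does, since only $t^*=1$ is used. The other boundary is the ordering of the branches: the first branch must take precedence, and Lemma~\ref{lemma:feasibility} must only be invoked when $t^*\ge2$, which is exactly the hypothesis of that lemma.

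For the running time, we compute $x=\max A$ in $O(|A|)$. We test the first case by checking whether some $a\in A$ has $z_1+z_2-a\in A$, using a hash set, in expected $O(|A|)$ time; alternatively, we note that the test is only needed when $t^*=1$, since otherwise it fails automatically as shown above. Computing $t^*$ takes $O(\log_2(z_2/x))$ by repeated doubling. Conditions 2 and 3 of Lemma~\ref{lemma:feasibility} each take $O(1)$ time given $x\cdot 2^{t^*}$. Condition 1 is checked by iterating $k=0,\dots,t^*-2$ while maintaining $a_k$ and $2^{t^*-1-k}$ incrementally, at $O(1)$ per $k$. The total is $O(|A|+\log_2(z_2/x))$.
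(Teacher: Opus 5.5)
Your proposal is correct and follows essentially the same route as the paper: you settle the one-step case via sum preservation and Lemma~\ref{lemma:interval}, resolve $t^*=1$ from the bounds $\{1,2\}$, and for $t^*\ge 2$ combine Lemma~\ref{lemma:two-bounds} with Lemma~\ref{lemma:feasibility}, with the same $O(|A|+t^*)$ running-time accounting. Your explicit check that no pair in $A$ sums to $z_1+z_2$ when $t^*\ge 2$ is a useful addition the paper leaves implicit; the only slip is that $z_1+z_2>z_2$ should read $z_1+z_2\ge z_2$ when $z_1=0$, which does not affect the conclusion.
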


\begin{proof}
If there exist $a,b\in A$ with $a+b=z_1+z_2$, Lemma~\ref{lemma:interval} shows that one
step suffices. If no such pair exists and $z_2 < x$, we know from Theorem~\ref{thm:b1} and
Theorem~\ref{thm:additive-Bminus1} that $t^*=1$ and $\mathit{TD}(A,\{z_1,z_2\})\in\{1,2\}$.
Since the value cannot be $1$, it must be $2$. The same reasoning
applies when $z_2 > x$ but $t^*=1$ (i.e., $z_2\in(x,2x]$): the lower bound $1$ is ruled out
by the first test, so the value is $2$. In the remaining case $z_2 > x$ and $t^*\ge 2$, Lemma~\ref{lemma:feasibility} shows that
$\mathit{TD}(A,\{z_1,z_2\}) = t^*$ if and only if either $z_1$ can be obtained during the doubling construction for $z_2$ or both $z_1$ and $z_2$ can be produced in the last step.
Otherwise, the distance must be $t^*+1$ by Lemma~\ref{lemma:two-bounds}. The running time is dominated by computing $x=\max A$ and evaluating in loop the feasibility of obtaining $z_1$ during the doubling construction for $z_2$ (with constant-time checks per iteration), which yields
$O(|A| + t^*) = O(|A| + \log_2 \tfrac{z_2}{x})$. Checking
whether there exist $a,b\in A$ with $a+b=z_1+z_2$ can be implemented in
$O(|A|)$ time by scanning $A$ once and storing its elements in a hash
table. \qed
\end{proof}

\section{Exact Algorithm for Constant $|B|$}
\label{sec:exact_constant}

In this section we give an exact pseudo-polynomial time algorithm for the unary translocation distance when $|B|$ is constant. The algorithm is based on a last-step case analysis and uses the exact algorithms for the cases $|B|=1$ and $|B|=2$ as subroutines.

Let $B=\{z_1,z_2,\dots,z_c\}$ with $z_1<z_2<\dots<z_c$, where $c=|B|$, and let $x=\max A$. We assume throughout this section that $A\cap B=\emptyset$. Let $z_{\max}=z_c$ and let $t^*=\mathit{TD}(A,\{z_{\max}\})$.
By the additive approximation result of Theorem~\ref{thm:additive-Bminus1}, we have $t^*\leq \mathit{TD}(A,B)\leq t^*+c-1$. On the other hand, one translocation can produce at most two new target values, so $\mathit{TD}(A,B)\geq \lceil \frac{c}{2}\rceil$. Therefore, $\max\{t^*,\lceil \frac{c}{2}\rceil\}\leq \mathit{TD}(A,B)\leq t^*+c-1$.

Thus, it is sufficient to test whether the target set can be produced in $K$ steps, for values of $K$ in the interval $\{\max\{t^*,\lceil \frac{c}{2}\rceil\},\dots,t^*+c-1\}$. Since a sequence using at most $K$ steps is also valid for any larger step bound, a YES answer for $K$ remains YES for all larger values. Therefore, we can use binary search on this interval.
%
%
The main difficulty, compared to the case $|B|=2$, is that a target may be produced together with an auxiliary value which is later used as a helper. To handle all such possibilities, we use a recursive last-step test.

For a finite set $C\subseteq \mathbb{N}$ with $|C|\leq c$ and an integer $K\geq 0$, let $D(C,K)$ denote the decision problem asking whether all values in $C$ can be produced in at most $K$ unary translocations from $A$.
If $C=\emptyset$, then $D(C,K)$ is true. If $|C|=1$, we use the exact singleton formula from Theorem~\ref{thm:b1}. If $|C|=2$, we use the exact algorithm from Theorem~\ref{thm:two-exact}. Therefore, the only remaining case is $|C|\geq 3$.

Suppose $|C|\geq 3$. If $K=0$, then $D(C,K)$ is false. Otherwise, let $U_{K-1}=x\cdot 2^{K-1}$. By Lemma~\ref{lemma:doubling}, every value available before the last step of a $K$-step sequence is at most $U_{K-1}$. In any minimal sequence, the last step must produce at least one value from $C$. Otherwise, it could be removed. Since one translocation has two outputs, the last useful step produces either exactly one value from $C$, or exactly two values from $C$. We distinguish the following two cases.

\paragraph{Case 1: the last step produces exactly one required value.}
Let this value be $r\in C$. Suppose the inputs of the last step are $p$ and $q$. Since one output is $r$, we have $p+q\geq r$. Let $m=\max\{p,q\}$. Then $m$ is available before the last step and $2m\geq r$. Therefore, this case is feasible if and only if there exist $r\in C$ and $m\in\{\lceil \frac{r}{2}\rceil,\dots,U_{K-1}\}$ such that $D((C\setminus\{r\})\cup\{m\},K-1)$ is true.

\paragraph{Case 2: the last step produces exactly two required values.}
Let these values be $r,s\in C$, $r\neq s$. If the inputs of the last step are $p$ and $q$, then $p+q=r+s$. Thus, before the last step, the prefix has to produce all values in $C\setminus\{r,s\}$ together with the two helper values $p$ and $q$. Hence this case is feasible if and only if there exist distinct $r,s\in C$ and values $p,q\in\{0,\dots,U_{K-1}\}$ such that $p+q=r+s$ and $D((C\setminus\{r,s\})\cup\{p,q\},K-1)$ is true.

The resulting decision procedure is described in Algorithm~\ref{alg:decide_constant}. The procedure is memoized: each pair $(C,K)$ is evaluated at most once, with $C$ stored in sorted order after removing the elements of $A$.

\begin{algorithm}[t]
\caption{\textsc{D}$(C,K)$}
\label{alg:decide_constant}
\begin{algorithmic}[1]
\If{$C=\emptyset$}
    \State \Return YES
\EndIf
\If{$K<0$}
    \State \Return NO
\EndIf
\If{$|C|=1$}
    \State \Return whether the exact algorithm for $|B| = 1$ gives distance at most $K$
\EndIf
\If{$|C|=2$}
    \State \Return whether the exact algorithm for $|B|=2$ gives distance at most $K$
\EndIf
\If{$K=0$}
    \State \Return NO
\EndIf
\State Let $U\gets x\cdot 2^{K-1}$
\ForAll{$r\in C$}
    \For{$m=\lceil \frac{r}{2}\rceil,\dots,U$}
        \If{\textsc{D}$((C\setminus\{r\})\cup\{m\},K-1)$ = YES}
            \State \Return YES
        \EndIf
    \EndFor
\EndFor
\ForAll{unordered pairs $\{r,s\}\subseteq C$}
    \For{$p=0,\dots,r+s$}
        \State $q\gets r+s-p$
        \If{$p\leq U$ and $q\leq U$}
            \If{\textsc{D}$((C\setminus\{r,s\})\cup\{p,q\},K-1)$ = YES}
                \State \Return YES
            \EndIf
        \EndIf
    \EndFor
\EndFor
\State \Return NO
\end{algorithmic}
\end{algorithm}

The exact algorithm for constant $|B|$ is given in Algorithm~\ref{alg:exact_constant}. It performs binary search on the interval $\{\max\{t^*,\lceil \frac{c}{2}\rceil\},\dots,t^*+c-1\}$.

\begin{algorithm}[t]
\caption{}
\label{alg:exact_constant}
\begin{algorithmic}[1]
\State $c\gets |B|$
\State Sort the targets so that $B=\{z_1,z_2,\dots,z_c\}$ and $z_1<z_2<\dots<z_c$
\State $x\gets \max A$
\State $t^*\gets \mathit{TD}(A,\{z_c\})$
\State $L\gets \max\{t^*,\lceil \frac{c}{2}\rceil\}$
\State $R\gets t^*+c-1$
\State $SOL\gets 0$
\While{$L \leq R$}
    \State $K\gets \lfloor \frac{L +R}{2}\rfloor$
    \If{\textsc{D}$(B,K)$ = YES}
        \State $R\gets K - 1$
        \State $SOL\gets K$
    \Else
        \State $L\gets K+1$
    \EndIf
\EndWhile
\State \Return $SOL$
\end{algorithmic}
\end{algorithm}

\begin{theorem}\label{thm:constant-exact}
Algorithm~\ref{alg:exact_constant} computes $\mathit{TD}(A,B)$ exactly in time $O(|A|^2+K_{\max}c^2U^{c+1})$,
where $c=|B|$, $t^*=\mathit{TD}(A,\{\max B\})$, $K_{\max}=t^*+c-1$, $x=\max A$, and $U=x2^{K_{\max}}$.
\end{theorem}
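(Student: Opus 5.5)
The proof has two parts: correctness of the decision procedure \textsc{D}$(C,K)$, and correctness of the binary search built on it. The running time is a count of memoized states. For correctness of \textsc{D}, I will prove by induction on $K$, and for fixed $K$ on $|C|$, that \textsc{D}$(C,K)$ returns YES if and only if there is a translocation sequence over $A$ of length at most $K$ whose final set contains $C$. Elements of $C$ lying in $A$ are discarded when $C$ is stored, which is harmless since they are available for free. The base cases $C=\emptyset$, $|C|=1$ and $|C|=2$ follow from the convention and from Theorems~\ref{thm:b1} and~\ref{thm:two-exact}. For $|C|\geq 3$ and $K=0$ the answer is NO, because $C\not\subseteq A$ after removing the elements of $A$.

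Soundness of \textsc{D} (YES implies feasible) goes by cases on the branch that returned YES.
\begin{itemize}
\item In the Case~1 branch, a YES for $((C\setminus\{r\})\cup\{m\},K-1)$ gives a prefix of at most $K-1$ steps containing $m\geq\lceil r/2\rceil$. Since $r\leq 2m$, one translocation $(m,m)\vdash(r,2m-r)$ from Corollary~\ref{cor:aa-interval} finishes.
\item In the Case~2 branch, the prefix contains $p$ and $q$ with $p+q=r+s$. Lemma~\ref{lemma:interval} turns $(p,q)$ into $(r,s)$ in one step.
\end{itemize}
If the prefix already contains $r$, the extra step is wasted but the bound $\leq K$ still holds.

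Completeness of \textsc{D} (feasible implies YES) starts from a shortest sequence of length $K'\leq K$ producing $C$.
\begin{itemize}
\item If $K'<K$, I need monotonicity in $K$: any sequence of length $K'$ extends to length $K$ by repeating a step. Alternatively I take a sequence of length exactly $K$ producing $C$ and argue about its last step directly.
\item Its last step, say $(p,q)\vdash(u,v)$, has inputs in $A_{K-1}$. By Lemma~\ref{lemma:doubling} every element of $A_{K-1}$ is at most $U=x2^{K-1}$.
\item If the last step yields no new element of $C$, then $C\subseteq A_{K-1}$. Taking any $r\in C$ with $m=r$, and noting $r\leq U$, the state $((C\setminus\{r\})\cup\{r\},K-1)$ is YES by induction.
\item If it yields exactly one new element $r$ of $C$, then $m=\max\{p,q\}\in[\lceil r/2\rceil,U]$, and $(C\setminus\{r\})\cup\{m\}\subseteq A_{K-1}$. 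Induction makes the Case~1 call YES.
\item If it yields two, $\{u,v\}=\{r,s\}$ with $p,q\leq U$ and $p+q=r+s$, and induction gives YES for the Case~2 call.
\end{itemize}
One subtlety is that the new set may have size up to $|C|$ rather than smaller, so the induction must be on $K$ alone; this works because every recursive call decreases $K$. Note also that $p=q$ may merge helpers. A second subtlety is that the algorithm's guard is $K<0$ while the $K=0$ case is handled after the small-$|C|$ cases, which is consistent.

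Correctness of Algorithm~\ref{alg:exact_constant} follows from the bounds $\max\{t^*,\lceil c/2\rceil\}\leq\mathit{TD}(A,B)\leq t^*+c-1$, proved earlier through Theorem~\ref{thm:additive-Bminus1} and the two-outputs argument, together with monotonicity of $D(B,K)$ in $K$. Standard binary search then returns the least $K$ with answer YES, which is $\mathit{TD}(A,B)$.

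For the running time, every set $C$ arising in recursion has size at most $c$ and entries at most $\max\{U,\max B\}$. Since $z_c\leq x2^{t^*}\leq U$, these entries are at most $U$ as well. The number of memo states $(C,K)$ is therefore $O(K_{\max}U^c)$, or the corresponding number of multisets. Each state does $O(cU)$ work in the Case~1 loop and $O(c^2U)$ work in the Case~2 loop, up to $r+s\leq 2U$, in both cases excluding recursive calls. This gives $O(K_{\max}c^2U^{c+1})$ overall. The base cases cost $O(|A|)$ each via Theorem~\ref{thm:two-exact}. To stay within the stated bound I would precompute, in $O(|A|^2)$ time, the set of all pairwise sums of $A$ together with $\max A$, so that each base-case test is $O(\log U)$ or $O(1)$. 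The binary search contributes only a logarithmic number of top-level calls, and these share the memo table. I expect the main obstacle to be the completeness argument: getting the monotonicity in $K$ right, and confirming that bounding helper values by $U$ via Lemma~\ref{lemma:doubling} loses no solutions.
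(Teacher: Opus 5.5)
Your proposal is correct and follows the paper's proof essentially step for step. It uses the same induction on $K$ for \textsc{D}, with the last-step split into the one-new-target case (replacing the final step by $(m,m)\vdash(r,2m-r)$) and the two-new-targets case (helpers $p+q=r+s$), and binary search over $\{\max\{t^*,\lceil c/2\rceil\},\dots,t^*+c-1\}$ by monotonicity. It also uses the same memoized count of $O(K_{\max}U^c)$ states with $O(c^2U)$ work each, plus $O(|A|^2)$ pair-sum preprocessing. Your explicit handling of a last step that produces no new target (taking $m=r$), and of sequences shorter than $K$, fills in a point the paper covers only by appeal to minimality.
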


\begin{proof}
We first prove the correctness of the decision procedure $\textsc{D}(C,K)$. The proof is by induction on $K$. If $C=\emptyset$, then the answer is YES. If $|C|=1$, correctness follows from the exact singleton formula in Theorem~\ref{thm:b1}. If $|C|=2$, correctness follows from Theorem~\ref{thm:two-exact}. It remains to consider the case $|C|\geq 3$.

Assume first that $\textsc{D}(C,K)$ should return YES, and consider a feasible sequence of length at most $K$ producing all values in $C$. Since one translocation has two outputs, the last operation produces either exactly one value from $C$ or exactly two values from $C$.

Suppose first that the last operation produces exactly one value $r\in C$. Let $p$ and $q$ be the two inputs of this last operation, and let $m=\max\{p,q\}$. Since one output is $r$, we have $p+q\geq r$, and therefore $2m\geq r$. The value $m$ is available before the last step. Once $m$ is available it may be reused, so the last step can be replaced by the translocation $(m,m)\vdash(r,2m-r)$. Hence the prefix of the sequence produces all values in $(C\setminus\{r\})\cup\{m\}$ in at most $K-1$ steps. This is exactly Case~1 of the decision procedure.

Suppose now that the last operation produces exactly two values $r,s\in C$, with $r\neq s$. If its inputs are $p$ and $q$, then the sum-preservation property gives $p+q=r+s$. Hence the prefix of the sequence produces all values in $(C\setminus\{r,s\})\cup\{p,q\}$ in at most $K-1$ steps. This is exactly Case~2 of the decision procedure.

Conversely, suppose that one of the recursive tests used by $\textsc{D}(C,K)$ returns YES. In Case~1, the induction hypothesis gives a sequence of length at most $K-1$ producing all values in $(C\setminus\{r\})\cup\{m\}$, and since $2m\geq r$, one additional translocation from $(m,m)$ produces $r$. Thus all values in $C$ are produced in at most $K$ steps. In Case~2, the induction hypothesis gives a sequence of length at most $K-1$ producing all values in $(C\setminus\{r,s\})\cup\{p,q\}$, and since $p+q=r+s$, one additional translocation from $(p,q)$ produces $(r,s)$. Again, all values in $C$ are produced in at most $K$ steps. Therefore, $\textsc{D}(C,K)$ is correct.

We now prove the correctness of Algorithm~\ref{alg:exact_constant}. Let $z_c=\max B$ and $t^*=\mathit{TD}(A,\{z_c\})$. Any sequence producing $B$ must produce $z_c$, so $\mathit{TD}(A,B)\geq t^*$. Also, since one translocation has only two outputs and $A\cap B=\emptyset$, we have $\mathit{TD}(A,B)\geq \lceil \frac{c}{2}\rceil$. Hence $\mathit{TD}(A,B)\geq L$, where $L=\max\{t^*,\lceil \frac{c}{2}\rceil\}$.

On the other hand, by Theorem~\ref{thm:additive-Bminus1}, we have $\mathit{TD}(A,B)\leq t^*+c-1=R$. Thus the optimum belongs to the interval $\{L,L+1,\dots,R\}$.

The function $\textsc{D}(B,K)$ is monotone in $K$: if $B$ can be produced in at most $K$ translocations, then it can also be produced in at most $K'$ translocations for every $K'\geq K$. Therefore, binary search over the interval $\{L,L+1,\dots,R\}$ returns the smallest value $K$ for which $\textsc{D}(B,K)$ is true. By the correctness of $\textsc{D}$, this value is exactly $\mathit{TD}(A,B)$.

We analyze the running time of Algorithm~\ref{alg:exact_constant}. We store $A$ in a hash table, so membership queries in $A$ take expected constant time. We also precompute pair sums of elements of $A$ in $O(|A|^2)$ time, so the exact algorithms for $|B|=1$ and $|B|=2$ can answer the required pair-sum queries in constant time after preprocessing.

Let $K_{\max}=t^*+c-1$ and let $U=x2^{K_{\max}}$. By Lemma~\ref{lemma:doubling}, all values considered by the recursive procedure lie in $\{0,\dots,U\}$.

The procedure $\textsc{D}$ is memoized. A memoized state is a pair $(C,K)$, where $K\leq K_{\max}$ and $C$ is a set of at most $c$ values from $\{0,\dots,U\}$. Hence the number of possible states is at most $O(K_{\max}U^c)$, for fixed $c$.

For a state with $|C|\geq 3$, Case~1 enumerates at most $O(cU)$ choices, because we choose $r\in C$ and then enumerate $m$. Case~2 enumerates at most $O(c^2U)$ choices, because we choose an unordered pair $\{r,s\}\subseteq C$ and then enumerate $p$, while $q$ is determined by $q=r+s-p$. Thus each state creates at most $O(c^2U)$ recursive tests.

Therefore the total running time is $O(|A|^2+K_{\max}c^2U^{c+1})$.

If $z_c>x$, then $t^*=\lceil\log_2(z_c/x)\rceil$ and $x2^{t^*-1}<z_c\leq x2^{t^*}$. Since $K_{\max}=t^*+c-1$, we have $U=x2^{K_{\max}}=x2^{t^*+c-1}<2^c z_c$. Hence in this case the running time is $O(|A|^2+(t^*+c)c^2(2^c z_c)^{c+1})$.

If $z_c<x$, then $t^*=1$ and $K_{\max}=c$. Hence $U=x2^c$, and the running time is $O(|A|^2+c^3(x2^c)^{c+1})$.

\qed
\end{proof}

\begin{corollary}
The unary translocation distance problem is fixed-parameter tractable with
respect to the combined parameter $(|B|,\max\{\max A,\max B\})$.
\end{corollary}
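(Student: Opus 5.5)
The plan is to read the corollary directly off the running time in Theorem~\ref{thm:constant-exact}. Algorithm~\ref{alg:exact_constant} runs in time $O(|A|^2+K_{\max}c^2U^{c+1})$, where $c=|B|$, $K_{\max}=t^*+c-1$ and $U=x2^{K_{\max}}$. It therefore suffices to show that $K_{\max}c^2U^{c+1}$ is bounded by a computable function of $c$ and $V:=\max\{\max A,\max B\}$. The remaining factor $|A|^2$ is polynomial in the input size, so the total time has the form $f(c,V)\cdot\mathrm{poly}(|A|,|B|)$, which is exactly fixed-parameter tractability for the combined parameter $(|B|,V)$.

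The first step is to bound $t^*=\mathit{TD}(A,\{\max B\})$ using the explicit formula of Theorem~\ref{thm:b1}.
\begin{itemize}
\item If $z_c<x$, then $t^*=1$.
\item If $z_c>x$, then $t^*=\lceil\log_2(z_c/x)\rceil\le\lceil\log_2 V\rceil$, because $x\ge 1$.
\end{itemize}
The degenerate case $A=\{0\}$ makes $x=0$. It has to be treated separately: no positive value is ever producible, so the answer is either trivially infeasible (infinite) or constant, decidable in $O(|A|+|B|)$ time.

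In both cases $K_{\max}\le \log_2 V+c+1$. Then
\[
U=x2^{K_{\max}}\le V\cdot 2^{\log_2 V+c+1}=2^{c+1}V^2 .
\]
Equivalently, one can quote the two explicit bounds at the end of the proof of Theorem~\ref{thm:constant-exact}: $U<2^c z_c\le 2^cV$ when $z_c>x$, and $U=x2^c\le 2^cV$ when $z_c<x$. Either way $U\le 2^cV$. Substituting gives the bound
\[
O\bigl(|A|^2+(\log V+c)\,c^2\,(2^cV)^{c+1}\bigr),
\]
which is $f(c,V)+O(|A|^2)$ with $f$ computable. This establishes the corollary.

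There is no real obstacle here. The only care points are the following.
\begin{itemize}
\item Ensure that $x\ge1$, or handle $x=0$ separately, so that $t^*$ is finite.
\item Note that the memoized state space size $O(K_{\max}U^c)$ depends only on the parameters.
\item Note that the preprocessing terms $|A|^2$ and sorting $B$ are polynomial in the input size.
\end{itemize}
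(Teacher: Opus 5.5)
Your proposal is correct and follows the paper's route. The paper states this corollary immediately after Theorem~\ref{thm:constant-exact} and reads it off the two explicit bounds at the end of that proof: $U<2^c z_c$ when $z_c>x$, and $U=x2^c$ when $z_c<x$. This gives time $O(|A|^2+(t^*+c)c^2(2^cV)^{c+1})$ with $t^*\le\lceil\log_2 V\rceil$, which is exactly your bound. Your separate treatment of the degenerate case $x=0$ is a small extra care point that the paper leaves implicit.
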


\section{ILP Formulation}
\label{sec:ilp_noncontiguous}

In this section we give an integer linear programming formulation for computing the unary translocation distance. Starting from an upper bound $H$ on the number of operations, obtained for instance from our $2$-approximation algorithm, we build a time-expanded model over $H$ steps. The ILP simultaneously selects which translocation rule is applied at each step and tracks which lengths become available. The objective value then coincides with the exact unary translocation distance.

\begin{observation}\label{obs:split_sum}
Let $x,y,u,v\in\mathbb{N}$. There exists a unary translocation $(x,y)\vdash (u,v)$ if and only if $x+y=u+v$.
\end{observation}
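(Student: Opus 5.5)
The observation has two directions, and I will prove them separately using the definition of a unary translocation together with Lemma~\ref{lemma:interval}.

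\emph{Necessity.} Suppose $(x,y)\vdash_{i,j}(u,v)$ for some admissible parameters $i,j$. By the definition, $u=x-i+j$ and $v=y-j+i$. Adding these gives $u+v=x+y$. This is the sum-preservation identity already noted right after the definition.

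\emph{Sufficiency.} Suppose now that $x+y=u+v$ with $u,v\in\mathbb{N}$. Then $0\le u\le u+v=x+y$. I apply Lemma~\ref{lemma:interval} with $p=x$, $q=y$ and $t=u$. It yields $i\in[0,x]$ and $j\in[0,y]$ such that $(x,y)\vdash_{i,j}(u,\,x+y-u)$. Since $x+y-u=v$, this is exactly a translocation $(x,y)\vdash(u,v)$.

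The only point requiring care is checking the hypothesis $0\le t\le p+q$ of Lemma~\ref{lemma:interval}. It holds because $u$ and $v$ are nonnegative, so $u$ cannot exceed the total $x+y$. No other obstacle arises; the explicit choice of $(i,j)$ is the one already given in the proof of that lemma.
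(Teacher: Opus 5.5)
Your proof is correct and follows the reasoning the paper relies on. The paper states this observation without a separate proof: necessity comes from the sum-preservation identity noted right after the definition, and sufficiency from Lemma~\ref{lemma:interval} applied with $t=u$, whose hypothesis $0\le u\le x+y$ you correctly derive from $v\ge 0$.
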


Given an instance $(A,B)$, let $x=\max A$. By Lemma~\ref{lemma:doubling}, no length larger than $x\cdot 2^H$ can be obtained within $H$ translocations. Therefore, we take $L=\{0,\dots,x\cdot 2^H\}$. Let $\mathcal{O}$ be the set of all admissible unary translocation rules over $L$. Each operation $o\in\mathcal{O}$ is a quadruple $o=(x_o,y_o,x'_o,y'_o)$ such that $x_o,y_o,x'_o,y'_o\in L$ and $x_o+y_o=x'_o+y'_o$. By Observation~\ref{obs:split_sum}, this condition is equivalent to the existence of a unary translocation $(x_o,y_o)\vdash(x'_o,y'_o)$.

We introduce binary variables $w_{\ell,t}$ indicating whether length $\ell$ is available after at most $t$ operations, and binary variables $u_{o,t}$ indicating whether operation $o$ is applied at step $t$. We also use binary variables $z_t$ for step usage and an integer variable $T$ for the number of operations. The exact distance is the optimum value of the following ILP.

\begin{align}
  \min\ & T \label{obj:translocation_T} \\[0.5em]
  \text{s.t.}\quad
  & w_{\ell,0} = 
    \begin{cases}
      1 & \text{if } \ell \in A,\\
      0 & \text{if } \ell \notin A,
    \end{cases}
    && \forall \ell \in L, \label{cons:init}\\[0.7em]
  & w_{\ell,t} \ge w_{\ell,t-1} 
    && \forall \ell \in L,\ \forall t = 1,\dots,H, \label{cons:mono}\\[0.5em]
  & u_{o,t} \le w_{x_o,t-1}, \quad
    u_{o,t} \le w_{y_o,t-1}
    && \forall o \in \mathcal{O},\ \forall t = 1,\dots,H, \label{cons:inputs}\\[0.5em]
  & w_{x'_o,t} \ge u_{o,t}, \quad
    w_{y'_o,t} \ge u_{o,t}
    && \forall o \in \mathcal{O},\ \forall t = 1,\dots,H, \label{cons:outputs}\\[0.5em]
  & w_{\ell,t} \le w_{\ell,t-1}
    + \displaystyle\sum_{o\in\mathcal{O}:\ \ell\in\{x'_o,y'_o\}} u_{o,t}
    && \forall \ell\in L,\ \forall t=1,\dots,H, \label{cons:no_free}\\[0.5em]
  & \displaystyle\sum_{o \in \mathcal{O}} u_{o,t} \le 1
    && \forall t = 1,\dots,H, \label{cons:oneop}\\[0.5em]
  & \displaystyle\sum_{o \in \mathcal{O}} u_{o,t} \le z_t,
    \quad
    z_t \le \displaystyle\sum_{o \in \mathcal{O}} u_{o,t}
    && \forall t = 1,\dots,H, \label{cons:z_link}\\[0.5em]
  & z_t \ge z_{t+1}
    && \forall t = 1,\dots,H-1, \label{cons:prefix}\\[0.5em]
  & T = \displaystyle\sum_{t=1}^H z_t, \label{cons:T_def}\\[0.5em]
  & w_{\ell,H} = 1
    && \forall \ell \in B, \label{cons:targets}\\[0.5em]
  & w_{\ell,t} \in \{0,1\}
    && \forall \ell \in L,\ \forall t = 0,\dots,H, \label{cons:dom_w}\\
  & u_{o,t} \in \{0,1\}
    && \forall o \in \mathcal{O},\ \forall t = 1,\dots,H, \label{cons:dom_u}\\
  & z_t \in \{0,1\}
    && \forall t = 1,\dots,H, \label{cons:dom_z}\\
  & T \in \mathbb{Z},\quad 0 \le T \le H.
    && \label{cons:dom_T}
\end{align}

Constraint~\eqref{cons:init} initializes availability to the input set $A$, and constraint~\eqref{cons:mono} enforces that lengths remain available once they have been produced. Constraints~\eqref{cons:inputs} guarantee that an operation can be used at step $t$ only if both its input lengths are already available at time $t-1$. Constraints~\eqref{cons:outputs} state that if an operation is used, then its output lengths become available at time $t$.

The additional constraint~\eqref{cons:no_free} is essential for correctness. It prevents a length from becoming available unless it was already available at the previous time step or is produced by the operation selected at the current step. Without this constraint, the variables $w_{\ell,t}$ could be set to $1$ without any operation actually producing $\ell$.

Constraint~\eqref{cons:oneop} enforces a sequential model with at most one operation per step. Constraints~\eqref{cons:z_link} define $z_t$ as the indicator of whether step $t$ is used. Constraint~\eqref{cons:prefix} removes symmetries by enforcing that used steps form a prefix. Constraint~\eqref{cons:T_def} defines $T$ as the number of used operations. Finally, constraint~\eqref{cons:targets} requires that all target lengths are available by time $H$.

Since $H$ is chosen as the length of a feasible solution, the optimal value of the ILP equals the exact unary translocation distance.

Let $n=|L|$. In the explicit quadruple formulation, an operation $o=(x_o,y_o,x'_o,y'_o)$ is determined by three choices: $x_o$, $y_o$, and $x'_o$, since the fourth value is forced by $y'_o=x_o+y_o-x'_o$. Hence $|\mathcal{O}|=O(n^3)$.

The number of variables is as follows. There are $(H+1)n$ availability variables $w_{\ell,t}$, $H|\mathcal{O}|$ operation-selection variables $u_{o,t}$, $H$ step-usage variables $z_t$, and one variable $T$. Therefore, the number of variables is $O(Hn+H|\mathcal{O}|)=O(Hn^3)$.

The number of constraints is also dominated by the operation-selection constraints. Constraints~\eqref{cons:init}, \eqref{cons:mono}, \eqref{cons:no_free}, and \eqref{cons:targets} contribute $O(Hn+|B|)$ constraints. Constraints~\eqref{cons:inputs} and~\eqref{cons:outputs} contribute $O(H|\mathcal{O}|)$ constraints. The remaining constraints contribute only $O(H)$ constraints. Therefore, the number of constraints is $O(Hn+H|\mathcal{O}|)=O(Hn^3)$.

Thus, although the model is exact, it is pseudo-polynomial and can become large even for small values of $H$. 

\subsection{A more compact pair-sum formulation}

The explicit formulation above enumerates complete quadruples $(x,y,x',y')$. This is not necessary. Since a unary translocation is completely characterized by preserving the sum, we may instead choose an input pair and an output pair with the same sum.

For each integer $s$, let $\mathcal{P}_s$ be the set of unordered pairs of lengths in $L$ with sum $s$, namely $\mathcal{P}_s=\{\{a,b\}\mid a,b\in L,\ a\le b,\ a+b=s\}$. Let $\mathcal{P}=\displaystyle\bigcup_s\mathcal{P}_s$. Then $|\mathcal{P}|=O(n^2)$.

In the compact formulation, for every step $t$ we introduce binary variables $a_{p,t}$ indicating that pair $p$ is chosen as the input pair, binary variables $b_{p,t}$ indicating that pair $p$ is chosen as the output pair, and binary variables $q_{s,t}$ indicating that the common sum at step $t$ is $s$. The constraints are:

\begin{align}
  & \displaystyle\sum_s q_{s,t}=z_t
    && \forall t=1,\dots,H, \label{cons:compact_sum}\\
  & \displaystyle\sum_{p\in\mathcal{P}_s} a_{p,t}=q_{s,t}
    && \forall s,\ \forall t=1,\dots,H, \label{cons:compact_input}\\
  & \displaystyle\sum_{p\in\mathcal{P}_s} b_{p,t}=q_{s,t}
    && \forall s,\ \forall t=1,\dots,H. \label{cons:compact_output}
\end{align}

These constraints ensure that, if step $t$ is used, then exactly one input pair and one output pair are chosen, and both pairs have the same sum. The availability and output constraints are then written with respect to the chosen input and output pairs. For example, if $\ell$ belongs to the input pair $p$, then we impose $a_{p,t}\le w_{\ell,t-1}$. If $\ell$ belongs to the output pair $p$, then we impose $w_{\ell,t}\ge b_{p,t}$. The no-free-availability constraint becomes $w_{\ell,t}\le w_{\ell,t-1}+\displaystyle\sum_{p\in\mathcal{P}:\ \ell\in p} b_{p,t}$.

This formulation is equivalent to the explicit quadruple formulation, because every input pair and output pair with the same sum define a valid unary translocation. However, it uses only $O(Hn^2)$ pair-selection variables instead of $O(Hn^3)$ operation-selection variables.

\subsection{Integrality gap of the LP relaxation}

We next observe that the LP relaxation of the formulation is not integral. The LP relaxation is obtained by replacing the constraints $w_{\ell,t},u_{o,t},z_t\in\{0,1\}$ with $w_{\ell,t},u_{o,t},z_t\in[0,1]$ and allowing $T$ to be continuous.

\begin{lemma}
The LP relaxation of the ILP has integrality gap at least $\frac{4}{3}$.
\end{lemma}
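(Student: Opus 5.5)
The plan is to exhibit one small instance on which the integer optimum is $2$ and the LP relaxation has a feasible fractional solution of value $\frac{3}{2}$. That gives a ratio of $\frac{2}{3/2}=\frac{4}{3}$. The weakness to exploit is the following. Constraint~\eqref{cons:no_free} lets the availability of a length accumulate additively over several time steps. Constraint~\eqref{cons:inputs}, however, only bounds each single operation by the (fractional) availability of its inputs. So a half-available helper length can be used twice, at half intensity each time, to make a target fully available.

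\textbf{Instance.} Take $A=\{1\}$, $B=\{4\}$, and horizon $H=3$. Any horizon at least the length of a feasible solution is legitimate: a 3-step solution exists, for instance two doublings followed by a dummy step. With $x=1$, Theorem~\ref{thm:b1} gives $\mathit{TD}(A,B)=\lceil\log_2 4\rceil=2$. Since $H\ge 2$ and the ILP is exact for any $H$ at least the optimum, the integer optimum of the ILP is $2$.

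\textbf{Fractional solution.} Set $z_1=z_2=z_3=\frac12$, so $T=\frac32$ and constraint~\eqref{cons:prefix} holds. The operations are chosen as follows:
\begin{itemize}
\item At step $1$, put $u_{o_1,1}=\frac12$ for $o_1=(1,1,2,0)$.
\item At steps $2$ and $3$, put $u_{o_2,2}=u_{o_2,3}=\frac12$ for $o_2=(2,2,4,0)$.
\item All other $u$ are $0$.
\end{itemize}
Availabilities are $w_{1,t}=1$ for all $t$, and $w_{2,t}=w_{0,t}=\frac12$ for $t\ge1$. For length $4$, set $w_{4,0}=w_{4,1}=0$, $w_{4,2}=\frac12$ and $w_{4,3}=1$. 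All other $w$ are $0$.

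\textbf{Verification.} I will check the constraints in order.
\begin{itemize}
\item Constraints~\eqref{cons:init} and~\eqref{cons:mono} hold immediately.
\item Constraint~\eqref{cons:inputs} holds: $u_{o_1,1}=\frac12\le w_{1,0}=1$, and $u_{o_2,t}=\frac12\le w_{2,t-1}=\frac12$ for $t=2,3$.
\item Constraint~\eqref{cons:outputs} holds because every output of $o_1$ and $o_2$ has availability at least $\frac12$ from the step at which it is produced.
\item For constraint~\eqref{cons:no_free}, the only nontrivial check is $w_{4,3}=1\le w_{4,2}+u_{o_2,3}=\frac12+\frac12$. The checks for $w_{2,1}$ and $w_{0,1}$ are bounded by $u_{o_1,1}=\frac12$. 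Length $0$ is also produced by $o_2$, which only loosens its bound.
\item Constraints~\eqref{cons:oneop} and~\eqref{cons:z_link} hold with equality $\sum_o u_{o,t}=\frac12=z_t$.
\item Constraint~\eqref{cons:targets} gives $w_{4,3}=1$.
\end{itemize}
Hence the LP optimum is at most $\frac32$, and the integrality gap is at least $\frac{2}{3/2}=\frac43$.

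The only real obstacle is finding the construction itself: reusing the same half-weight operation $o_2$ at two consecutive steps, so that the target's availability reaches $1$ while the helper length $2$ stays only half available. The verification is routine. If the reader prefers $H$ to be exactly the value returned by the $2$-approximation, I would note that the same solution works for any $H\ge3$ by setting $z_t=0$ and all new $u$ to $0$ for $t>3$, and by keeping the $w$ values constant after step $3$.
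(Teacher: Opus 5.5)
Your proof is correct. It reaches the $\frac43$ gap by a genuinely different mechanism from the paper's.

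\textbf{The paper's route.} The paper takes $A=\{1,29,2,38,3,47\}$ and $B=\{10,20,30\}$. It uses three operations whose inputs all lie in $A$, each producing two of the three targets. It puts weight $\frac12$ on two of them in layer $1$ and weight $\frac12$ on the third in layer $2$. The gap is therefore a fractional-covering (odd-cycle) effect, and the integer lower bound is just the counting bound $\lceil |B|/2\rceil$.

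\textbf{Your route.} You use a single target and exploit the time expansion. Constraint \eqref{cons:inputs} caps each use of a helper by its availability but never charges for reusing it. Constraint \eqref{cons:no_free} lets the target's availability accumulate across layers. Your integer lower bound comes from the doubling bound of Theorem~\ref{thm:b1} rather than from counting.

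\textbf{What each buys.} The paper's solution has $z_1=1$, $z_2=\frac12$ and is feasible already for $H=\mathrm{OPT}=2$. Hence it works for every admissible $H$, including the value $3$ that Algorithm~\ref{alg:two-approx} returns on that instance. Yours needs a spare layer. In exchange, your mechanism amplifies. Take $m+1$ layers, weight $\frac1m$ on $(1,1)\vdash(2,0)$ at layer $1$, and weight $\frac1m$ on $(2,2)\vdash(4,0)$ at each of layers $2,\dots,m+1$. The LP value is then $1+\frac1m$, so the gap on your instance tends to $2$ as $H$ grows. This shows the relaxation degrades with the horizon, which the paper's example does not reveal.

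\textbf{A correction to your closing remark.} On $A=\{1\}$, $B=\{4\}$, Algorithm~\ref{alg:two-approx} returns $2$, not a value $\ge 3$. With $H=2$ the LP optimum is exactly $2$:
\begin{itemize}
\item only length $1$ is available at time $0$, so $w_{4,1}=0$;
\item then $w_{4,2}=1$ forces $z_2=1$;
\item by \eqref{cons:prefix} this forces $z_1=1$.
\end{itemize}
So your argument relies on the paper's phrasing that $H$ is any upper bound (the $2$-approximation value is given only ``for instance''). That reading is legitimate.

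If one insists that $H$ be the $2$-approximation output, replace $B$ by $\{0,4\}$. The algorithm then returns $3$, the optimum is still $2$, and your fractional solution remains feasible with $w_{0,2}=w_{0,3}=1$. Indeed, $w_{0,2}\le w_{0,1}+u_{o_2,2}=\frac12+\frac12$.
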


\begin{proof}
Consider the instance $A=\{1,29,2,38,3,47\}$ and $B=\{10,20,30\}$. The following three operations are admissible:
$(1,29)\vdash(10,20)$, $(2,38)\vdash(10,30)$, and $(3,47)\vdash(20,30)$.

The integer optimum is equal to $2$. Indeed, since $A\cap B=\emptyset$ and $|B|=3$, every feasible solution needs at least $\lceil \frac{3}{2}\rceil=2$ translocations. On the other hand, two translocations are sufficient, for example $(1,29)\vdash(10,20)$ and $(2,38)\vdash(10,30)$.

We present a fractional LP solution of value $\frac{3}{2}$. At the first time layer, take the first two operations with value $\frac{1}{2}$ each. Thus $z_1=1$. At the second time layer, take the third operation with value $\frac{1}{2}$, so $z_2=\frac{1}{2}$. The fractional production of the targets is sufficient to set $w_{10,2}=w_{20,2}=w_{30,2}=1$, while satisfying the no-free-availability constraints. Therefore the LP relaxation has value at most $\frac{3}{2}$.

Consequently, for this instance, the ratio between the integer optimum and the LP optimum is at least $\frac{2}{\frac{3}{2}}=\frac{4}{3}$. Hence the integrality gap of the relaxation is at least $\frac{4}{3}$.

\qed
\end{proof}

\section{Experiments}\label{sec:experiments}

In this section we present our experimental results. Each test instance consists of two finite disjoint sets of chromosome lengths
$A,B \subset \mathbb{N}$, representing the input and target genomes,
respectively. We generate several families of instances to stress the 2-approximation algorithm and different heuristics.
\par
\textit{Small instances (SMALL).}
This class is used to compare heuristics against more expensive methods (beam
search, simulated annealing). We first
choose integers $n,m$ with $1 \le n,m$ and $n+m \le 50$. We then draw
$n+m$ distinct integers uniformly at random from $\{1,\dots,50\}$, permute
them, and assign the first $n$ values to $A$ and the remaining $m$ values
to $B$. 
\par
\textit{Random baseline (RAND).}
These instances mimic unstructured data at different scales. We first choose a
scale parameter (small/medium/large), which determines the ranges for $|A|$,
$|B|$ and for the maximum length. For example, in the small scale we take
$1 \le |A|,|B| \le 10$ and lengths up to $10^3$, while in the large scale
we allow up to 500 values and lengths up to $10^6$.
For each instance we sample a value $x$ uniformly from a scale-dependent interval and generate $A$ as a random set of distinct integers in $[1,x]$ with $\max A=x$.  Generate $B$ as a random set of distinct integers in $[x+1,x+M]$, for a scale-dependent $M$, so that all elements of $B$ are strictly larger than $\max A$.
\par
\textit{Structured instances (STR).}
To test behaviour on more regular inputs we generate $A$ and $B$ from
different parametric sequence families. We fix a global range $[1,L]$ for
$A$ (with $L=10^5$ in the implementation) and independently choose the
size of each set in the interval $[5,100]$. For $A$ we pick one of the
following families at random:
\begin{enumerate}
  \item \emph{UNIFORM:} distinct values chosen uniformly at random in
        $[1,L]$;
  \item \emph{ARITHMETIC:} an arithmetic progression $a, a+d, \dots$ clipped
        to $[1,L]$;
  \item \emph{GEOMETRIC:} a geometric progression $a, ar, ar^2, \dots$ with
        integer ratio $r\in\{2,\dots,5\}$, scaled to fit in $[1,L]$;
  \item \emph{FIBONACCI:} a scaled and shifted prefix of the Fibonacci
        sequence, again clipped to $[1,L]$.
\end{enumerate}
For $B$ we select a \emph{different} family and generate lengths in an
interval $[M,M+L]$ that starts just above $\max A$.
%
\par
\textit{Two-approx–friendly instances (TA).}
This family is tailored to favour the 2-approximation algorithm.
We first generate $A$ as a random set with $\max A=x$ as in the random
baseline. Then we construct $B$ as an increasing sequence that roughly follows
a noisy doubling pattern: the $i$-th target is drawn uniformly from a small
interval around $x\cdot 2^{i+1}$, truncated to stay in $[2x+1,10^9]$.
This ensures that target lengths grow quickly and are always much larger than
$\max A$, making the doubling strategy close to optimal in many cases and
guaranteeing $A\cap B=\varnothing$.
\par
\textit{Upper-bound–branch instances (UB).}
These instances are designed so that the simple analytical upper bound triggers
its special case where $\max A > \max(B)$. We first generate $A$ as in the
random baseline with maximum value $x$. We then generate $B$ as a random
set of distinct integers in $[1,x-1]$ that is disjoint from $A$. This
guarantees $\max A > \max(B)$ and $A\cap B=\varnothing$, activating the
branch of the upper-bound formula that depends only on $\max A$ and $|B|$.
\par
\textit{Slow-increasing targets (SLOW).}
This class is mildly adversarial for doubling-based heuristics. We first
generate $A$ as in the random baseline with maximum value $x$. We then
construct $B$ as a slowly increasing sequence starting at $x+1$:
$
b_1 = x+1,\quad
b_{i+1} = b_i + \delta_i,\quad \delta_i \in \{1,\dots,5\}.
$
All targets are therefore just above $\max A$ and remain relatively close
to each other, which tends to reduce the effectiveness of pure doubling
strategies.
\par
\textit{Addition-chain–based instances (CHAIN).}
To connect with classical addition-chain constructions we generate instances
from random addition chains. We first choose an integer $N$ (uniformly
in $[50,10^5]$) and compute a random addition chain
$1 = c_1 < c_2 < \dots < c_\ell = N$ by repeatedly adding two existing
elements $c_i + c_j$ chosen uniformly among all sums that remain $\le N$.
We then consider the multiset of differences
$
p_1 = c_1,\quad p_2 = c_2 - c_1,\quad \dots,\quad p_\ell = c_\ell - c_{\ell-1},
$
remove duplicates, and obtain a set $\{p_1,\dots,p_D\}$. Finally we pick a
random split index $t\in\{1,\dots,D-1\}$ and define
$
A = \{p_1,\dots,p_t\}, B = \{p_{t+1},\dots,p_D\}.
$


\paragraph{Evaluation on small instances.}
For the \texttt{SMALL} class we evaluate three algorithms:
\begin{enumerate}
  \item 2-approximation.
  \item Simulated annealing (SA): a stochastic local search over length-$H$ sequences of abstract translocations, where $H$ is the 2-approximation value. The energy heavily penalizes uncovered targets and then minimizes the first hit. We run up to $5\cdot 10^8$ iterations with geometric cooling ($T_0=10$, $T_{\min}=10^{-3}$, $\alpha=0.995$) and report the best solution. If SA does not cover all of $B$ within $H$, we output $-1$.
  \item Beam search: states are reachable sets $S$ under non-contiguous closure, expanded breadth-wise up to depth $H$, keeping the best $K=50$ states by a score combining missing targets and their distance to $S$. If a state covering $B$ is found within $H$, we report its depth; otherwise we output $-1$.
\end{enumerate}

Over the 20 \texttt{SMALL} instances, beam search finds a feasible solution
within depth $H$ in 8 cases (and fails in the remaining 12 cases, reported as $-1$).
On the 8 successful instances, beam search improves on the 2-approximation in
4 cases and matches it in the other 4, with an average gain of 2 translocations
and a maximum gain of 3.
Simulated annealing is more robust: it succeeds on 16 instances (4 failures),
improving on the 2-approximation in 9 cases and matching it in the other 7,
with an average gain of about 1.4 and a maximum gain of 3.
Conditioned on the instances where each heuristic succeeds, the mean distance
decreases from 4.1 to 3.1 for beam search and from 12.6 to 11.8 for SA.
Overall, SA has a higher success rate, while beam search can yield larger
improvements when it succeeds.
\par
\textit{Evaluation on larger instances.}
For the larger classes (RAND, STR, TA, UB, SLOW, CHAIN) the search space grows
quickly with the number and magnitude of the lengths, and SA or beam search do
not consistently improve over the 2-approximation while incurring a
substantial running-time cost. We therefore restrict the comparison to the
2-approximation and the lower/upper bounds of
Theorem~\ref{thm:additive-Bminus1}. Table~\ref{tab:large-results} summarises,
for each class, the average distance returned by the 2-approximation together
with its average gap to the lower and upper bounds. Across all families the lower bound $t$ is rarely informative: the average
gap $\text{2approx} - t$ is large (between roughly 6 and 130 steps per
class, and about $42.6$ on average over all 120 tests), and it is never
tight on our instances. In contrast, the additive upper bound
$t + |B| - 1$ is very accurate in practice: the average slack
$(t + |B| - 1) - \text{2approx}$ is at most a few steps in every class
(about $1.8$ on average over all tests), and the bound is exactly attained
by the 2-approximation on 54 out of 120 instances (45\%), including all
UB and SLOW instances. This shows that, although the lower bound is mainly of
theoretical interest, the additive $|B|-1$ upper bound captures very closely the behaviour of the 2-approximation on larger inputs.

\begin{figure}[t]
  \centering
  \begin{minipage}{0.48\textwidth}
    \centering
    \includegraphics[width=\linewidth]{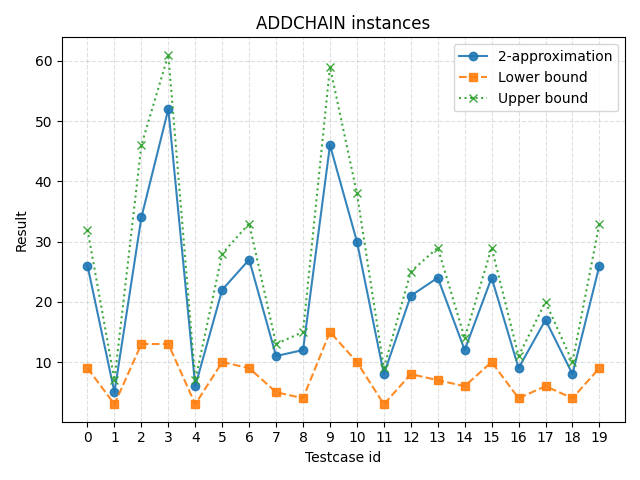}
  \end{minipage}\hfill
  \begin{minipage}{0.48\textwidth}
    \centering
    \includegraphics[width=\linewidth]{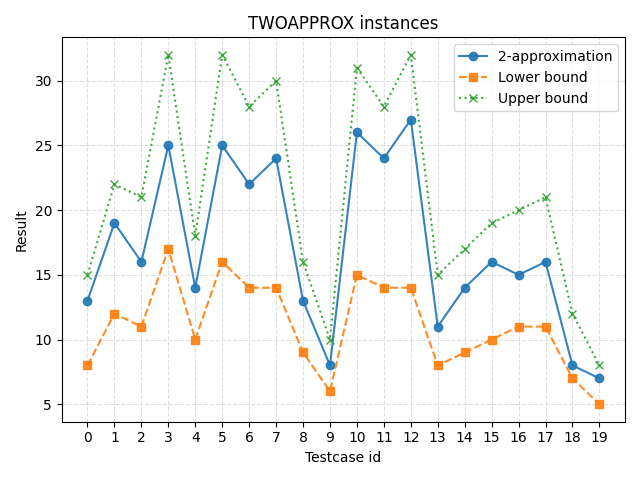}
  \end{minipage}
  \caption{Values of the 2-approximation, the lower bound $t$, and the upper
  bound $t + |B| - 1$ for the ADDCHAIN (left) and TWOAPPROX (right) instances.}
  \label{fig:large-instances}
\end{figure}

\begin{table}[t]
\centering
\begin{tabular}{lrrrr}
\toprule
Class & mean(2-approx) & mean gap$_\text{LB}$ & mean gap$_\text{UB}$ & \#tight UB \\
\midrule
RAND   & 131.8 & 129.6 & 1.15 & 1 \\
STR    &  60.5 &  59.1 & 0.40 & 13 \\
TA     &  17.2 &   6.1 & 4.20 & 0 \\
UB     &  29.4 &  28.4 & 0.00 & 20 \\
SLOW   &  20.1 &  19.1 & 0.00 & 20 \\
CHAIN  &  21.0 &  13.5 & 5.00 & 0 \\
\midrule
All (avg) &  -- &  42.6 & 1.80 & 54 / 120 \\
\bottomrule
\end{tabular}
\caption{Average value of the 2-approximation and its average gaps to the
lower bound $t$ and the upper bound $t + |B| - 1$ from
Theorem~\ref{thm:additive-Bminus1} on the larger instance classes.}
\label{tab:large-results}
\vspace*{-0.7cm}
\end{table}

Figure~\ref{fig:large-instances} details the behaviour of the 2-approximation on the \texttt{CHAIN} and \texttt{TA} families. In both plots, the 2-approximation (solid blue line) tracks the additive upper bound $t + |B| - 1$ (dotted green line) much more closely than the lower bound $t$ (dashed orange line), which remains significantly detached from the true solution cost. For the \texttt{CHAIN} instances, the correlation is particularly striking: while not strictly tight (mean gap $\approx 5.0$), the algorithm's performance mirrors the upper bound's fluctuations almost perfectly, capturing every spike in difficulty. The \texttt{TA} instances display a similar qualitative pattern, where the 2-approximation consistently operates near the upper limit despite a slightly more variable gap. This confirms that for these larger inputs, the structural term $t + |B| - 1$ is a far superior predictor of complexity than the theoretical lower bound $t$. All code, input instances, and detailed numerical results for this section are
publicly available in our GitHub repository~\citep{git}.

\newcommand{\range}[2]{\{#1,\ldots,#2\}}

\section{Parametrized Algorithm in Translocation Distance and $max A$}
\label{sec:fpt_noncontiguous}

In this section, we give an FPT algorithm for the decision version of the unary translocation distance problem, parameterized by the solution value and by $\max A$. Given finite sets $A,B\subseteq \mathbb{N}$ such that $A \cap B = \emptyset$ and an integer $k$, the decision problem asks whether $\mathit{TD}(A,B)\leq k$.

Let $x=\max A$. By Lemma~\ref{lemma:doubling}, every length that can be obtained after at most $k$ translocations is at most $x\cdot 2^k$. Hence, for the decision problem with budget $k$, it is sufficient to consider the finite set of lengths $L_k=\{0,1,\dots,x\cdot 2^k\}$.

Since the elements of $A$ are available from the beginning and can be reused arbitrarily many times, a state only needs to store the set of lengths outside $A$ that have already been produced.

\begin{algorithm}[t]
\caption{\textsc{FPT}$(A,B,k)$}
\label{alg:noncontiguous_fpt}
\begin{algorithmic}[1]
\Require Finite sets $A,B\subseteq \mathbb{N}$ and an integer $k$
\Ensure YES iff $\mathit{TD}(A,B)\leq k$
\State $x\gets \max A$
\State $L\gets \{0,1,\dots,x\cdot 2^k\}$
\If{there exists $b\in B'$ such that $b\notin L$}
    \State \Return NO
\EndIf
\State $\mathcal{R}_0\gets \{\emptyset\}$
\For{$t=0,1,\dots,k$}
    \If{there exists $C\in \mathcal{R}_t$ such that $B'\subseteq C$}
        \State \Return YES
    \EndIf
    \If{$t=k$}
        \State \textbf{continue}
    \EndIf
    \State $\mathcal{R}_{t+1}\gets \emptyset$
    \ForAll{$C\in \mathcal{R}_t$}
        \ForAll{$p\in A\cup C$}
            \ForAll{$q\in A\cup C$}
                \For{$u=0,1,\dots,p+q$}
                    \State $v\gets p+q-u$
                    \If{$u\in L$ and $v\in L$}
                        \State $C'\gets C\cup(\{u,v\}\setminus A)$
                        \State add $C'$ to $\mathcal{R}_{t+1}$
                    \EndIf
                \EndFor
            \EndFor
        \EndFor
    \EndFor
\EndFor
\State \Return NO
\end{algorithmic}
\end{algorithm}

\begin{theorem}\label{thm:noncontiguous_fpt}
Algorithm~\ref{alg:noncontiguous_fpt} decides whether $\mathit{TD}(A,B)\leq k$ in time $(2^k x)^{O(k)}\cdot |I|^{O(1)}$, where $x=\max A$ and $|I|$ denotes the input size. Hence, the unary translocation distance problem is fixed-parameter tractable with respect to the combined parameter $(k,x)$.
\end{theorem}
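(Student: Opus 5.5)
We prove correctness and the running-time bound separately. Here $B'$ denotes $B$ (equal to $B\setminus A$, since $A\cap B=\emptyset$).

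\textbf{Correctness.} The key invariant is the following. For every $t\le k$, $\mathcal{R}_t$ is exactly the family of sets $A_t(S)\setminus A$, taken over all translocation sequences $S$ over $A$ of length exactly $t$ whose outputs all lie in $L$. We prove it by induction on $t$.

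The base case $\mathcal{R}_0=\{\emptyset\}$ is immediate.

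For the inductive step, take a set $C\in\mathcal{R}_t$ and a translocation $(p,q)\vdash(u,v)$ with $p,q\in A\cup C$ and $u+v=p+q$. By Observation~\ref{obs:split_sum}, such a translocation exists for every split $u+v=p+q$. Hence the inner loop over $u$ enumerates exactly the possible next steps, and each produces $C\cup(\{u,v\}\setminus A)$. Conversely, every sequence of length $t+1$ arises in this way from its prefix of length $t$.

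Next we justify restricting all values to $L=L_k$. By Lemma~\ref{lemma:doubling}, any sequence of length at most $k$ only ever produces values at most $x\cdot 2^k$. So discarding outputs outside $L$ loses no sequence of length at most $k$. The same bound justifies the early rejection when some $b\in B'$ exceeds $x2^k$.

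Finally, we handle sequences shorter than $k$. The algorithm checks every layer $t\le k$, so a YES answer is returned exactly when some sequence of length $t\le k$ covers $B'$, i.e.\ exactly when $\mathit{TD}(A,B)\le k$. Checking every layer is enough, and no padding of short sequences is needed.

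\textbf{Running time.} Let $n=|L|=x2^k+1$. After $t$ steps at most $2t$ new values have been added, so every $C\in\mathcal{R}_t$ is a subset of $L$ of size at most $2k$. Hence
\[
|\mathcal{R}_t|\le \sum_{i\le 2k}\binom{n}{i}\le (n+1)^{2k}=(2^kx)^{O(k)}.
\]
Each state is expanded over $|A\cup C|^2\le(|A|+2k)^2$ input pairs, and each pair contributes at most $p+q+1\le 2n+1$ choices of $u$. Storing $\mathcal{R}_{t+1}$ as a hash set of sorted tuples makes every insertion cost $O(k)$ expected time, and the final check $B'\subseteq C$ costs $O(|B|+k)$ per state.

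Multiplying these bounds over $k+1$ layers gives a total of
\[
(k+1)\,(n+1)^{2k}\,(|A|+2k)^2\,(2n+1)\,O(k+|B|)=(2^kx)^{O(k)}\cdot|I|^{O(1)}.
\]
Here the factors $k$ and $2n+1$ are absorbed into $(2^kx)^{O(k)}$. This yields fixed-parameter tractability in $(k,x)$.

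The main subtlety lies in counting states. We must bound them by subsets of size at most $2k$, not by all of $2^L$, since the latter would give a doubly exponential bound. The inductive invariant is otherwise routine.
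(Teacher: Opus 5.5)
Your proof is correct and follows essentially the same route as the paper. States are the sets of produced lengths outside $A$, Lemma~\ref{lemma:doubling} justifies restricting to $L_k=\{0,\dots,x2^k\}$, and the running time comes from bounding the states by subsets of $L_k$ of size at most $2k$, times $(|A|+2k)^2\cdot O(x2^k)$ transitions per state. Your explicit inductive invariant for $\mathcal{R}_t$ and your accounting of the hashing and subset-check costs make rigorous what the paper states more informally.
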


\begin{proof}
We first prove correctness. Since the elements of $A$ are available in arbitrarily many copies, they do not have to be stored in the state. A state $C$ represents exactly the set of lengths outside $A$ that have been produced so far.

The initial state is $\emptyset$, since no length outside $A$ has been produced before any operation. From a state $C$, Algorithm~\ref{alg:noncontiguous_fpt} considers all possible choices of two available input lengths $p,q\in A\cup C$ and all possible splits $u+v=p+q$. By the definition of unary translocation, this enumerates all possible one-step extensions of the current sequence. The new state is $C\cup(\{u,v\}\setminus A)$, because newly produced lengths outside $A$ become available forever, while lengths in $A$ are already available.

By Lemma~\ref{lemma:doubling}, no length larger than $x\cdot 2^k$ is needed in a sequence of length at most $k$. Hence restricting the search to $L_k=\{0,1,\dots,x\cdot 2^k\}$ loses no feasible solution. Thus the algorithm explores exactly all unary translocation sequences of length at most $k$, up to the set of produced lengths outside $A$. It returns YES precisely when some reachable state contains all targets in $B$, which is equivalent to $\mathit{TD}(A,B)\leq k$.

Let $n=|L_k|$. Since $L_k=\{0,1,\dots,x\cdot 2^k\}$, we have $n=x\cdot 2^k+1$. In at most $k$ translocations, at most $2k$ lengths outside $A$ can be produced, because each translocation produces at most two output lengths. Therefore every reachable state $C$ satisfies $|C|\leq 2k$. Moreover, every element stored in $C$ belongs to $L_k$. Hence the number of possible states is at most the number of subsets of $L_k$ of size at most $2k$, namely $\displaystyle\sum_{i=0}^{2k}\binom{n}{i}$. Using the bound $\binom{n}{i}\leq n^i$, we obtain $\displaystyle\sum_{i=0}^{2k}\binom{n}{i}\leq \displaystyle\sum_{i=0}^{2k}n^i\leq (2k+1)n^{2k}$.
Since $n=x\cdot 2^k+1$, this quantity is bounded by $(2^k x)^{O(k)}$.

We now bound the number of transitions generated from one state. From a state $C$, the algorithm chooses two input lengths from $A\cup C$. Since $|C|\leq 2k$, there are at most $(|A|+2k)^2$ ordered choices for the two inputs. Once the two inputs $p$ and $q$ are fixed, a unary translocation is determined by choosing one output $u$ with $0\leq u\leq p+q$, the other output being $v=p+q-u$. By Lemma~\ref{lemma:doubling}, all relevant lengths are at most $x\cdot 2^k$, and therefore it is enough to consider at most $O(x\cdot 2^k)$ possible splits. Thus the number of transitions generated from a single state is at most $O((|A|+2k)^2\cdot x\cdot 2^k)$. Thus, the running time is $(2^k x)^{O(k)}\cdot |I|^{O(1)}$.
\end{proof}

\section{Conclusion and Open Problems}

In this paper, we studied the unary translocation distance problem from both theoretical and computational perspectives. We proved that computing the unary translocation distance is strongly NP-hard, thereby settling an open problem posed in the ISCO 2026 conference version of this work. On the positive side, we gave an exact pseudo-polynomial algorithm for every fixed constant value of $|B|$, extending the previously known exact algorithms for $|B|\leq 2$. For arbitrary target sets, we presented a $2$-approximation algorithm, an additive $(|B|-1)$-approximation algorithm, and showed that the additive algorithm also gives a $3$-approximation. We also derived parameterized algorithms, including algorithms parameterized by the maximum value in the input set together with the optimum distance, and by the maximum value in the target set together with $|B|$. From an optimization point of view, we proposed an integer linear programming formulation that gives an exact model for the problem, analyzed its size, and showed that the LP relaxation has integrality gap at least $\frac{4}{3}$. Finally, our computational experiments compared the $2$-approximation algorithm with beam search and simulated annealing.

There are several directions for future research. A first open problem is whether the $2$-approximation ratio can be improved. Another natural question is whether the exact pseudo-polynomial algorithm for constant $|B|$ can be turned into a polynomial-time algorithm for arbitrary $|B|$. In particular, the first open case is $|B|=3$. It would also be interesting to determine whether the problem is fixed-parameter tractable when parameterized only by the unary translocation distance. From a practical viewpoint, an important direction is to develop ILP formulations with fewer variables and constraints, which could make exact optimization approaches more effective in practice.

\section*{Declarations}

\textbf{Data availability}. The data and code are available at~\citep{git}, including the instances and scripts used in the computational experiments.

\noindent \textbf{Competing interests}. The authors have no relevant financial or non-financial interests to disclose.

\bibliographystyle{spbasic}

\bibliography{bibl}

\end{document}